\theoremstyle{definition}
\newtheorem{theorem}{Theorem}[section]
\newtheorem{corollary}{Corollary}[section]
\newenvironment{proofof}[1]{\noindent {\em Proof of #1.  }}{\hfill$\Box$}
\newtheorem{definition}{Definition}[section]
\newtheorem{lemma}{Lemma}[section]
\newtheorem{claim}{Claim}[section]
\newtheorem{observation}{Observation}[section]
\def \Z {{\mathbb Z}}
\newcommand{\ignore}[1]{}
\newcommand{\cupdot}{\mathbin{\mathaccent\cdot\cup}}
\newcommand{\etal}{{\em et al.\ }}
\title{Optimal Distributed Coloring Algorithms for Planar Graphs in the LOCAL model}
\author{
Shiri Chechik
\thanks{Tel Aviv University, Israel.
E-mail: {\tt shiri.chechik@gmail.com}.
}
\and
Doron Mukhtar
\thanks{Tel Aviv University, Israel.
E-mail: {\tt doron.muk@gmail.com}.
}
}
\begin{document}

\begin{titlepage}
%\pagenumbering{roman}
\maketitle
\thispagestyle{empty}

\begin{abstract}
In this paper, we consider distributed coloring for planar graphs with a small number of colors.
We present an optimal (up to a constant factor) $O(\log{n})$ time algorithm for 6-coloring planar graphs.
Our algorithm is based on a novel technique that in a nutshell detects small structures that
can be easily colored given a proper coloring of the rest of the vertices and
removes them from the graph until the graph contains a small enough number of edges.
We believe this technique might be of independent interest.

In addition, we present a lower bound for 4-coloring planar graphs that essentially shows that any algorithm (deterministic or randomized) for $4$-coloring planar graphs requires $\Omega(n)$ rounds.
We therefore completely resolve the problems of 4-coloring and 6-coloring for planar graphs in the LOCAL model.

\end{abstract}
%\end{titlepage}
\end{titlepage}

\section{Introduction}

Given a graph $G = (V,E)$ and a positive integer $k$, the $k$-coloring problem asks for an assignment of colors (integers from $\{1,...,k\}$) to the vertices of $G$ such that no two adjacent vertices in $G$ are assigned the same color. We study this problem in the distributed LOCAL model of computation. In this model, a communication network is represented by a graph where each vertex corresponds to a processor and each edge to a communication line. The vertices communicate by sending messages over these edges in a synchronous manner, that is, they communicate in discrete rounds. In each such round, each vertex receives the messages that were sent to it in the previous round, preforms some local computation and sends messages to some of its neighbors (where there is no limit on the size of the messages and on the local computation time).
The running time of an algorithm in this model is equal to the number of communication rounds that the algorithm requires in the worst case.
The CONGEST model is very similar to the LOCAL model but the size of the messages is bounded (usually by some small poly-log).
Our algorithms can also be tweaked to work in the CONGEST model.

Vertex coloring is one of the most fundamental and well-studied problems in the area of distributed computing with a long history dating back to the 80's. Algorithms for vertex coloring have numerous applications in the design of other distributed algorithms, scheduling in networks and many other problems.
Perhaps the most common form of distributed coloring is the $(\Delta+1)$-coloring problem\footnote{As usual, $\Delta$ is the maximum degree in the graph}.
In the centralized setting, computing $(\Delta+1)$-coloring can be easily done in linear time, by simply iterating over the vertices in an arbitrary order and coloring the current vertex with a color not yet used by one of its neighbors (it should be noted that such a color must exist as the maximum degree is $\Delta$).
Computing $(\Delta+1)$-coloring in the distributed setting is much trickier.
In the deterministic approach, several algorithms with running
time of $O(f(\Delta) + \log^*n)$ have been developed \cite{BarenboimEK14,Linial90,Goldberg87,GoPl87} culminating with sublinear in $\Delta$ time algorithms (see recent breakthrough results \cite{Barenboim15,FrHeKo15})\footnote{As usual, $n$ (respectively, $m$) is the number of vertices (resp., edges) in the graph.}.
The $\log^*n$ term is necessary as Linial \cite{Linial90} showed that even 3-coloring a ring requires $\Omega(\log^*n)$ round.
The situation for the randomized algorithms is much more optimistic.
The first randomized $(\Delta+1)$-coloring can be tracked back to the $O(\log{n})$ MIS \cite{Luby85,ABI86}.
This trigged a series of works culminating with the recent breakthroughs of
Harris, Schneider, and Su \cite{HaScSu16} who presented a randomized $(\Delta+1)$-coloring with $O(\sqrt{\log{\Delta}} + 2^{\sqrt{\log\log{n}}})$ time, and Chang, Li and Pettie \cite{Pet15} who presented a randomized $(\Delta+1)$-coloring with $O(\log^*n + \text{Det}_d(\text{poly} \log n))$ time (where $\text{Det}_d(n')$ is the deterministic complexity of (deg+1)-list coloring on $n'$-vertex graphs).

Obtaining coloring with fewer colors is a desirable property that is crucial to the efficiency of many applications that use coloring.
All the above mentioned algorithms use $(\Delta+1)$-coloring even if $\Delta$ is very large and even if it is possible to use fewer colors.
Much less is known for general graphs when we restrict the algorithm to use less than $\Delta$ colors.

In this paper, we focus on distributed coloring planar graphs with a small number of colors (independent of $\Delta$).
In the sequential centralized setting it is well known that a planar graph can be colored using 4 colors in polynomial time (e.g. \cite{appel1989}), however no near linear time algorithm is known for this problem.
If we allow 5 colors then linear time algorithms exist (see e.g. \cite{Chiba81}).

Coloring planar graphs was also studied in the PRAM model, where the computation is partitioned among polynomial number of processors and the processors can communicate through a shared memory.
Goldberg \etal \cite{Goldberg87} presented a deterministic $O(\log{n})$ time algorithm for $7$-coloring planar graphs in the PRAM model of computation.
Although the use of PRAM model, this algorithm can be easily converted to an algorithm that runs in a distributed model of computation.
However, this approach fails when restricting the algorithm to use less than 7 colors, as it strongly relies on the fact that at least a constant fraction of the vertices are of degree at most 6 in any planar graph.
%The idea of the algorithm is very simple and it consists of 2 phases.
%In the first phase in each round all nodes of degree at most 6 are being removed. As the average degree in any planar graph is less than 6,
%the number of nodes of degree at most 6 is a constant fraction of the number of nodes left.
%It follows that the number of rounds in the first phase is $O(\log{n})$.
%Let $S_i$ be the set of nodes that were removed in iteration $i$.
%In the second phase, the algorithm go over the nodes in a reverse order of their removal and color them.
%Where each time the algorithm colors the set $S_i$, only nodes that belong to $S_j$ for some $j > i$ were already colored.
%Since the degree of the nodes in the graph induced on the vertices $\cup_{j\geq i}{S_i}$ is at most 6, it is always possible to find a free color.
%This simple approach leads to a $O(\log{n})$ time algorithm. However, this approach fails when restricting the algorithm to use less than 7 colors.
Barenboim and Elkin \cite{Barenboim08} generalized this algorithm to graphs of bounded arboricity and devised a deterministic algorithm for $\left(\lfloor (2 + \varepsilon) \cdot a \rfloor + 1\right)$-coloring graphs of arboricity at most $a$ that has running time of $O(a\log{n})$ (where the parameter $\varepsilon$ is an arbitrary small positive constant).

Goldberg \etal \cite{Goldberg87} also presented a different deterministic algorithm for 5-coloring planar graphs (where the embedding of the graph is given) in the PRAM model. This algorithm finds a $5$-coloring in $O(\log{n}\log^{*}{n})$ time. In contrast to the previous algorithm, this algorithm is based on iterative contraction of vertices, and so requires the maintenance of a virtual graph composing of super nodes where each of them corresponds to a contraction of several vertices in the original graph that might be at distance $\Omega(n)$ from each other. In a distributed model of computation, maintaining such a graph requires $\Omega(n)$ rounds.
Hagerup \etal \cite{Opt5} gave a similar algorithm for $5$-coloring planar graphs that does not require an embedding of the graph to be given as an input.
However, this algorithm is also based on vertex contraction.
Krzysztof  \cite{Krzysztof} gave another $6$-coloring algorithm in the PRAM model. However, it also requires $\Omega(n)$ time in a distributed model of computation.

If we want to use less than $7$ colors in the distributed setting for planar graphs, it seems that no non-trivial (that requires $o(n)$ time) algorithm is known.

%There are also many other algorithms for coloring planar graphs in a PRAM model \cite[e.g.]{Joan, Naor}, but it seems that the only known result for coloring planar graphs in a distributed model of computation is the $7$-coloring algorithm.

A well known result by Linial \cite{Linial90} states that coloring trees with a constant number of colors requires time $\Omega(\log n)$. Since trees are planar graphs, it follows that any algorithm that finds a coloring of planar graphs with a constant number of colors requires $\Omega(\log n)$ time.

In this paper we present an optimal (up to constant factor) algorithm for 6-coloring planar graphs (and with less colors for related families). More specifically, we obtain a deterministic $O(\log{n})$ time algorithm for 6-coloring planar graphs, which matches the lower bound of Linial \cite{Linial90}.
Our algorithm is based on a novel technique that removes small structures (that can be later easily colored) from the graph until the graph contains a small number of edges.
We believe that our techniques may be of independent interest.
In addition, we present a lower bound for 4-coloring planar graphs that essentially shows that any algorithm (deterministic or randomized) for $4$-coloring planar graphs requires $\Omega(n)$ rounds.
We therefore completely resolve the problems of 4-coloring and 6-coloring for planar graphs in the LOCAL model.

\section{Preliminaries} \label{sec:preliminaries}
     We denote by $\Z_{\ge}$ the set of all non-negative integers and by $\Z_{>}$ the set of all positive integers. For a graph $H$, we respectively denote by $V(H)$ and $E(H)$ the vertex set and edge set of $H$. The degree of a vertex $v$ in a graph $H$ is denoted by $d(v,H)$. Given a graph $H$ and a set $U \subseteq V(H)$, we denote by $H[U]$ the induced graph on $U$. A simple cycle in a graph $H$ is a sequence $(v_{1},...,v_{k})$ of $k > 2$ distinct vertices from $V(H)$ such that $\{v_{i},v_{i+1}\} \in E(H)$ for all $i \in \{1,...,k-1\}$ and $\{v_{k},v_{1}\} \in E(H)$. Given a simple cycle $C = (v_{1},...,v_{k})$ in a graph $H$, we respectively denote by $V(C)$ and $E(C)$ the vertex set and edge set of $C$ (i.e. $V(C) = \{v_{1},...,v_{k}\}$ and $E(C) = \{\{v_{i},v_{i+1}\} \mid i \in \{1,...,k-1\}\} \cup \{\{v_{k},v_{1}\}\}$). The length of $C$, denoted by $|C|$, is defined to be $|V(C)|$, and its degree in $H$, denoted by $\Delta(C,H)$, is defined to be $\max_{v \in V(C)}{d(v,H)}$.

For a positive integer $k$, a proper $k$-coloring of a graph $H$ is a mapping $\varphi$ from the set of vertices $V(H)$ to the set of colors $\{1, ..., k\}$ such that no two adjacent vertices in $H$ are assigned the same color. A partial proper $k$-coloring of a graph $H$ is a mapping $\varphi$ from the set of vertices $V(H)$ to the set $\{1, ..., k\} \cup \{\perp\}$ such that the restriction of $\varphi$ to the set of colored vertices $U = \{v \in V(H) \mid \varphi(v) \ne \; \perp\}$ is a proper $k$-coloring of $H[U]$.

%In the model we use, each vertex $v \in V$ has a unique identifier $\text{ID}(v)$.
%By abuse of notation, when we refer to a vertex $v$ we sometime refer to the identifier $\text{ID}(v)$ and vice versa.
%\begin{comment}
\medskip

We will need Brooks' Theorem \cite{Brooks41}:

\begin{theorem}[Brooks' Theorem] Every connected graph $H$ with maximum degree $\Delta$ has a proper $\Delta$-coloring unless $H$ is either isomorphic to a complete graph or to a cycle graph of odd length.\end{theorem}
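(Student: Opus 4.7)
The plan is to split on the maximum degree and, in the substantive case $\Delta \ge 3$, to produce an ordering of $V(H)$ along which a greedy coloring succeeds. The small cases $\Delta \in \{0,1,2\}$ are immediate: $\Delta \le 1$ forces $H$ to be a single vertex or $K_2$ (the latter excluded), and $\Delta = 2$ means $H$ is a path or a cycle, whose only obstruction is an odd cycle. From now on assume $\Delta \ge 3$ and that $H$ is neither complete nor an odd cycle.

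A first reduction disposes of the case that $H$ has a cut vertex. Decompose $H$ into its blocks (maximal $2$-connected subgraphs and bridge-edges); each block $B$ satisfies $\Delta(B) \le \Delta$, and every cut vertex has strictly smaller degree inside each of the blocks containing it. One can color each block either directly (if it is $K_2$) or by recursion on a strictly smaller case, and then permute colors within blocks so that all copies of a shared cut vertex receive the same color. This produces a proper $\Delta$-coloring of $H$, so it suffices to treat the case where $H$ is $2$-connected.

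The core step is to find a vertex $v \in V(H)$ together with two non-adjacent neighbors $x, y$ of $v$ such that $H - \{x, y\}$ remains connected. Granted such a triple, perform a breadth-first search of $H - \{x, y\}$ rooted at $v$ and list the vertices of $H - \{x, y\}$ in reverse order of discovery as $v_3, v_4, \ldots, v_n = v$; prepend $v_1 = x$ and $v_2 = y$. Assign color $1$ to both $x$ and $y$, then greedily color $v_3, v_4, \ldots, v_{n-1}$ in order. For $3 \le i \le n-1$ the vertex $v_i$ has its BFS parent later in the ordering, so at most $\Delta - 1$ of its already-colored neighbors can block a color from $\{1, \ldots, \Delta\}$. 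Finally, $v$ has at most $\Delta$ neighbors in total, but $x$ and $y$ share color $1$, so at most $\Delta - 1$ distinct colors appear among $v$'s neighbors and a color remains available for $v$.

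The delicate part, and the main obstacle, is constructing the triple $(v, x, y)$ in every $2$-connected, non-complete graph of maximum degree $\Delta \ge 3$. If $H$ is $3$-connected, choose any two vertices at distance exactly $2$ (they exist because $H$ is not complete); they have a common neighbor $v$ and are non-adjacent, and $H - \{x, y\}$ is connected by $3$-connectivity. If $H$ is only $2$-connected, take a minimum $2$-vertex separator $\{a, b\}$; a short structural analysis, using $\Delta \ge 3$ and the minimality of the separator, shows that one can pick $x = a$, a suitable neighbor $y$ of $a$ in one component of $H - \{a, b\}$, and an appropriate $v$, in such a way that $x, y$ are non-adjacent and $H - \{x, y\}$ stays connected. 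This final case is where genuine care is required, but it is a finite structural argument once the right separator has been fixed.
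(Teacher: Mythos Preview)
The paper does not prove Brooks' Theorem at all: it states the theorem in the preliminaries with a citation to \cite{Brooks41} and then uses it as a black box (inside Lemma~\ref{lem:coloringRemovableCycle}). There is therefore no ``paper's own proof'' to compare your attempt against.

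As for your argument itself, it is the standard Lov\'asz ordering proof and is essentially correct in outline. Two places deserve a bit more care. First, in the block decomposition you invoke ``recursion on a strictly smaller case,'' but a block may itself be a complete graph or an odd cycle; you should note explicitly that such a block $B$ still satisfies $\chi(B)\le\Delta$ (a $K_j$-block forces $j\le\Delta$, since otherwise $B=H$, and an odd cycle needs $3\le\Delta$ colors). Second, you openly leave the $2$-connected, non-$3$-connected case as a sketch; the usual completion is to take a $2$-cut $\{a,b\}$, set $v=a$, and choose $x,y$ to be neighbors of $a$ in two different components of $H-\{a,b\}$ (so automatically non-adjacent), then argue connectivity of $H-\{x,y\}$ via $b$. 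That step is routine but is the only place where your write-up is not yet a proof.
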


We will also need the following well known claim, which is a folklore (see e.g. \cite{BarenboimElkinBook13} for a proof):
\begin{claim} For any integer $g \ge 3$ and any planar graph $H$, if $H$ does not contain any simple cycle $C$ of length $g > |C|$ then $|E(H)| \le \frac{g}{g-2}|V(H)|$.\label{claim:numberOfEdges}
\end{claim}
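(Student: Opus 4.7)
The plan is a standard application of Euler's formula combined with a face-length counting argument. Consider a planar embedding of $H$, and let $V$, $E$, $F$ denote the number of vertices, edges, and faces respectively. Euler's formula gives $V - E + F \ge 2$ when $H$ is connected, and more generally $V - E + F = 1 + c$ where $c \ge 1$ is the number of connected components; in particular $F \ge E - V + 2 - c + 1$, so it suffices to work with the clean inequality $F \ge E - V + 1$. If $H$ contains no cycles at all, then it is a forest with $|E(H)| \le |V(H)| - 1 \le \frac{g}{g-2}|V(H)|$ (using $g \ge 3$), and the claim is trivial, so I can assume $H$ contains at least one cycle.

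Next I would bound the sum of face-boundary lengths. In a 2-connected planar graph the boundary of every face is a simple cycle, and the hypothesis that $H$ has no simple cycle of length less than $g$ forces each such face boundary to have length at least $g$. For general $H$, each face is bounded by a closed walk whose length is at least $g$ as long as the face is incident to a cycle (bridges/trees hanging off a 2-connected block contribute edges traversed twice by the same face, which only makes the sum larger). In any case, summing the lengths of all face boundaries counts each edge at most twice, so
\[
gF \le \sum_{f \in \text{faces}} \mathrm{len}(f) \le 2E.
\]

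Finally I would combine the two inequalities: from $F \le 2E/g$ and $V - E + F \ge 2 - c \ge 1 - |V(H)|$ (any loose constant independent of $E$ suffices), I get $V - E + 2E/g \ge -O(1)$, which rearranges to $E(1 - 2/g) \le V + O(1)$, i.e., $E \le \frac{g}{g-2}\bigl(V + O(1)\bigr)$. The only subtlety to confirm is that the additive constant can be absorbed, which follows from the cleaner form $V - E + F \ge 2$ on each connected component containing a cycle and the trivial treatment of tree components. The main obstacle, if any, is dealing with bridges and disconnectedness cleanly; the cleanest route is to reduce to each 2-edge-connected block separately, where every face boundary is a simple cycle of length $\ge g$, and then sum the resulting per-block bounds.
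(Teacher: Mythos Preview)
The paper does not give its own proof of this claim; it is stated as folklore with a pointer to the Barenboim--Elkin monograph. Your Euler-formula-plus-face-length argument is exactly the standard proof one finds there, so there is nothing substantive to compare.

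Two small points worth tightening if you keep this write-up. First, the sum of face-boundary lengths equals $2|E|$ exactly (a bridge contributes both of its sides to the same face), not merely ``at most'' $2|E|$. Second, you can dispense with the $O(1)$ juggling entirely: handle tree components by $|E|\le |V|-1$, and on each connected component that contains a cycle apply $V-E+F=2$ together with $gF\le 2E$ to get $|E|\le \frac{g}{g-2}(|V|-2)$ directly; summing over components gives the claim with no additive slack. The bridge/cut-vertex worry you raise is legitimate, but in a connected plane graph with at least two faces every face boundary, viewed as a subset of the plane, separates that face from some other face and therefore must contain a simple cycle of $G$, hence has length at least $g$; this is enough to justify $gF\le 2E$ without passing to $2$-edge-connected blocks.
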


The rest of the paper is organized as follows.
In Section \ref{sec:removable} we present the notion of removable cycle that is crucial for our algorithm.
In Section \ref{sec:4-coloring}, as a warmup, we  present an $O(\log{n})$ time algorithm for 4-coloring  of triangle-free planar graphs.
In Section \ref{sec:6-coloring}, we present our main result of $O(\log{n})$ time algorithm for 6-coloring  of planar graphs.
Finally, in Section \ref{sec:lowerBounds} we present our lower bound technique showing that any 4-coloring algorithm of planar graphs requires $\Omega(n)$ time, and any 3-coloring algorithm of outerplanar graphs requires $\Omega(n)$ time.

\section{Removable Cycles}
\label{sec:removable}
\begin{definition}[Removable Cycle] A simple cycle $C$ in a graph $H$ is {\em removable} if $H[V(C)]$ is neither isomorphic to a complete graph nor to a cycle graph of odd length.\end{definition}

The next lemma shows that if $C$ is a removable cycle in a graph $H$ with $\Delta(C,H) \le \Delta$, and $\varphi$ is a partial proper $\Delta$-coloring of $H$ in which $C$ is not colored, then one can $\Delta$-color the vertices of $C$ in such a way that this extended coloring remains a partial proper $\Delta$-coloring of $H$.

\begin{lemma}\label{lem:coloringRemovableCycle} For any graph $H$, any removable cycle $C$ in $H$ with $\Delta(C,H) \le \Delta$ and any partial proper $\Delta$-coloring $\varphi$ of $H$ such that $\varphi(v) = \; \perp$ for all $v \in V(C)$, there exists a partial proper $\Delta$-coloring $\varphi'$ of $H$ such that $\varphi'(u) \neq \; \perp$ for all $u \in V(C)$ and $\varphi'(u) = \varphi(u)$ for all $u \in V(H) \setminus V(C)$.\end{lemma}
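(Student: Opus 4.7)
The plan is to view the required extension as a list coloring of $G := H[V(C)]$ and argue via a Brooks-style degree-choosability statement. For each $v \in V(C)$, let $A(v) \subseteq \{1,\ldots,\Delta\}$ be the set of colors assigned by $\varphi$ to the neighbors of $v$ in $H$ lying outside $V(C)$, and let $L(v) = \{1,\ldots,\Delta\} \setminus A(v)$ be the list of still-available colors at $v$. Every element of $A(v)$ is contributed by some already-colored external neighbor of $v$, so $|A(v)| \le d(v,H) - d(v,G)$, and combining with $d(v,H) \le \Delta(C,H) \le \Delta$ this yields $|L(v)| \ge d(v,G)$. Any proper coloring of $G$ in which every $v$ receives a color from $L(v)$ extends $\varphi$ to the desired $\varphi'$, so the task reduces to producing such an $L$-coloring of $G$.

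Next, I would record two structural facts about $G$. Since $V(G) = V(C)$, $E(G) \supseteq E(C)$, and $|V(C)| \ge 3$, the graph $G$ contains a Hamiltonian cycle and is therefore $2$-connected; in particular $G$ has only one block, namely itself. By the removability of $C$, $G$ is neither isomorphic to a complete graph nor to an odd cycle. These are exactly the hypotheses of the classical Erd\H{o}s--Rubin--Taylor degree-choosability theorem (a list-coloring strengthening of Brooks' theorem): a connected graph none of whose blocks is a complete graph or an odd cycle is $L$-colorable from every list assignment with $|L(v)| \ge d(v,G)$. Applying it to our $G$ produces the required coloring.

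If one prefers to stay strictly within the version of Brooks' theorem stated in the excerpt, the needed fact can be proved directly by splitting into two cases. (i) If some $v^\star \in V(C)$ satisfies $|L(v^\star)| > d(v^\star,G)$, order the vertices of $G$ along a BFS tree rooted at $v^\star$ and color them in reverse BFS order; every vertex other than $v^\star$ then has a later (still-uncolored) neighbor in the ordering (its BFS parent), so at most $d(v,G)-1 < |L(v)|$ of its neighbors are colored when it is processed and a color from $L(v)$ is available, while $v^\star$ has enough room by hypothesis. (ii) If $|L(v)| = d(v,G)$ for every $v \in V(C)$, one uses the $2$-connectivity of $G$ together with the failure of the two Brooks exceptions to find two non-adjacent vertices $u, w$ whose lists share a common color and for which $G - \{u,w\}$ is connected, precolors both $u$ and $w$ with that shared color, and then falls back to a variant of case (i) on the remainder.

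The main obstacle is case (ii). Once the greedy slack is gone there is no room for a purely local argument, and one genuinely needs the rigidity coming from $2$-connectivity of $G$ together with the absence of the complete-graph and odd-cycle obstructions in order to find a consistent precoloring of two carefully chosen vertices; this is the classical heart of the list-Brooks argument. Every other ingredient, namely the list-size bookkeeping, the observation that $G$ is $2$-connected, and the reduction from extending $\varphi$ to an $L$-coloring of $G$, is entirely routine.
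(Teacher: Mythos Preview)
Your reduction to degree-list-coloring of $G=H[V(C)]$ and the appeal to the Erd\H{o}s--Rubin--Taylor theorem is correct: $G$ is $2$-connected (it has a spanning cycle on $\ge 3$ vertices) and by removability its unique block is neither complete nor an odd cycle, so $G$ is degree-choosable and the required $L$-coloring exists.

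This is, however, a genuinely different route from the paper's. The paper does not pass through list coloring at all; it exploits the cyclic order of $C$ directly. Writing $\mathrm{COL}(v)$ for the set of colors already used on the external neighbors of $v$, the paper splits on whether $\mathrm{COL}(v)\subseteq\mathrm{COL}(S(v))$ holds for every $v$ along the cycle. If so, all the $\mathrm{COL}(v)$ coincide with one set $\mathrm{COL}$, the maximum degree of $H[V(C)]$ is at most $\Delta-|\mathrm{COL}|$, and ordinary Brooks colors $H[V(C)]$ from the palette $\{1,\dots,\Delta\}\setminus\mathrm{COL}$. Otherwise some $x\in\mathrm{COL}(v)\setminus\mathrm{COL}(S(v))$ exists; one assigns $x$ to $S(v)$, then greedily colors around the cycle (each intermediate vertex still has its cyclic successor uncolored, hence at most $\Delta-1$ forbidden colors), and finally $v$ sees two neighbors of color $x$---one external and $S(v)$---leaving a free color. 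Thus the paper stays entirely within the Brooks theorem already quoted in its preliminaries, whereas your argument imports a stronger off-the-shelf result; in exchange, your reduction is cleaner and makes transparent why removability is exactly the right hypothesis.

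One caveat on your fallback sketch for case~(ii): the assertion that one can always pick non-adjacent $u,w$ with a common list color and $G\setminus\{u,w\}$ connected fails already for $G=C_4$ with identical $2$-element lists, where the only non-adjacent pairs are antipodal and removing them disconnects the cycle. The actual ERT argument handles tight lists differently (if not all lists are equal, an adjacent pair with differing lists yields slack; if all lists are equal, ordinary Brooks applies). This does not affect your overall correctness since you rely on the ERT citation, but the direct sketch as written would need repair.
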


\begin{proof} Let $H$ and $C = (v_{1},...,v_{k})$ be some removable cycle in $H$ with $\Delta(C,H) \le \Delta$. Let $\varphi$ be a partial proper $\Delta$-coloring of $H$ such that $\varphi(v) = \; \perp$ for all $v \in V(C)$. For each $v \in V(C)$, we denote by $\text{COL}(v)$ the set that contains the color of each colored neighbor of $v$ (that is, $\text{COL}(v) = \{\varphi(u) \mid \{u,v\} \in E(H) $ and $\varphi(u) \ne \; \perp\}$), and by $S(v)$ the vertex that comes after $v$ in the cyclic order of $C$ (i.e. $S(v_i) = v_{i+1}$ for all $i \in \{1, ..., k-1\}$ and $S(v_k) = v_1$). If $\text{COL}(v) \subseteq \text{COL}(S(v))$ for all $v \in V(C)$, then we must have $\text{COL}(v_1) \subseteq \text{COL}(v_2) \subseteq ... \subseteq \text{COL}(v_k) \subseteq \text{COL}(v_1)$ and so there exists a set of colors $\text{COL}$ such that $\text{COL}(v) = \text{COL}$ for all $v \in V(C)$. Since each vertex $v \in V(C)$ must have at least one colored neighbor for each color in $\text{COL}$, we get that each vertex $v \in V(C)$ has at least $|\text{COL}|$ colored neighbors.
Since the vertices in $V(C)$ are not colored, it must be that each vertex $v \in V(C)$ has at most $d(v,H) - |\text{COL}|$ neighbors in $H[V(C)]$. Since $C$ is a removable cycle in $H$, the graph $H[V(C)]$ is neither isomorphic to a complete graph nor to a cycle graph of odd length. Moreover, we have $d(v,H) \le \Delta$ for all $v \in V(C)$, and so the maximum degree of $H[V(C)]$ is at most $\Delta - |\text{COL}|$. It follows from Brooks' theorem \cite{Brooks41} that $H[V(C)]$ can be properly colored using at most $\Delta - |\text{COL}|$ colors and so we can properly color it by using the palette $\{1,...,\Delta\}\setminus \text{COL}$ which is a proper extension of $\varphi$.

Now, if there exists a vertex $v \in V(C)$ for which $\text{COL}(v) \not\subseteq \text{COL}(S(v))$, then there exists a color $x$ such that $x \in \text{COL}(v)$ and $x \not\in\text{COL}(S(v))$. Let $S_1(v) = S(v)$, and $S_i(v) = S(S_{i-1}(v))$ for each $i \in \{2, ..., k\}$ (note that $v = S_k(v)$). We start by assigning $S_1(v)$ the color $x$ (which is a proper extension of $\varphi$ as $x \notin \text{COL}(S(v))$). Next, we properly color each of the vertices in $S_2(v), ..., S_{k-1}(v)$ (in this order) by finding each time a color that no neighbor of the current vertex use (always possible since each of these vertices has at least one uncolored vertex at the moment of its coloration).
Finally, we are left with coloring $S_{k}(v)$, that is, $v$ itself.
Note that $S_{k}(v)$ currently has at least two neighbors colored $x$ ($x\in \text{COL}$ so $v$ has at least one neighbor not in $V(C)$ colored $x$ and in addition we colored
 $S_1(v)$ in $x$ as well).
Therefore, there is a free color (that is, a color that no neighbor of $v$ uses) to assign to $v$.
 \end{proof}

 \section{Finding a 4-coloring of triangle-free planar graphs}
\label{sec:4-coloring}
%\subsection{The algorithm}

We assume that the communication network $G = (V,E)$ is a triangle-free planar graph on $n$ vertices where each vertex $v \in V$ has a unique identifier $\text{ID}(v)$ taken from $\{1,...,n^{c'}\}$ (for some constant integer $c' \ge 1$). At the beginning each vertex $v \in V$ knows only the value of $n$ and its own identifier.

The very rough idea of the algorithm is to partition the vertices of the graph into $O(\log n)$ disjoint subsets that will later be $4$-colored iteratively one after the other.
More precisely, the algorithm partitions the set of vertices $V$ into sets of vertices $V_1,...,V_{r}$ (for some $r = O(\log{n})$) such that the following occurs (in our algorithm below these sets are the sets with vertices $u$ with the same label $\text{L}[u]$).
For every $i$, assuming the vertices in the set $\cup_{r\geq j>i}{V_{j}}$ are already properly colored (and no other vertex is yet colored), we can efficiently assign colors to the set of vertices $V_i$ such that the previous coloring together with the new coloring forms a proper coloring.
Given these sets of vertices, the algorithm simply iterates the sets of vertices from $V_{r}$ to $V_1$ and color the current set of vertices given the coloring of vertices from previous sets.

To this end, each vertex $v \in V$ executes in parallel the following algorithm:\medskip \smallskip

\begin{algorithm}[H]\label{AlgPartition}\small
	\caption{partitions the vertices of the graph}
\SetInd{1em}{0em}
$\text{L}[v] \gets (\perp, \perp)$\\

	\For {\upshape $i \gets 1$ \textbf{to} $1 + 70\lceil \log_{2}n \rceil$} { \label{alg:outer}
		%\For {\upshape $j \gets 4$ \textbf{downto} $1$} { \label{alg:inner}

	% First iteration - removing removable cycles of length 4.
	 \tcp{Step 1: removing removable cycles of length $4$ and degree $\le 4$}
            Collect the labeled neighborhood of $v$ up to distance $3$, and let $N(v)$ be the subgraph induced by all the vertices $u$ with $\text{L}[u] = (\perp, \perp)$ whose distance from $v$ is at most $3$. \\
             \If{\upshape $v$ belongs to some removable simple cycle $C$ in $N(v)$ of length $4$ and degree $\le 4$}
                    {$\text{L}[v] \gets (i,1)$ \\
                    $\text{key}[v] \gets \{\text{ID}(u) \mid u \in V(C)  \}$} \smallskip

            % Second iteration - removing all nodes of degree less than 4.
            \tcp{Step 2: removing vertices of degree less than $4$}
            Collect the labeled neighborhood of $v$ up to distance $1$, and let $N(v)$ be the subgraph induced by all the vertices $u$ with $\text{L}[u] = (\perp, \perp)$ whose distance from $v$ is at most $1$. \\
            \If{\upshape $\text{L}[v] = (\perp, \perp)$ \textbf{and} $v$ is of degree less than 4 in $N(v)$} {\label{removeLowDeg} $\text{L}[v] \gets (i,2)$ \\$\text{key}[v] \gets \{\text{ID}(v)\}$
  }
    }
\label{alg:partitionToLevels}
\end{algorithm}\medskip \smallskip

We say that a vertex $v \in V$ is active at some point of the algorithm's execution if $\text{L}[v] = (\perp, \perp)$ at that point, or inactive otherwise. We claim that by the end of the algorithm all the vertices must be inactive. For this, we start by showing the following lemma:

\begin{lemma} For each $i \in \{1,...,1+70\lceil\log_{2}n\rceil\}$, if $A$ and $B$ respectively denote the sets of all the vertices that were active at the start of the outer loop's $i$-th iteration and at its end, then $|B| \le 0.99|A|$.\label{lem:numberOfVerticesDeactivates}\end{lemma}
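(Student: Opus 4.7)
The plan is to show that iteration $i$ deactivates at least $0.01|A|$ vertices. Let $H := G[A]$ be the subgraph induced by the active vertices at the start of iteration $i$; as an induced subgraph of $G$, it is triangle-free and planar, so Claim~\ref{claim:numberOfEdges} with $g=4$ gives $|E(H)| \le 2|A|$ and hence $\sum_{v \in V(H)} \deg_H(v) \le 4|A|$. I partition $V(H)$ by $H$-degree into $L$ (degree $\le 3$), $M$ (degree exactly $4$), and $R$ (degree $\ge 5$), and case-split on $|L|$.

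If $|L| \ge 0.01|A|$, then every $v \in L$ has $H$-degree at most $3$, hence also at most $3$ in the active subgraph after Step~1, so Step~2 deactivates $v$ unless Step~1 already did; either way the claim is immediate. Otherwise $|L| < 0.01|A|$, and combining $4|M| + 5|R| \le 4|A|$ with $|M| + |R| > 0.99|A|$ forces $|R| < 0.04|A|$ and $|M| > 0.95|A|$. Let $D := L \cup M$. The total $H$-degree of $R$ is at most $2|E(H)| - \sum_{v\in D}\deg_H(v) \le 4|A| - 4|M| < 0.2|A|$, so fewer than $0.2|A|$ edges of $H$ join $D$ to $R$; since $\sum_{v\in D}\deg_H(v) \ge 4|M|$, it follows that $2|E(H[D])| \ge 4|M| - 0.2|A| > 3.6|A|$, i.e., $|E(H[D])| > 1.8|A|$.

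Now let $X \subseteq D$ be the set of vertices lying on some $4$-cycle in $H[D]$. Then $H[D \setminus X]$ is triangle-free and $4$-cycle-free, so Claim~\ref{claim:numberOfEdges} with $g=5$ yields $|E(H[D \setminus X])| \le \tfrac{5}{3}(|D| - |X|)$; since every vertex in $D$ has degree at most $4$ in $H[D]$, the edges of $H[D]$ incident to $X$ number at most $4|X|$, and therefore $|E(H[D])| \le \tfrac{5}{3}|D| + 4|X| \le \tfrac{5}{3}|A| + 4|X|$. Combined with $|E(H[D])| > 1.8|A|$ this gives $|X| > \tfrac{1}{30}|A| > 0.01|A|$. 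Each $v \in X$ lies on a simple $4$-cycle $C$ in $H[D]$, whose four vertices sit within distance $2$ of $v$ and whose $H$-neighbors sit within distance $3$, so $C \subseteq N(v)$ and the degree of $C$ inside $N(v)$ equals its $H$-degree, which is at most $4$; moreover $C$ is automatically removable since triangle-freeness rules out $K_4$ and the length $4$ rules out odd cycles. Hence Step~1 of iteration $i$ deactivates every $v \in X$, finishing Case~2.

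The main obstacle is the numerical calibration: one needs $|E(H[D])|$ to exceed the girth-$5$ bound $\tfrac{5}{3}|D|$ by an additive $\Omega(|A|)$ margin, which is precisely what the assumption $|L| < 0.01|A|$ supplies through the chain ``$|L|$ small $\Rightarrow |M|$ near $|A|$ $\Rightarrow \sum_D \deg_H$ near $4|A|$ $\Rightarrow |E(H[D])|$ near $2|A|$''. Beyond this the only checks to perform are purely structural: that a $4$-cycle found in $H[D]$ is actually visible to its vertices in their radius-$3$ neighborhoods, that its degree as computed in $N(v)$ matches its $H$-degree, and that it satisfies the removability hypothesis of Lemma~\ref{lem:coloringRemovableCycle} --- all of which are free from triangle-freeness and $|C|=4$.
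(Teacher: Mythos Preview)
Your proof is correct, and the overall structure (case-split on $|L|$, then use the girth-$5$ bound from Claim~\ref{claim:numberOfEdges} against the degree-$\le 4$ vertices) matches the paper. The packaging, however, is somewhat different and worth noting.

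The paper works \emph{after} Step~1: it lets $U$ be the vertices still active once Step~1 finishes, argues that the subgraph of $G[U]$ on degree-$\le 4$ vertices is $4$-cycle-free (since any such cycle would have been removable and hence removed), applies the girth-$5$ bound there, and concludes that $G[U]$ has at most $1.9|A|$ edges. This forces at least $0.05|A|$ vertices of degree $<4$ in $G[U]$, which Step~2 then handles; the final count combines removals from both steps. You instead work \emph{before} Step~1: you isolate directly the set $X\subseteq D$ of vertices on $4$-cycles in $H[D]$, apply the girth-$5$ bound to $H[D\setminus X]$, and compare edge counts to force $|X|>\tfrac{1}{30}|A|$. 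Since every vertex of $X$ sits on a removable $4$-cycle of degree $\le 4$ visible in its radius-$3$ ball, Step~1 alone deactivates all of $X$. Your route is a bit more direct (no need to reason about the post-Step-1 graph or to sum removals across two steps), at the cost of a weaker constant ($|B|<0.99|A|$ versus the paper's $|B|\le 0.95|A|$), which is immaterial here.
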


\begin{proof} Let $L_A \subseteq A$ and $H_A \subseteq A$ be the sets of all vertices of degree less than $4$ and greater than $4$ in $G[A]$, respectively. Since each vertex who becomes inactive at some point of the algorithm's execution remains inactive until the end, it is enough to show that at least $0.01|A|$ of the vertices in $A$ become inactive by the end of the $i$-th iteration. This is clearly the case when $|L_A| \ge 0.01|A|$ as each vertex in $L_A$ who remained active after step 1, must become inactive by the end of step 2. %the algorithm sets all vertices whose degree is less than 4 to be inactive in Step \ref{removeLowDeg} of the $i$'th iteration.
%we claim that each vertex in $L_A$ becomes inactive at some iteration of the inner loop.
%To see this, assume a node $v \in L_A$ stays active until the last iteration of the inner loop.
%Note that in last iteration of the inner loop, the algorithm removes all nodes that are in a removable cycle of length 1, which means that their degree is less than 4. Hence, $v$ becomes inactive in the last iteration of the inner loop.
So it remains to show that this is also the case when $|L_A| < 0.01|A|$. Let $U$ be the set of all vertices that remained active at the end of the first step of the $i$'th iteration. We divide the edges of the graph $G[U]$ into two disjoint subsets $E_{1}$ and $E_{2}$ where $E_{1}$ contains all the edges in $G[U]$ that are incident to some vertex in $H_A$, and $E_{2}$ contains all the other edges. Since $G[A]$ is a triangle-free planar graph, we have %by Claim \ref{claim:numberOfEdges} that
$|E(G[A])| \le 2|A|$ (by setting $g = 4$ in Claim \ref{claim:numberOfEdges}). It follows that $\Sigma_{v \in A}d(v, G[A]) = 2|E(G[A])| \le 4|A|$ and so $4|A| \ge \Sigma_{v \in A}d(v, G[A]) \ge \Sigma_{v \in A \setminus L_A} d(v, G[A]) = \Sigma_{v \in A \setminus L_A} (d(v, G[A]) - 4) + 4|A \setminus L_A|$. This implies that $4|L_A| = 4|A| - 4|A \setminus L_A| \ge \Sigma_{v \in A \setminus L_A} (d(v, G[A]) - 4) \ge |H_A|$, and so $\Sigma_{v \in H_A}d(v, G[A]) = \Sigma_{v \in H_A}(d(v, G[A]) - 4) + 4|H_A| \le 4|L_A| + 16|L_A| = 20|L_A| < 0.2|A|$. We conclude that $|E_{1}| \le \Sigma_{v \in H}d(v, G[A]) < 0.2|A|$.

Now, let $G'$ be the subgraph induced by the edges in $E_{2}$, that is the graph whose set of vertices $v$ is all vertices in $V$ such that $v$ is incident to at least one edge in $E_2$,
and whose set of edges is $E_2$. Since $G'$ is a subgraph of $G$ it must also be a triangle-free planar graph. Moreover, $G'$ cannot contain a simple cycle of length $4$. To see this, recall that $G'$ contains only vertices whose degree in $G[A]$ is at most $4$. So if $G[A]$ contains a cycle of at length at most 4 it must be a removable cycle (as $G[A]$ is a triangle free graph). However, the algorithm already removed in the first step of the $i$'th iteration, all vertices that are contained in removable cycle of length 4. It follows by Claim $\ref{claim:numberOfEdges}$ that $1.7|A| > \frac{5}{5-2}|V(G')|\ge |E(G')| = |E_{2}|$. It follows that $G[U]$ contains at most $|E_{1}| + |E_{2}| < 1.9|A|$ edges. Now, let $L_U$ be the set of all vertices of degree less than $4$ in $G[U]$. We have $3.8|A| \ge 2(|E_{1}| + |E_{2}|) = \Sigma_{v \in U}d(v, G[U]) \ge \Sigma_{v \in U \setminus L_U}d(v, G[U]) \ge 4(|U| - |L_U|) = 4(|A| - |A \setminus U| - |L_U|)$ and so $|A \setminus U| + |L_U| \ge 0.05|A|$. Note that in the first step of the $i$'th iteration, the algorithm inactivates $|A \setminus U|$ vertices (by definition of $U$) and in the second step it inactivates $|L_U|$ vertices.
Overall, the algorithm inactivates in the $i$'th iteration $|A \setminus U| + |L_U|$ vertices.
It follows that $|B| \le 0.95|A|$, as required. % as each vertex in $(A \setminus U) \cup S$ becomes inactive at some iteration of the inner loop.
\end{proof}

We conclude the following lemma:

\begin{lemma} By the end of the algorithm all vertices are inactive. \label{lem:numberOfVerticesDeactivatesCor}\end{lemma}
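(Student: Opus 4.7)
The plan is to iterate Lemma \ref{lem:numberOfVerticesDeactivates}. The first observation is that once a vertex becomes inactive its label is never reset: the algorithm only assigns to $\text{L}[v]$ a value of the form $(i,1)$ or $(i,2)$, and the two conditional assignments are guarded by $\text{L}[v] = (\perp,\perp)$ (explicitly in Step 2, and implicitly in Step 1 through the construction of $N(v)$, which only includes vertices with that label). Consequently the set of active vertices at the start of iteration $i+1$ coincides with the set of active vertices at the end of iteration $i$.

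Let $a_i$ denote the number of active vertices at the end of the $i$-th iteration of the outer loop, with $a_0 = n$. By the previous paragraph, Lemma \ref{lem:numberOfVerticesDeactivates} applied with $A$ equal to the active set at the start of iteration $i+1$ yields $a_{i+1} \le 0.99\, a_i$. A straightforward induction then gives $a_i \le 0.99^i \cdot n$.

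It now suffices to verify that for $T = 1 + 70 \lceil \log_2 n \rceil$ one has $0.99^T \cdot n < 1$, since $a_T$ is a nonnegative integer and hence must equal $0$, which is exactly the statement that every vertex is inactive at termination. This is a routine estimate: one checks that $0.99^{70} < 1/2$, so $0.99^{70 \lceil \log_2 n \rceil} \le 2^{-\lceil \log_2 n \rceil} \le 1/n$, and the extra ``$+1$'' iteration provides a safety factor.

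There is no real obstacle: all of the combinatorial content (planarity, triangle-freeness, removable $4$-cycles, and the degree accounting) is already packaged inside Lemma \ref{lem:numberOfVerticesDeactivates}, and the corollary reduces to the elementary fact that a geometric sequence with ratio $0.99$ decays below any polynomial threshold in $O(\log n)$ steps.
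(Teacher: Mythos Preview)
Your proof is correct and follows essentially the same approach as the paper: iterate Lemma~\ref{lem:numberOfVerticesDeactivates} to get $a_i \le 0.99^i n$, then verify that $0.99^{1+70\lceil \log_2 n\rceil} n < 1$. You supply more detail than the paper (the persistence of inactivity and the numerical check $0.99^{70} < 1/2$), but the argument is the same.
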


\begin{proof}
By Lemma \ref{lem:numberOfVerticesDeactivates}, it easily follows that the number of vertices that remained active after $k$ iterations  is at most $0.99^{k}n$.
Note that the algorithm invokes $ 1 + 70\lceil\log_{2}n\rceil$ iterations therefore after the last iteration the number of vertices that remain active is $0.99^{1 + 70\lceil\log_{2}n\rceil}n < 1$. Hence, no vertex remain active by the end of the algorithm.\end{proof}

Let $\beta = 1 + 70\lceil \log_{2}n \rceil$ and let $S = \{1,...,\beta\} \times \{1,2\}$. The above algorithm partitions the vertices of the graph into $|S|$ disjoint subsets $\{H_{i,j}\}_{(i,j) \in S}$ where each $H_{i,j}$ contains all the vertices $u$ with $\text{L}[u] = (i,j)$.
Roughly speaking, our goal now is to go over the sets $H_{i,j}$ in a reverse order (that is, $H_{\beta,2}$, $H_{\beta,1}$, $H_{\beta-1,2}$, $H_{\beta-1,1}$,...,$H_{1,2}$, $H_{1,1}$) and color the vertices of the current set given the previous 4-coloring of all vertices belonging to previous sets.
We claim that the algorithm can always assign colors to the vertices in the current set in such a way that the coloring of all colored vertices so far (i.e., vertices from the current set and from previous sets)
forms a proper coloring.
To see this, we distinguish between two cases.
The first case is when the current set is of the form $H_{i,2}$ and the second case is when the current set is of the form  $H_{i,1}$ for some $i \in \{1, ..., \beta\}$.
Consider the first case, note that by construction the degree of all the vertices in $H_{i,2}$ in the induced graph of the set of vertices $\left(\cup_{\beta \geq j>i}{H_{j,1}}\right) \bigcup \left(\cup_{\beta \geq j \geq i}{H_{j,2}} \right)$
(the vertices from all previous and current sets) is less than 4.
Therefore, we can always find a free color for every vertex in $H_{i,2}$ from the palette $\{1,2,3,4\}$, and we can synchronize between the different vertices in this set by partitioning it into several disjoint independent sets.

Consider now the second case where the current set is of the form $H_{i,1}$.
Recall that all the vertices in $H_{i,1}$ belong to a removable cycle in the induced graph of the set of vertices $\left( \cup_{\beta \geq j\geq i}{H_{j,1}}\right) \bigcup \left( \cup_{\beta \geq j \geq i}{H_{j,2}} \right)$.
Our goal is to use Lemma \ref{lem:coloringRemovableCycle} to color all these vertices.
This by itself is not trivial as the cycles might overlap (by either sharing a common vertex or by being connected by an edge).
A first attempt to solve this issue is to partition the cycles into sets such that in each one of them there are no two overlapping cycles.
The problem with this approach is that when two (or more) cycles $C_1$ and $C_2$ overlap by sharing a vertex, we have to first color one of them, say $C_1$, but then when getting to color $C_2$ some of its vertices are already colored and so we cannot use Lemma \ref{lem:coloringRemovableCycle} to color $C_2$.
To overcome this, we loosely speaking do the following.
We look at the following super graph of removable cycles.
Each removable cycle is a node in the super graph and each two nodes in the super graph have an edge if their corresponding cycles overlap.
We first color the nodes in the super graph and then we partition the vertices of $H_{i,1}$ as follows.
Let $c_1,...,c_r$ be the set of colors used in the super graph (we will later show that $r$ is constant).
Each original vertex from $H$ picks the cycle with the maximal color in which it participates in. We order these sets as follows: the first set of vertices $H_{i,1}^1$ is the vertices of $H_{i,1}$ that picked a cycle of minimum color $c_1$, the second set $H_{i,1}^2$ are the vertices that picked a a cycle of color $c_2$ and so on.
We then color the vertices in this order, that is, first color the vertices in $H_{i,1}^1$ and then the vertices in $H_{i,1}^2$ and so on.
The key observation here is as follows.
Consider a vertex $v$ belonging to $H_{i,1}^j$ for some $1 \leq j \leq r$.
Let $C$ be the removable cycle $v$ participates in of the maximal color, note that by construction this color is $c_j$.
We claim that all vertices of $C$ belong to sets of the form $H_{i,1}^{j'}$ for some $j' \geq j$ and therefore when coloring the vertices
$H_{i,1}^j$ and $v$ in particular, we can always assign colors to vertices $H_{i,1}^j$.

We next define the super graph of cycles more formally.
Given a set of vertices $U$, we define $N(U)$ to be the set containing
 all the vertices $u$ whose distance in $G$ from some vertex with identifier in $U$ is at most $1$.
%In other words the indemnifiers $U$
%\shiri{go over the text and replace each $1 + 70\lceil \log_{2}n \rceil$ with $\beta$.}
%\shiri{change $G_i$ to ${\cal G}_i$ everywhere and $V_i$ and $E_i$}
For each $i \in \{1, ..., \beta\}$, we denote by ${\cal G}_{i} = ({\cal V}_{i},{\cal E}_{i})$ the graph whose nodes and edges are $\{\text{key}[u] \mid u \in H_{i,1}\}$ and $\{\{\text{key}[u], \text{key}[v]\}\mid \{u,v\} \subseteq H_{i,1} \text{\; and \;} \text{key}[u] \ne \text{key}[v] \text{\; and \;} N(\text{key}[u]) \cap \text{key}[v] \ne \emptyset  \}$, respectively.

%\shiri{throughout we should talk about nodes when we refer to the super graph and vertices when we refer to original vertices from $G$}
\begin{lemma}  For each $i \in \{1, ..., \beta\}$, the maximum degree of the graph ${\cal G}_{i}$ is less than $4^5$. \label{lem:degreeSuperGraph}\end{lemma}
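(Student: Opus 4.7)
The plan is a straightforward two-step counting argument. Let $A$ denote the set of vertices that are active at the start of iteration $i$, so the cycles represented by nodes of $\mathcal{G}_i$ are $4$-cycles of $G[A]$. A preliminary observation is that every vertex lying on such a cycle has active degree at most $4$, where active degree means degree in $G[A]$: step 1 requires degree $\le 4$ in the collected neighborhood $N(v)$, and since any $4$-cycle through $v$ is contained in the ball of radius $2$ around $v$, all of its vertices and their active neighbors lie inside the collected distance-$3$ ball; hence the step 1 bound really is a bound on the active degree in $G[A]$.

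Fix a node $V(C)$ of $\mathcal{G}_i$. Every neighbor of $V(C)$ corresponds to a distinct $4$-cycle $V(C')$ with $V(C')\cap N(V(C))\ne\emptyset$, which means $V(C')$ contains some vertex $w\in N(V(C))$. Since $w$ lies on the labeled cycle $C'$, we have $w\in A$ with active degree at most $4$.

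First I would bound $|N(V(C))\cap A|$: each of the four vertices of $V(C)$ has at most $4$ active neighbors, two of which are its cycle-neighbors already inside $V(C)$, so it contributes at most $2$ vertices outside $V(C)$. Summing, $|N(V(C))\cap A|\le 4+4\cdot 2=12$. Second, for each such $w$ of active degree $\le 4$ I would count the $4$-cycles of $G[A]$ through $w$ whose every vertex has active degree $\le 4$: such a cycle corresponds to a closed walk $w\to x_1\to x_2\to x_3\to w$, and at each of the three non-trivial steps the next vertex is constrained to lie among the $\le 4$ active neighbors of the current vertex, yielding at most $4^3=64$ such walks and hence at most $4^3$ such cycles through $w$.

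Multiplying the two estimates, the degree of $V(C)$ in $\mathcal{G}_i$ is at most $12\cdot 4^3=768<1024=4^5$, as claimed. The only real subtlety is the preliminary observation above, namely that the local step 1 degree condition implies the bound on the active degree in $G[A]$ for every vertex of every labeled cycle; once this is in hand, the rest is routine product counting, and the constant $4^5$ is rather loose.
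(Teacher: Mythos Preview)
Your argument is correct. The preliminary observation is sound: since any $4$-cycle through $v$ lies in the radius-$2$ ball around $v$, all active neighbours of its vertices lie in the radius-$3$ ball $N(v)$, so the degree-$\le 4$ condition imposed by Step~1 is indeed a bound on the degree in $G[A]$. The bound $|N(V(C))\cap A|\le 12$ uses triangle-freeness (no chords in a $4$-cycle), and the $4^3$ bound on labeled $4$-cycles through a fixed $w$ is a clean walk count. The product $12\cdot 64=768<4^5$ is valid.

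The paper takes a slightly different route. Rather than counting cycles directly, it observes that each neighbour of $\text{key}[v]$ in $\mathcal G_i$ is $\text{key}[u]$ for some vertex $u\in H_{i,1}$, and that any such $u$ must lie at distance at most~$5$ from $v$ inside $G[H_{i,1}]$ (diameter $\le 2$ along each $4$-cycle, plus at most one connecting edge). Since $G[H_{i,1}]$ has maximum degree at most~$4$, a Moore-type bound gives fewer than $4^5$ such vertices~$u$. Your two-step decomposition (anchor vertex in $N(V(C))$, then cycles through that anchor) is arguably more transparent and yields a slightly sharper constant, while the paper's distance argument generalises more directly to the length-$\le 10$ cycles used in Section~\ref{sec:6-coloring}.
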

\begin{proof} Let $i \in \{1, ..., \beta\}$ and $v \in H_{i,1}$. The degree of $\text{key}[v]$ in ${\cal G}_{i}$ is equal to the size of the set $\{\text{key}[u] \mid u \in H_{i,1} \text{\; and \;} N(\text{key}[u]) \cap \text{key}[v] \ne \emptyset  \text{\; and \;} \text{key}[u] \ne \text{key}[v]\}$.
In other words, the neighbors of $\text{key}[v]$ are nodes that represent cycles $C'$ in ${\cal G}_{i}$ that either have a common vertex with the cycle $C$ that $\text{key}[v]$ represents or there is an edge such that one endpoint of this edge is in $C$ and the other in $C'$.
It is not hard to verify that the size of this set is bounded by the size of $A' = \{u \in H_{i,1} \mid N(\text{key}[u]) \cap \text{key}[v] \ne \emptyset\}$.

Let $u \in A'$, let $C_u$ be the cycle that $\text{key}[u]$ represents and $C_v$ be the cycle that $\text{key}[v]$ represents.
By definition, we have $V(C_u) \cap V(C_v) \neq \emptyset$ or there is an edge $\{x,y\} \in E$ such that $x \in V(C_u)$ and $y \in V(C_v)$.
Since both cycles  $C_u$ and $C_v$ are of length 4, it is not hard to verify that the distance between $u$ and $v$ in the induced graph of
$V(C_u) \cup V(C_v)$ is at most 5.
Moreover, all vertices of $V(C_u) \cup V(C_v)$ belong to $H_{i,1}$.
We get that the distance between $u$ and $v$ in the induced graph $G[H_{i,1}]$ is at most $5$.
Note that the maximum degree in $G[H_{i,1}]$ is at most 4.
Straight forward calculations show that for a vertex $v'$ in a graph of degree at most $4$ there could at most $4^5$ vertices at distance at most $5$ from it (we didn't try optimize constants). Hence there could be at most $4^5$ such vertices $u$ in the set $A'$.
%
%Let $B'$ be the set that contains all the vertices that were active at the start of the first step of the $i$-th iteration of the loop. Let $u \in A'$. For each $x \in \{u,v\}$, there exist a simple removable cycle $C_{x}$ (whose identifiers are $\text{key}[x]$) of length $4$ in $G[B']$. Since $N(\text{key}[u]) \cap \text{key}[v] \ne \emptyset$, the cycles $C_{u}$ and $C_{v}$ must either have a common vertex or two different vertices $z \in V(C_{u})$ and $y \in V(C_{v})$ such that $\{z,y\} \in E$. It follows that the distance in $G[B']$ between $u$ and $v$ can be at most $\le 5$. Now, each vertex in $V(C_{v}) \cup V(C_{u})$ also belongs to $H_{i,1}$ and so we get that the distance between $u$ and $v$ is at most $5$ in the graph $G[H_{i,1}]$. Since the maximum degree of the graph $G[H_{i,1}]$ is $4$, it follows that there can be $< 4^5$ such vertices and so $|A'| < 4^5$.
\end{proof}

%\shiri{add the references}
For each $i \in \{1, ..., \beta\}$, we want to compute a proper $5$-coloring $\varphi_{i}$ of the graph $G[H_{i,2}]$ so that each vertex $v$ with $\text{L}[v] = (i,2)$ knows the value of $\varphi[v] = \varphi_{i}(v)$. This can be done in $O(\log^{*}n)$ by executing in parallel a $\Delta+1$-coloring algorithm (e.g. \cite{Goldberg87}) on the graph $G[H_{i,2}]$. Next, we want to compute a proper coloring $\varphi_{i}:{\cal V}_{i} \to \{1,2,...,4^5\}$ of the graph ${\cal G}_{i}$ so that each vertex $v$ with $\text{L}[v] = (i,1)$ knows the value of $\varphi[v] = \varphi_{i}(\text{key}[v])$. This can be done in $O(\log^{*}n)$ rounds by simulating in parallel a $4^5$-coloring algorithm for each ${\cal G}_{i}$ on the graph $G$.\medskip

The simulation can be done as follows.
First we show that we can assign each node in ${\cal G}_{i}$ an integer identifier that is polynomial in $n$.
If ${\cal G}_{i}$ was the underling graph, then it is possible to map each $\text{key}$ to an ID in the range $\{1,...,n^{4c'}\}$ (as the ID of each vertex is in the range $\{1,...,n^{c'}\}$ and each $\text{key}$ consists of 4 vertices in the original graph).
This way, we get a graph of maximum degree $< 4^5$ (by Lemma \ref{lem:degreeSuperGraph}) and maximum ID $\le n^{4c'}$, and so we can find a proper $4^5$-coloring in $O(\log^*{n})$. But, the underling graph is $G$ and so we have to simulate this algorithm by forwarding each message of this algorithm to all the corresponding vertices in the graph $G$ (for example, each node $k$  in ${\cal G}_{i}$ can be simulated by the vertex with the highest identifier in the cycle that $k$  represents). This increases the running time by a constant factor (as the length of each cycle in ${\cal G}_{i}$ is of constant length).

%\noindent \shiri{Add an explanation in the text that explains how we simulate $G_i$}
%
%[Explaining the simulation:
%
%If ${\cal G}_{i}$ was the underling graph, then it is possible to map each $\text{key}$ to an ID in the range $\{1,...,n^{4c'}\}$ (as the ID of each vertex is in the range $\{1,...,n^{c'}\}$). This way, we get a graph of maximum degree $\le 4^5$ (by Lemma \ref{lem:degreeSuperGraph}) and maximum ID $\le n^{4c'}$, and so we can find a proper $4^5$-coloring in $O(\log^*{n})$ [NOT SURE]. But, the underling graph is $G$ and so we have to simulate this algorithm by forwarding each message of this algorithm to all the corresponding vertices in the graph $G$. This increases the running time by a constant factor.]\medskip

The following algorithms are executed by each vertex $v$.
Algorithm \ref{AlgColorSuperGraph} assigns a color to every $v$ as explained above.
Note that these are not the final colors of the vertices but rather colors that are supposed to be used
to partition the vertices in each $H_{i,j}$ into a small number of sets such that each such set can be colored simultaneously in parallel.
Algorithm \ref{AlgColorGraph} then assigns each vertex the final color.  \medskip \smallskip

\begin{algorithm}[H]\label{AlgColorSuperGraph}\small
	\caption{Give a color $\varphi[v]$ for every vertex $v$. The colors $\varphi[v]$ are supposed to synchronize between the different vertices in each $H_{i,j}$ for $i \in \{1, ..., \beta\}$ and $1 \leq j \leq 2$}
\SetInd{1em}{0em}
Let $i,j$ be the the indices such that $L(v) = (i,j)$.\\
If $j=2$ then execute the $\Delta+1$-coloring algorithm of \cite{Goldberg87} in the graph $G[H_{i,j}]$ and set $\varphi[v]$ to be the color assigned to $v$.\\
If $j=1$  then execute the $\Delta+1$-coloring algorithm of \cite{Goldberg87} in the super graph ${\cal G}_i$ (by simulating the graph ${\cal G}_i$ such that every cycle $\text{key}[u]$ is simulated by the vertex with highest ID in $\text{key}[u]$) and set $\varphi[v]$ to be the color assigned to $\text{key}[v]$.
\label{alg2:col}
 \end{algorithm}\medskip \smallskip

Now, in order to find a $4$-coloring of the whole graph each vertex $v \in V$ executes in parallel the following algorithm: \medskip \smallskip

%\shiri{change $key_n$ to $key_{new}$}

\begin{algorithm}[H]\label{AlgColorGraph}\small
	\caption{Color the vertices of the graph}
\SetInd{1em}{0em}
$\text{C}[v] \gets \;\perp$\\
Collect the value of $(\text{key}[u], \varphi[u])$ from each vertex $u$ with $\text{L}[v] = \text{L}[u]$ whose distance from $v$ is at most $2$, and let $\text{P}[v]$ be the set containing all the pairs (among the collected ones) whose key contains the ID of $v$.\\ \label{line:collectCycle}
Choose $(\text{k},\text{color}) \in \text{P}[v]$ with maximum color, and set $\text{key}_{\text{new}}[v] \gets \text{k}$ and $\varphi_{\text{new}}[v] \gets \text{color}$.\label{line:chooseCycle}

	\For {\upshape $i \gets 1 + 70\lceil \log_{2}n \rceil$ \textbf{downto} $1$} { \label{line:outer}

\tcp{Step 1: coloring all vertices $v \in V$ with $\text{L}[v] = (i,2)$}
		    \For {\upshape $k \gets 1$ \textbf{to} $5$} { \label{line:innerStep1}
		Collect the value of $\text{C}[u]$ from each neighbor $u$ of $v$, and let $\text{Colors}$ be the set containing all these values.

             \If{\upshape $\text{L}[v] = (i,2)$ \textbf{and} $\varphi_{\text{new}}[v] = k$}  { \label{line:innerCond1}
		Choose a color $x \in \{1,2,3,4\} \setminus \text{Colors}$, and set $\text{C}[v] \gets x$.\label{line:ColorVertex}
      }}

                   \smallskip\tcp{Step 2: coloring all vertices $v \in V$ with $\text{L}[v] = (i,1)$}
		    \For {\upshape $k \gets 1$ \textbf{to} $4^5$} { \label{alg2:inner}
		Collect the labeled neighborhood of $v$ up to distance $3$ (along with the value of $C$ of each vertex in that neighborhood), and let $N_{i,k}(v)$ be the subgraph induced by all the vertices whose ID belongs to $N(\text{key}_{\text{new}}[v])$.

             \If{\upshape $\text{L}[v] = (i,1)$ \textbf{and} $\varphi_{\text{new}}[v] = k$}  {\label{line:innerCond2}
Find a proper $4$-coloring of $N_{i,k}(v)$ which is consistent with the vertices that are already colored (by using a deterministic algorithm), and set $\text{C}[v]$ accordingly.  \label{line:ColorRemovable}

}}}

\label{alg:finalColor}
\end{algorithm}\medskip \smallskip

We say that a set of vertices $U \subseteq V$ is properly colored at some round $r$, if in the start of this round we have $\text{C}[v] \ne \; \perp$ for every vertex $v \in U$, and $\text{C}[v] \ne \text{C}[u]$ for every two adjacent vertices $u$ and $v$ in $G[U]$. We also say that a set of vertices $U \subseteq V$ is not colored at some round $r$, if in the start of this round we have $\text{C}[v] = \; \perp$ for every vertex $v \in U$. For each vertex $v \in V$, we assume here that $\varphi_{\text{new}}[v]$ and $\text{key}_{\text{new}}[v]$ refer to the values that $v$ obtained after line \ref{line:chooseCycle}. For each $i \in \{0, ..., \beta\}$, we let $A_i$ denote the set of all vertices $u \in V$ with $\text{L}[u] \in \{j \mid i < j \le \beta\} \times \{1,2\}$.

%Our next goal is show that by the end of the algorithm, the set $V$ is properly colored.

The next lemma shows that when a vertex wants to pick a color in line \ref{line:ColorVertex} then a free color is always available, that is, it can pick a color from $\{1,...,4 \}$ that is not yet used by any of its neighbors in $G$, and that the resulting coloring by the end of step $1$ is proper.

\begin{lemma}  For all $i \in \{1, ..., \beta\}$, if at the start of the $i$-th iteration of the loop at line \ref{line:outer}, we have that $A_i$ is properly colored and $V \setminus A_i$ is not colored, then by the end of step $1$ of the $i$-th iteration of this same loop, we get that $A_i \cup H_{i,2}$ is properly colored and $V \setminus (A_i \cup H_{i,2})$ is not colored.\end{lemma}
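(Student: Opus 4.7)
The plan is to verify three things: first, that only vertices in $H_{i,2}$ can have their color changed during step~1 (so $V\setminus(A_i\cup H_{i,2})$ stays uncolored); second, that every vertex in $H_{i,2}$ actually gets colored during the inner loop; and third, that the resulting coloring of $A_i\cup H_{i,2}$ is proper. The first item is immediate from the guard $\text{L}[v]=(i,2)$ at line~\ref{line:innerCond1}, which ensures no $\text{C}[u]$ is updated for $u\notin H_{i,2}$ during step~1.

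For the second item, I would first observe that for every $v\in H_{i,2}$, the key $\text{key}[v]=\{\text{ID}(v)\}$ is a singleton, so the only pair in $\text{P}[v]$ whose key contains $\text{ID}(v)$ is the one produced by $v$ itself. Hence $\varphi_{\text{new}}[v]=\varphi[v]$ and $\text{key}_{\text{new}}[v]=\{\text{ID}(v)\}$. The coloring $\varphi$ restricted to $H_{i,2}$ was obtained by running a $(\Delta+1)$-coloring algorithm on $G[H_{i,2}]$. I would then show that the maximum degree of $G[H_{i,2}]$ is at most~$3$: unfolding the partitioning algorithm, at step~2 of its $i$-th iteration each $v\in H_{i,2}$ has degree strictly less than~$4$ in the subgraph induced by the vertices still active at that point, which (as one checks by bookkeeping) is exactly $A_i\cup H_{i,2}$. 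Consequently $\varphi[v]\in\{1,2,3,4\}$, so every vertex in $H_{i,2}$ is handled by some iteration $k\in\{1,\ldots,5\}$ of the inner loop.

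For the third item, I would prove by induction on $k$ that after the $k$-th iteration of the inner loop, $A_i\cup\{v\in H_{i,2}:\varphi_{\text{new}}[v]\le k\}$ is properly colored and everything else is uncolored. The inductive step needs two facts: (a) any two vertices $u,v\in H_{i,2}$ with $\varphi_{\text{new}}[u]=\varphi_{\text{new}}[v]=k$ are non-adjacent in $G$, which holds because $\varphi_{\text{new}}=\varphi$ on $H_{i,2}$ and $\varphi$ properly colors $G[H_{i,2}]$; and (b) when $v\in H_{i,2}$ with $\varphi_{\text{new}}[v]=k$ looks at $\text{Colors}$, this set has size at most~$3$, so a color from $\{1,2,3,4\}\setminus\text{Colors}$ is available. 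Fact~(b) is where the degree bound is used: every currently colored neighbor of $v$ lies in $A_i\cup H_{i,2}$ (since $V\setminus(A_i\cup H_{i,2})$ is uncolored throughout step~1), and $v$ has at most $3$ neighbors in $A_i\cup H_{i,2}$.

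The only mildly subtle point, and the one I would be most careful about, is the set-theoretic identification $\text{Active}_i\setminus H_{i,1}=A_i\cup H_{i,2}$ that is needed to translate the degree condition enforced by the partitioning algorithm into a statement about $G[A_i\cup H_{i,2}]$. Once that is pinned down, availability of a free color and correctness of the inductive coloring are routine; properness at the end of step~1 follows immediately from the induction at $k=5$, together with the observation that $\varphi_{\text{new}}[v]\in\{1,\ldots,4\}\subseteq\{1,\ldots,5\}$ for every $v\in H_{i,2}$.
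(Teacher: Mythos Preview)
Your proof is correct and follows essentially the same approach as the paper: identify $A_i\cup H_{i,2}$ with the set of vertices active at the moment a vertex in $H_{i,2}$ became inactive, use the resulting degree bound $<4$ in that induced subgraph to guarantee a free color in $\{1,2,3,4\}$, and use the proper coloring $\varphi$ of $G[H_{i,2}]$ to ensure adjacent vertices are processed in distinct inner-loop iterations. Your version is more explicit (you spell out why $\varphi_{\text{new}}=\varphi$ on $H_{i,2}$ and set up the induction on $k$), but the underlying argument is identical to the paper's.
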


\begin{proof} Let $i \in \{1, ..., \beta\}$. Assume that in the start of the $i$-th iteration of the loop at line \ref{line:outer}, we have that $A_i$ is properly colored and that $V \setminus A_i$ is not colored. Let $u \in H_{i,2}$. There must be some iteration of the loop at line \ref{line:innerStep1} in which $u$ detects that the condition at line \ref{line:innerCond1} holds. As the set $A_i \cup H_{i,2}$ is equal to the set of all vertices that were active at the moment when $u$ became inactive in algorithm $\ref{AlgPartition}$, we get that the degree of $u$ in the graph $G[A_i \cup H_{i,2}]$ is less than $4$, and so it can be adjacent to at most $3$ colored neighbors. It follows that $u$ can always find a color at line \ref{line:ColorVertex} which is different from all of its colored neighbors. It is left to show that by the end of step $1$, we have $\text{C}[u] \ne \text{C}[v]$ for every two adjacent vertices $u$ and $v$ in $G[H_{i,2}]$. Let $u$ and $v$ be two adjacent vertices in $G[H_{i,2}]$. Since they are adjacent, we must have that $\varphi_{\text{new}}[v] \ne \varphi_{\text{new}}[u]$ and so we can assume with out loss of generality that $\varphi_{\text{new}}[v] > \varphi_{\text{new}}[u]$. This means that $u$ is colored before $v$ and so when $v$ is colored it must be with a color different from $u$. \end{proof}

Next, we show that it is always possible to find the required coloring at line \ref{line:ColorRemovable}. We start with the following lemma:

%\shiri{fix the labels of the lemmas}

\begin{lemma} For any $(i,k) \in \{1, ..., \beta\} \times \{1, ..., 4^5\}$ and $v \in V$, if $\text{L}[v] = (i,1)$ and $\varphi_{\text{new}}[v] = k$ then at the start of the $k$-th iteration of the loop in line \ref{alg2:inner} during the $i$-th iteration of the loop in line \ref{line:outer} we must have $\text{C}[u] = \; \perp$ for every vertex whose ID belongs to $\text{key}_{\text{new}}[v]$.
%Consider a vertex $v\in V$ such $v \in H_{i,1}$ and $\varphi_{\text{new}}[v] = k$ for some
 %$(i,k) \in \{1, ..., \beta\} \times \{1, ..., 4^5\}$.
%All the vertices whose ID belongs to $\text{key}_{\text{new}}[v]$ are not colored at the start of the $k$-th iteration of the loop in line \ref{alg2:inner} during the $i$-th iteration of the loop in line \ref{line:outer}. %vertex $v$ with $\text{L}[v] = (i,1)$ for some $i \in \{1, ..., \beta\}$ and $\varphi_{\text{new}}[v] = k$ for some $k \in \{1, ..., 4^5\}$ when the algorithm reaches Line \ref{line:ColorRemovable} in the $k$'th iteration of the inner loop,  in the $i$'th iteration of the outer loop then all the vertices in the cycle $\text{key}_{new}[v]$ represents are uncolored.
\label{lem:uncoloredCycle}\end{lemma}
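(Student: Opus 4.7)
The plan is to trace back the definition of $\text{key}_{\text{new}}[v]$ and $\varphi_{\text{new}}[v]$ to some witness vertex $u'$ on the same $4$-cycle, and then argue that every other vertex $u$ on that cycle ``sees'' this same witness when it computes its own $\text{P}[u]$; this will force $\varphi_{\text{new}}[u] \ge k$, from which the conclusion is immediate.

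First I would unpack line \ref{line:chooseCycle}: the pair $(\text{key}_{\text{new}}[v], \varphi_{\text{new}}[v]) = (\text{key}_{\text{new}}[v], k)$ was picked from $\text{P}[v]$, so by the collection rule at line \ref{line:collectCycle} there exists a vertex $u'$ with $\text{L}[u'] = \text{L}[v] = (i,1)$, with distance in $G$ from $v$ at most $2$, with $\text{key}[u'] = \text{key}_{\text{new}}[v]$, and with $\varphi[u'] = k$. In particular, $u'$ belongs to the $4$-cycle $C$ that $\text{key}[u']$ represents, which is the removable cycle detected in step $1$ of iteration $i$ of Algorithm \ref{AlgPartition}; and $v$ itself belongs to $C$ since $\text{ID}(v) \in \text{key}[u']$.

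Next I would show that any vertex $u$ whose ID lies in $\text{key}_{\text{new}}[v]$ (i.e.\ any $u \in V(C)$) satisfies $\text{L}[u] = (i,1)$ and is within distance $2$ from $u'$ in $G$. The label follows from the fact that all vertices of $C$ were active at the start of iteration $i$ (since $C \subseteq N(u')$ contains only active vertices) and each of them belongs to the removable $4$-cycle $C$, so each is assigned the label $(i,1)$ in step $1$; the distance bound follows because $|C|=4$. Consequently, when $u$ executes line \ref{line:collectCycle} it collects the pair $(\text{key}[u'], \varphi[u']) = (\text{key}_{\text{new}}[v], k)$, and since $\text{ID}(u) \in \text{key}[u']$ this pair is placed into $\text{P}[u]$. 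By the ``maximum color'' rule on line \ref{line:chooseCycle}, this forces $\varphi_{\text{new}}[u] \ge k$.

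Finally I would finish by observing how coloring proceeds in Algorithm \ref{AlgColorGraph}. Because $\text{L}[u] = (i,1)$, the variable $\text{C}[u]$ is never touched outside step $2$ of the $i$-th iteration of the outer loop: step $1$ of any outer iteration only colors vertices with label of the form $(\cdot,2)$, and the coloring rule at line \ref{line:innerCond2} requires $\varphi_{\text{new}}[u] = k'$ in the $k'$-th inner iteration. Since $\varphi_{\text{new}}[u] \ge k$, vertex $u$ is not colored before the $k$-th inner iteration of the $i$-th outer iteration, so at the start of that iteration $\text{C}[u] = \;\perp$, as required. The one subtlety I expect is to justify cleanly that the same witness $u'$ is visible from every vertex of $C$ with the same label $(i,1)$; this is where the bound $|C| = 4$ and the choice of collection radius $2$ at line \ref{line:collectCycle} are used in an essential way.
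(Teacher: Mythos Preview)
Your argument is correct and is essentially the paper's proof recast in direct form: the paper argues by contradiction (assume some $u$ on the cycle is already colored, deduce $\varphi_{\text{new}}[u] < k$, then contradict this by exhibiting the witness $x$ on the cycle with $\text{key}[x] = \text{key}_{\text{new}}[v]$ at distance $\le 2$ from $u$), whereas you directly exhibit the witness $u'$ and conclude $\varphi_{\text{new}}[u] \ge k$ for every $u$ on the cycle. The key ingredients --- that $\text{key}_{\text{new}}[v]$ originates from some $u'$ on the cycle, that all cycle vertices receive label $(i,1)$, and that $|C|=4$ together with the radius-$2$ collection at line \ref{line:collectCycle} makes $u'$ visible from every cycle vertex --- are identical in both proofs.
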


\begin{proof} Let $(i,k) \in \{1, ..., \beta\} \times \{1, ..., 4^5\}$ and $v \in V$ with $\text{L}[v] = (i,1)$ and $\varphi_{\text{new}}[v] = k$. Let $F$ be the set of all vertices whose ID belongs to $\text{key}_{\text{new}}[v]$, and for each such vertex $u \in F$, denote by $\text{C}_{i,k}[u]$ the value of $\text{C}[u]$ at the start of the $k$-th iteration of the loop in line \ref{alg2:inner} during the $i$-th iteration of the loop in line \ref{line:outer}. Assume towards a contradiction that there is a vertex $u \in F$ with $\text{C}_{i,k}[u] \ne \;\perp$.
We claim that $F \subseteq H_{i,1}$. To see this, note that as $\text{L}[v] = (i,1)$ then $\text{key}_{\text{new}}[v]$ is a removable cycle that was picked by some vertex $u'$
in the $i$'th iteration of Algorithm \ref{AlgPartition}.
This means, that all the vertices whose ID in $\text{key}_{\text{new}}[v]$ were active in the $i$'th iteration of Algorithm \ref{AlgPartition}.
Moreover, all these vertices belonged to a removable cycle in the $i$'th iteration of Algorithm \ref{AlgPartition} and therefore by construction $\text{L}[u'] = (i,1)$ for all $u' \in F$, that is,  $F \subseteq H_{i,1}$.
Since we assume that $\text{C}_{i,k}[u] \ne \;\perp$, it must be that $\varphi_{\text{new}}[u] < k$.
This means that the set $\text{P}[u]$ did not contain the pair $(\text{key}_{\text{new}}[v],\varphi_{\text{new}}[v])$, as otherwise $u$ wouldn't have picked
$(\text{key}_{\text{new}}[u], \varphi_{\text{new}}[u])$ as $\varphi_{\text{new}}[u] < \varphi_{\text{new}}[v] = k$.

Since $\text{key}_{\text{new}}[v]$ contains $\text{ID}(u)$ and $\text{L}[u] = \text{L}[v]$, it must be that no vertex in $H_{i,1}$ whose distance from $u$ is at most $2$ had its key equal to $\text{key}_{\text{new}}[v]$ at line \ref{line:collectCycle} as $u$ would have collected it.

However, we claim that $v$ could have got $\text{key}_{\text{new}}[v]$ only from a vertex $x$
whose distance from $u$ is at most $2$, which is a contradiction.
To see this, note that $\text{key}_{\text{new}}[v]$ must have been picked as $\text{key}_{\text{new}}[x]$ by a vertex $x$ whose ID belongs to
$\text{key}_{\text{new}}[v]$ in Algorithm \ref{AlgPartition}.
In other words, both $x$ and $u$ belong to the removable cycle that $\text{key}_{\text{new}}[v]$ represents. As the removable cycle is of length 4, then the distance from $u$ to $x$ is at most $2$, contradiction.\end{proof}

Our next goal is to show that by the end of the algorithm all vertices are colored and more over the coloring of all vertices is a proper coloring.

\begin{lemma}  For all $i \in \{1, ..., \beta\}$, if at the start of the second step of the $i$-th iteration of the loop at line \ref{line:outer}, we have that $A_i \cup H_{i,2}$ is properly colored and $V \setminus (A_i \cup H_{i,2})$ is not colored, then by the end of step $2$ of the $i$-th iteration of this same loop, we get that $A_i \cup H_{i,2} \cup H_{i,1}$ is properly colored and $V \setminus (A_i \cup H_{i,2} \cup H_{i,1})$ is not colored.\end{lemma}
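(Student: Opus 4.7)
The plan is to proceed by induction on the inner loop index $k$ of line~\ref{alg2:inner}, showing that at the end of inner iteration $k$ the set $A_i \cup H_{i,2} \cup \{u \in H_{i,1} : \varphi_{\text{new}}[u] \le k\}$ is properly colored while no vertex of $V \setminus (A_i \cup H_{i,2} \cup H_{i,1})$ has been assigned a color. The second half is immediate because the only assignment made in step~2 is at line~\ref{line:ColorRemovable}, guarded by $\text{L}[v] = (i,1)$.

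For the inductive step, fix $k$ and any $v \in H_{i,1}$ with $\varphi_{\text{new}}[v] = k$, and let $C := \text{key}_{\text{new}}[v]$. I would establish four structural facts about $C$. First, $V(C) \subseteq H_{i,1}$: $C$ was the key chosen by some $u' \in H_{i,1}$ during step~1 of iteration~$i$ of Algorithm~\ref{AlgPartition}, hence every vertex of $V(C)$ was active at that point and itself witnesses the same removable length-4 degree-$\le 4$ cycle, so each of them also labels itself $(i,1)$. Second, every $u \in V(C)$ has $\varphi_{\text{new}}[u] \ge k$: the pair $(C,k)$ lies in $\text{P}[u]$ (distance $\le 2$ in a 4-cycle, matching $L$-labels, and $\text{ID}(u)\in C$), and $u$ picks a maximum-color pair in line~\ref{line:chooseCycle}. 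Third, among vertices of $V(C)$ with $\varphi_{\text{new}}[u] = k$ each must satisfy $\text{key}_{\text{new}}[u] = C$, because any two cycles through $u$ are adjacent in ${\cal G}_i$ (they share the vertex $u$) and hence have distinct super-graph colors, making $C$ the unique color-$k$ cycle through $u$. Fourth, $V(C)$ is entirely uncolored at the start of iteration~$k$ by Lemma~\ref{lem:uncoloredCycle}.

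With these facts in hand I would invoke Lemma~\ref{lem:coloringRemovableCycle} with $\Delta = 4$ on the graph $H := G[V(C) \cup \{u : \text{C}[u] \ne \perp\}]$. Writing $V_{\ge i} := \bigcup_{i' \ge i}(H_{i',1}\cup H_{i',2})$ for the vertices active at the start of iteration~$i$ of Algorithm~\ref{AlgPartition}, both $V(C)$ and the currently colored vertices sit inside $V_{\ge i}$, and step~1 of that iteration guarantees $\Delta(C, G[V_{\ge i}]) \le 4$ (every $V_{\ge i}$-neighbor of a vertex of $V(C)$ is automatically within distance $3$ of $v$, so the local $N(\cdot)$ of Algorithm~\ref{AlgPartition} already sees it); hence $\Delta(C,H) \le 4$. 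The induced graph $H[V(C)] = G[V(C)]$ is triangle-free (as $G$ is) and of length $4$, hence neither complete nor an odd cycle, so $C$ is removable and Lemma~\ref{lem:coloringRemovableCycle} yields a proper $4$-coloring extension of $V(C)$. Since the $3$-hop neighborhood of any vertex of a length-$4$ cycle contains the full data that line~\ref{line:ColorRemovable} consumes, all vertices of $V(C)$ with $\varphi_{\text{new}}=k$ feed the same input to the deterministic coloring subroutine and therefore agree on the color assigned to each vertex of $V(C)$.

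The main obstacle lies in ruling out conflicts between two vertices $v_1, v_2$ colored in the same inner iteration but belonging to distinct cycles $C_1 \ne C_2$; this is where the super-graph coloring is essential. Because the coloring of ${\cal G}_i$ is proper, assigning $C_1$ and $C_2$ the same color forces them to be non-adjacent in ${\cal G}_i$, which by the definition of ${\cal E}_i$ means $V(C_1) \cap V(C_2) = \emptyset$ and no $G$-edge connects $V(C_1)$ with $V(C_2)$. In particular $v_1$ and $v_2$ are non-adjacent, so their independently chosen colors cannot clash. Iterating this argument from $k=1$ through $k=4^5$ (the palette size produced by Lemma~\ref{lem:degreeSuperGraph}) colors all of $H_{i,1}$ properly, which is exactly the conclusion of the lemma.
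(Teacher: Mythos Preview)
Your proof is correct and follows essentially the same approach as the paper's own proof: both rely on Lemma~\ref{lem:uncoloredCycle} to guarantee the chosen cycle is uncolored, Lemma~\ref{lem:coloringRemovableCycle} to extend the coloring, and the proper coloring of the super-graph ${\cal G}_i$ to rule out conflicts between vertices processed in the same inner iteration. Your presentation is somewhat more explicit---you organize the argument as an induction on $k$ and isolate the four structural facts about $C$---whereas the paper argues directly over pairs of adjacent vertices $u,v\in H_{i,1}$ and splits on whether $\varphi_{\text{new}}[u]=\varphi_{\text{new}}[v]$; but the underlying reasoning is the same (in particular, your fact~3 and your ``distinct cycles same color $\Rightarrow$ non-adjacent'' step are exactly the paper's argument that adjacent $u,v$ with equal $\varphi_{\text{new}}$ must satisfy $\text{key}_{\text{new}}[u]=\text{key}_{\text{new}}[v]$). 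Note also that your fact~2 essentially re-derives the content of Lemma~\ref{lem:uncoloredCycle}, so invoking that lemma separately as fact~4 is slightly redundant, though harmless.
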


\begin{proof} Let $i \in \{1, ..., \beta\}$ and $u \in H_{i,1}$. There must be some iteration of the loop at line \ref{alg2:inner} in which $u$ detects that the condition at line \ref{line:innerCond2} holds. As the set $A_i \cup H_{i,2} \cup H_{i,1}$ is equal to the set of all vertices that were active at the moment when the vertex $u$ became inactive, we get that the vertices whose ID in $\text{key}_{\text{new}}$ induce a removable cycle of degree $\le 4$ in $G[A_i \cup H_{i,2} \cup H_{i,1}]$. Since by Lemma \ref{lem:uncoloredCycle} this cycle is not colored, we get from Lemma \ref{lem:coloringRemovableCycle} that it is always possible to find a proper $4$-coloring of $N(\text{key}_{\text{new}}[u])$ which is consistent with the vertices that are already colored. It is left to show that by the end of step $2$, we have $\text{C}[u] \ne \text{C}[v]$ for every two adjacent vertices $u$ and $v$ in $G[H_{i,1}]$. Let $u$ and $v$ be two such vertices. We distinguish between two cases: $\varphi_{\text{new}}[v] = \varphi_{\text{new}}[u]$ and $\varphi_{\text{new}}[v] \ne \varphi_{\text{new}}[u]$. We claim that in the first case we must have $\text{key}_{\text{new}}[u] = \text{key}_{\text{new}}[v]$. Indeed, assume towards a contradiction that $\text{key}_{\text{new}}[u] \ne \text{key}_{\text{new}}[v]$. Since $\varphi_{\text{new}}[u] = \varphi_{\text{new}}[v]$, we must have $N(\text{key}_{\text{new}}[u]) \cap \text{key}_{\text{new}}[v] = \emptyset$ which is impossible as $u$ and $v$ are adjacent in $G$. It follows that $N_{i,k}(v) = N_{i,k}(u)$ and since each vertex encodes this graph in the same way, they must compute the same $4$-coloring of it (as they are using the same deterministic algorithm) and so their colors will be different. In the second case, we can assume without loss of generality that $\varphi_{\text{new}}[v] > \varphi_{\text{new}}[u]$. This means that $u$ is colored before $v$ and so when $v$ is colored it must be with a color which is different from the color of $u$.\end{proof}

By definition, we have $A_{i-1} = A_i \cup H_{i,1} \cup H_{i,2}$ for all $i \in \{1,...,\beta\}$, and so we can conclude the following:

\begin{corollary}  For all $i \in \{1, ..., \beta\}$, if $A_{i}$ is properly colored at the start of the $i$-th iteration of the loop at line \ref{line:outer}, then $A_{i-1}$ is properly colored at the end of the $i$-th iteration of this loop.\end{corollary}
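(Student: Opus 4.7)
The plan is simply to chain the two preceding lemmas for a fixed $i \in \{1,\ldots,\beta\}$. Assuming $A_i$ is properly colored at the start of the $i$-th iteration of the loop at line \ref{line:outer}, I first need to supply the missing hypothesis required by those lemmas: that $V \setminus A_i$ is \emph{not} colored at the start of that iteration. This is a bookkeeping observation about the main loop — the only lines of the algorithm that ever assign a non-$\perp$ value to $\text{C}[\cdot]$ are lines \ref{line:ColorVertex} and \ref{line:ColorRemovable}, and each of them is guarded by a condition of the form $\text{L}[v] = (i,j)$ matching the current loop index. Since iterations run from $\beta$ down to $1$, at the start of the $i$-th iteration the only vertices that can possibly have been colored are those with first label coordinate strictly greater than $i$, which by the definition of $A_i$ lie in $A_i$. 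Thus every vertex in $V \setminus A_i$ still has $\text{C}[v] = \; \perp$.

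Next I invoke the first of the two preceding lemmas on the state at the start of the $i$-th iteration. Its hypotheses ($A_i$ properly colored, $V \setminus A_i$ uncolored) are met, and its conclusion yields that after step 1 of this iteration, $A_i \cup H_{i,2}$ is properly colored and $V \setminus (A_i \cup H_{i,2})$ is uncolored. This is exactly the hypothesis needed to apply the second lemma, which then gives that at the end of step 2 (i.e.\ the end of the $i$-th iteration) the set $A_i \cup H_{i,2} \cup H_{i,1}$ is properly colored. By the identity $A_{i-1} = A_i \cup H_{i,1} \cup H_{i,2}$ observed just before the corollary statement, this is precisely the assertion that $A_{i-1}$ is properly colored at the end of the $i$-th iteration.

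I do not expect any real obstacle here: the entire content of the corollary has been pushed into the two lemmas, and the only thing to verify afresh is the "$V \setminus A_i$ uncolored" side-condition, which is immediate from inspecting the guards on the coloring lines of Algorithm \ref{AlgColorGraph}. No separate induction is needed inside the proof of the corollary itself, since the corollary is a single-step implication; an external induction on $i$ (from $\beta$ down to $1$, with base case given by the fact that $A_\beta = \emptyset$ is vacuously properly colored at the start of the loop) is what eventually propagates the hypothesis through all iterations, but that is left for whoever applies the corollary.
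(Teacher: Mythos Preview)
Your proposal is correct and follows the same approach as the paper: the corollary is obtained by chaining the two preceding lemmas together with the identity $A_{i-1} = A_i \cup H_{i,1} \cup H_{i,2}$. In fact you are slightly more careful than the paper, which states the corollary without proof and does not explicitly address the auxiliary hypothesis that $V \setminus A_i$ is uncolored at the start of the $i$-th iteration; your bookkeeping argument from the guards on lines~\ref{line:ColorVertex} and~\ref{line:ColorRemovable} correctly supplies this.
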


Since $A_{\beta} = \emptyset$ and $A_{0} = V$ we get that $V$ is properly $4$-colored by the end of the algorithm. The following simple lemma shows that the running time of the algorithm is $O(\log{n})$.

\begin{lemma} The total number of communication rounds of the algorithm is $O(\log{n})$. \label{lem:complexity}\end{lemma}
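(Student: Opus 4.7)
The plan is to bound the total round complexity by handling the three procedures separately — Algorithm \ref{AlgPartition} (the partition into levels), Algorithm \ref{AlgColorSuperGraph} (the auxiliary coloring used for scheduling), and Algorithm \ref{AlgColorGraph} (the actual $4$-coloring) — and show that each of them runs in $O(\log n)$ rounds (in fact Algorithm \ref{AlgColorSuperGraph} only in $O(\log^{*} n)$), so that their sum is $O(\log n)$.

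For Algorithm \ref{AlgPartition}, I would observe that the outer loop runs exactly $\beta = 1 + 70\lceil \log_{2} n \rceil = O(\log n)$ iterations, and that inside each iteration there are only two communication actions: collecting the labeled neighborhood up to distance $3$ in step 1 and up to distance $1$ in step 2. Each of these is performed in a constant number of rounds, so the partitioning algorithm takes $O(\log n)$ rounds in total.

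For Algorithm \ref{AlgColorSuperGraph}, the key observation is that every vertex $v$ participates in exactly one $(\Delta+1)$-coloring task, determined by its unique label $L[v] = (i,j)$. When $j = 2$, the underlying graph is $G[H_{i,2}]$, whose maximum degree is at most $3$ by construction of step 2 of Algorithm \ref{AlgPartition}; when $j = 1$, the underlying graph is the super graph ${\cal G}_{i}$, whose maximum degree is less than $4^{5}$ by Lemma \ref{lem:degreeSuperGraph}. In both cases $\Delta$ is an absolute constant, so the $(\Delta+1)$-coloring of Goldberg et al.\ terminates in $O(\log^{*} n)$ rounds; simulating ${\cal G}_{i}$ on $G$ by having the highest-identifier vertex of each key carry out the role of its super node only adds a constant factor, because each super node corresponds to a cycle of length $4$ and the simulation uses the polynomially-bounded identifier obtained by concatenating the four IDs of the cycle. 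Since the different tasks use disjoint vertex sets, they can be executed simultaneously, giving $O(\log^{*} n)$ rounds overall for Algorithm \ref{AlgColorSuperGraph}.

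For Algorithm \ref{AlgColorGraph}, the preparatory collection in line \ref{line:collectCycle} takes $O(1)$ rounds, after which the outer loop of line \ref{line:outer} iterates $\beta = O(\log n)$ times. Within each outer iteration, step 1 performs an inner loop of $5$ repetitions of distance-$1$ collection, and step 2 performs an inner loop of $4^{5}$ repetitions of distance-$3$ collection; both inner loops contribute $O(1)$ communication rounds per outer iteration, and the local computation at line \ref{line:ColorRemovable} requires no further communication since the needed neighborhood was already gathered. Hence Algorithm \ref{AlgColorGraph} also costs $O(\log n)$ rounds, and combining the three bounds yields $O(\log n) + O(\log^{*} n) + O(\log n) = O(\log n)$. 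The only subtle point I would need to be careful about is the parallel execution of the many $(\Delta+1)$-coloring instances in Algorithm \ref{AlgColorSuperGraph}, which works precisely because the sets $\{H_{i,j}\}_{(i,j)\in S}$ partition $V$, so no vertex is asked to simulate two tasks at once.
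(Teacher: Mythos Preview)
Your proposal is correct and follows essentially the same approach as the paper's proof, which simply observes that each iteration of the outer loops in Algorithms \ref{AlgPartition} and \ref{AlgColorGraph} uses a constant number of rounds (hence $O(\log n)$ total for each) and that Algorithm \ref{AlgColorSuperGraph} requires $O(\log^{*} n)$ rounds. Your version is considerably more detailed---spelling out the constant-distance collections, the bounded degrees in $G[H_{i,2}]$ and ${\cal G}_i$, and the disjointness of the $H_{i,j}$ enabling parallel execution---but the underlying argument is identical.
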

\begin{proof} Each iteration of the outer loop in algorithm \ref{AlgPartition} and \ref{AlgColorGraph} requires a constant number of communication rounds and so each of them requires $O(\log{n})$ communication rounds in total. Algorithm \ref{AlgColorSuperGraph} requires $O(\log^*{n})$ communication rounds. \end{proof}

%The next lemma shows that the message size sent by each node in each round is $O(\log n)$.

\section{Finding a 6-coloring of planar graphs}
\label{sec:6-coloring}

The general framework of the 6-coloring algorithm is similar to the algorithm given in the previous section.
Similarly, the algorithm consists of three parts.
The first part, given in Algorithm \ref{Alg6ColorPartition}, partitions the vertices into $O(\log{n})$ disjoint sets $H_{i,j}$ for each $(i,j) \in \{1, ..., \beta\} \times \{1,2\}$ (for some $\beta= O(\log{n})$ to be fixed later on). This is done by iteratively removing vertices that are part of some removable cycle or that are of degree less than $6$.
The second part, given in Algorithm \ref{Alg6ColorSuperGraph}, assigns each vertex $v$ a color $\varphi[v]$. Similarly to the previous section this colors are not the final colors but are rather used to synchronize between the different vertices in each $H_{i,j}$.
Finally, Algorithm \ref{Alg6ColorGraph} assigns each vertex its final color. \smallskip

We start by giving some technical lemmas that will be used in our analysis later on and then give the formal pseudo code of Algorithms \ref{Alg6ColorPartition}, \ref{Alg6ColorSuperGraph} and \ref{Alg6ColorGraph}. The next auxiliary lemma will be crucial in our analysis:

\begin{lemma} For any graph $G$ and any two different, not edge-disjoint, simple cycles $C_{1}$ and $C_{2}$ in $G$, if $G$ does not contain any removable cycle of length at most $|C_{1}| + |C_{2}|$, then the graph $G[V(C_{1}) \cup V(C_{2})]$ is isomorphic to a complete graph. \label{lem:edgeDisCycles}\end{lemma}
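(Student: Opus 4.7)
The plan is proof by contradiction: I will assume $G[V(C_1)\cup V(C_2)]$ is not isomorphic to a complete graph and exhibit a removable cycle in $G$ of length at most $|C_1|+|C_2|$, contradicting the hypothesis. First observe that, since $|C_1|,|C_2|\le |C_1|+|C_2|$, both $C_1$ and $C_2$ are themselves non-removable, so each $G[V(C_i)]$ is isomorphic to a complete graph or to an odd cycle. A key preliminary observation is that $G[V(C_1)\cup V(C_2)]$ itself cannot be isomorphic to an odd cycle: a cycle graph contains a unique simple cycle, whereas $G[V(C_1)\cup V(C_2)]$ contains the two distinct simple cycles $C_1$ and $C_2$. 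Hence, any non-removable cycle whose vertex set equals $V(C_1)\cup V(C_2)$ immediately forces $G[V(C_1)\cup V(C_2)]$ to be complete.

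To construct such a cycle, I would fix a common edge $e=\{u,v\}$ and set $P_i = C_i - e$, which is a $u$-$v$ path of length $|C_i|-1$ inside $C_i$. When $P_1$ and $P_2$ are internally vertex-disjoint, the concatenation $C_3 := P_1\cup P_2$ is a simple cycle of length $|C_1|+|C_2|-2$ with $V(C_3)=V(C_1)\cup V(C_2)$; by the hypothesis $C_3$ is non-removable, and combined with the preliminary observation this forces $G[V(C_3)]=G[V(C_1)\cup V(C_2)]$ to be complete, the desired contradiction. When $P_1$ and $P_2$ share an internal vertex $w\in V(C_1)\cap V(C_2)\setminus\{u,v\}$, the spanning-cycle approach may break down, and the plan is instead to exhibit a short simple cycle $C'$ through $w$ -- for instance, by walking from $w$ along $C_1$ to the nearest subsequent vertex $z\in V(C_1)\cap V(C_2)$ and returning from $z$ to $w$ along $C_2$, giving a cycle of length at most $|C_1|+|C_2|$. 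The structural constraints on each $G[V(C_i)]$ (no chords when it is an odd cycle, all pairs present when it is complete) should then force $G[V(C')]$ to be neither complete nor an odd cycle, so that $C'$ is removable.

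The main obstacle will be the shared-internal-vertex case, which requires a case analysis across the four combinations ``$G[V(C_1)]$ complete or odd cycle'' times ``$G[V(C_2)]$ complete or odd cycle''. When exactly one of them is an odd cycle of length at least $5$, its no-chord property together with the clique property of the other forces all pairs in $V(C_1)\cap V(C_2)$ to be cycle-edges of the odd $C_i$, which (as a cycle of length at least $5$ has no triangle) forces $V(C_1)\cap V(C_2)=\{u,v\}$ and returns us to the internally disjoint case. The genuinely delicate subcases are ``both complete with overlap beyond $\{u,v\}$'' (where one routes through the two cliques via a shared vertex to build a Hamiltonian-type $C_3$ of length $|V(C_1)\cup V(C_2)|\le |C_1|+|C_2|$) and ``both odd cycles with a shared internal vertex'' (where the no-chord property on both sides kills enough chords of $G[V(C')]$ to make $C'$ removable); these are where essentially all of the technical subtlety of the proof resides, while the disjoint case is clean and direct.
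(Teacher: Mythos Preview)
Your handling of the internally-disjoint case, the ``one complete, one odd cycle'' reduction, and the Hamiltonian-cycle construction in the ``both complete'' case are all sound. The genuine gap is in the subcase where both $G[V(C_i)]$ are chordless odd cycles of length $\ge 5$ sharing an internal vertex. Your proposed $C'$ need not be removable: if the first vertex $z$ you meet along $C_1$ from $w$ is already a $C_1$-neighbor of $w$, then since $w,z\in V(C_2)$ and $G[V(C_2)]$ is chordless, the edge $\{w,z\}$ is also a $C_2$-edge, so the only nontrivial $C_2$-return path from $z$ to $w$ is the rest of $C_2$, giving $C'=C_2$, an odd cycle and hence not removable. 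This really occurs: take $C_1=(u,v,w,a,b,c,d)$ and $C_2=(u,v,w,a,p)$ with $p\notin V(C_1)$ and no further edges among these eight vertices beyond the two cycles. Both induced subgraphs are chordless odd cycles, $w$'s $C_1$-neighbors $v$ and $a$ both lie in $V(C_2)$, and your walk returns $C_2$ in either direction.

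The paper takes a different route that avoids this case analysis. It first proves, as a separate claim, that under the hypothesis each $G[V(C_i)]$ must already be complete --- so your ``both odd cycles'' case is shown to be impossible rather than handled directly. The device is a parity trick: pick a maximal $C_2$-subpath $p$ whose interior lies outside $V(C_1)$, with attaching vertices $x\neq y$ in $V(C_1)$; since $|C_1|$ is odd, exactly one of the two $x$--$y$ arcs of $C_1$ combines with $p$ into an \emph{even} cycle $C'$ of length at most $|C_1|+|C_2|$, and an even non-removable cycle must induce a clique. A bootstrap step then produces an even cycle of length $|C_1|+1$ spanning $V(C_1)$, forcing $G[V(C_1)]$ to be complete. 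Once both sides are cliques, every cross-edge is obtained from a single $4$-cycle through two common vertices --- lighter than building a Hamiltonian cycle. The same parity device, incidentally, is exactly what patches your gap: in the example above the even cycle is $(a,p,u,d,c,b)$, and since five of its six vertices lie on the chordless $7$-cycle $C_1$, its induced subgraph cannot be complete, so it is removable.
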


The proof of the lemma is a quite technical. We first prove the following claim:

\begin{claim} For any graph $G$ and any two different, not edge-disjoint, simple cycles $C_{1}$ and $C_{2}$ in $G$, if $G$ does not contain any removable cycle of length at most $|C_{1}| + |C_{2}|$, then both $G[V(C_{1})]$ and $G[V(C_{2})]$ are isomorphic to a complete graph. \label{claim:EachCycleIsClique} \end{claim}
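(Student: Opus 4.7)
The plan is to argue by contradiction: assume $G[V(C_1)]$ is not a complete graph, and construct a removable cycle of length at most $|C_1|+|C_2|-2$ in $G$, contradicting the hypothesis. Since $|C_1|\le|C_1|+|C_2|$, the cycle $C_1$ itself cannot be removable, so by the definition of removable, $G[V(C_1)]$ is either a complete graph or an odd cycle. Ruling out the first by assumption, and noting that a cycle-graph cannot properly contain another cycle while $G[V(C_1)]$ contains $C_1$ as a subgraph, I conclude $G[V(C_1)]=C_1$ with $|C_1|$ odd and $|C_1|\ge 5$; in particular, $C_1$ has no chords in $G$.

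Next I would pick any shared edge $e=\{u,v\}$ and let $P$ be the $u$-$v$ path in $C_2$ avoiding $e$. First I observe that $P$ must contain a vertex outside $V(C_1)$: otherwise $E(P)\subseteq E(G[V(C_1)])=E(C_1)$, so $P$ would be the unique $u$-$v$ arc of $C_1$ avoiding $e$, forcing $C_2=C_1$ and contradicting $C_1\neq C_2$. Hence I can locate two consecutive vertices $a,b$ of $V(C_1)$ along $P$ such that the $a$-$b$ subpath $P_{ab}\subseteq P$ has length at least $2$ and all internal vertices outside $V(C_1)$. Let $R_1,R_2$ be the two $a$-$b$ arcs of $C_1$, so $|R_1|+|R_2|=|C_1|$. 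I will combine $P_{ab}$ with an appropriate $R\in\{R_1,R_2\}$ to form $W:=P_{ab}\cup R$, which is a simple cycle because $P_{ab}$'s internal vertices sit outside $V(C_1)$ while $R\subseteq V(C_1)$, so the two arcs meet only at $a,b$.

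The construction then splits on whether $\{a,b\}\in E(C_1)$. In case (i), $\{a,b\}\notin E(C_1)$, so $|R_1|,|R_2|\ge 2$; since $|C_1|$ is odd, $|R_1|$ and $|R_2|$ have opposite parities, so I can pick $R$ with $|W|=|P_{ab}|+|R|$ even. Then $a,b$ are non-adjacent in $G$ (because the only edges inside $V(C_1)$ are $C_1$-edges), so $G[V(W)]$ is not complete; and if $G[V(W)]$ were a cycle-graph at all, an edge count ($|E(G[V(W)])|=|V(W)|=|E(W)|$, with $E(W)\subseteq E(G[V(W)])$) would force $G[V(W)]=W$, which is even, not an odd cycle. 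In case (ii), $\{a,b\}\in E(C_1)$, so I take $R_2$ to be the long $C_1$-arc of length $|C_1|-1$ (which spans all of $V(C_1)$) and set $W:=P_{ab}\cup R_2$. Then $V(W)\supseteq V(C_1)$, so $G[V(W)]$ contains the induced subgraph $G[V(C_1)]=C_1$; hence $G[V(W)]$ is not complete (since $C_1$ is not) and not a cycle-graph either (it would properly contain the cycle $C_1$ while having extra vertices from $P_{ab}$). In both cases $|W|\le|C_1|+|C_2|-2<|C_1|+|C_2|$, yielding the contradiction. By swapping the roles of $C_1$ and $C_2$, the same argument gives that $G[V(C_2)]$ is complete.

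The delicate point is case (i): without the parity trick, $W$ could accidentally come out odd and thereby fail to be removable, so the oddness of $|C_1|$ (which gives $R_1,R_2$ of opposite parities) is exactly the leverage needed. Case (ii) is cleaner because absorbing the full chordless odd cycle $C_1$ into $V(W)$ instantly rules out $G[V(W)]$ being complete or being a single cycle.
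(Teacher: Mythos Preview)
Your argument is correct. It differs in structure from the paper's proof, so a brief comparison is worthwhile.

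The paper proceeds directly (not by contradiction): for $|C_1|$ even it is immediate; for $|C_1|$ odd it splices a maximal $C_2$-segment lying outside $V(C_1)$ with one of the two $C_1$-arcs to obtain an \emph{even} cycle $C'$, which therefore induces a clique; it then uses a chord coming from that clique to build a second even cycle $C''$ of length $|C_1|+1$ with $V(C_1)\subseteq V(C'')$, so $G[V(C'')]$ is complete and hence so is $G[V(C_1)]$.

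Your route is more economical: after reducing to the chordless odd case $G[V(C_1)]=C_1$ with $|C_1|\ge 5$, you build a single cycle $W$ and argue removability directly. The case split on whether $\{a,b\}\in E(C_1)$ neatly replaces the paper's two-step $C'\to C''$ bootstrapping: in case~(i) the parity choice makes $W$ even while the missing chord $\{a,b\}$ certifies non-completeness; in case~(ii) you absorb all of $V(C_1)$ into $V(W)$, so the chordless $5^{+}$-cycle $C_1$ sitting inside $G[V(W)]$ simultaneously rules out completeness and the cycle-graph possibility. What the paper's approach buys is that it never has to verify ``not an odd cycle'' for any constructed cycle (its cycles are always even, so non-removability forces a clique), whereas you handle that via the parity trick and the extra-edge/proper-subcycle observation. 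Both arguments yield the bound $|W|\le |C_1|+|C_2|-2$, comfortably within the hypothesis.
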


\begin{proof} Let $C_{1}$ and $C_{2}$ be two different, not edge-disjoint, simple cycles in a graph $G$, and assume that $G$ does not contain any removable cycle of length at most $|C_{1}| + |C_{2}|$. For each $i \in \{1,2\}$, let $H_{i} = (V(C_{i}), E(C_{i}))$ be the cycle graph $C_{i}$.

For convenience, we only prove the claim for $G[V(C_{1})]$ (as the proof for $G[V(C_{2})]$ is symmetric). Clearly, if $|C_{1}|$ is even, then $G[V(C_{1})]$ is isomorphic to a complete graph as $G$ does not contain any removable cycle of length $|C_{1}|$. So it is left to show that $G[V(C_{1})]$ is isomorphic to a complete graph when $|C_{1}|$ is odd. To this end, we consider separately the cases $V(C_{2}) \subseteq V(C_{1})$ and $V(C_{2}) \not\subseteq V(C_{1})$.

If $V(C_{2}) \subseteq V(C_{1})$, then every edge in $E(C_{2})$ must also be in $G[V(C_{1})]$. This means that $G[V(C_{1})]$ contains two different cycles $C_{1}$ and $C_{2}$, and so it cannot be isomorphic to a cycle graph. It follows that $G[V(C_{1})]$ must be isomorphic to a complete graph as $G$ does not contain any removable cycle of length $|C_{1}|$.

If $V(C_{2}) \not\subseteq V(C_{1})$, then there exist a vertex $v \in V(C_{2}) \setminus V(C_{1})$.
Let $P = (V_{P},E_{P})$ be the connected component of $v$ in $H_{2}[V(C_{2}) \setminus V(C_{1})]$. Clearly, $V(C_{1}) \cap V(C_{2}) \ne \varnothing$ (as $C_{1}$ and $C_{2}$ are not edge-disjoint and so also not vertex-disjoint) which implies that $P$ is a connected proper subgraph of $H_{2}$ (that is, $P$ is a subgraph of $H_2$ such that $P \neq H_2$).
Now, $P$ is a connected proper subgraph of a cycle graph, and so it must be isomorphic to a path graph.
Let $k$ be the number of vertices in $V_P$. Note that $1 \le k < |C_{2}|$.
Let $p = (v_{1},...,v_{k})$ be the simple path that $P$ is isomorphic to.

We say that a vertex $u \in V(C_{2}) \cap V(C_{1})$ is an intersection point of $p$ with $C_{1}$ if either $\{u,v_{1}\} \in E(C_{2})$ or $\{u,v_{k}\} \in E(C_{2})$. We claim that there exist two different intersection points of $p$ with $C_{1}$.
%Let $x$ be a neighbor of $v_1$ in $C_2$ such that $x \in V(C_1)$ or null if no such vertex exists.
%Let $y$ be a neighbor of $v_k$ in $C_2$ such that $x \in V(C_1)$ or null if no such vertex exists.
%We claim that both $x$ and $y$ exists and moreover $x \neq y$.
Indeed, if $k = 1$, then $v_{1}$ has two different neighbors $x$ and $y$ in $H_{2}$ (since $|C_{2}| > 2$), and these neighbors also belong to $V(C_{1})$ as $P$ is the maximal connected component of $v$ in $H_{2}[V(C_{2}) \setminus V(C_{1})]$.
Otherwise, $k > 1$, and so $v_{1} \ne v_{k}$. Since $H_{2}$ is a cycle graph with $> 2$ vertices, each one of the vertices in $\{v_{1}, v_{k}\}$ has a neighbor in $H_{2}[V(C_{2}) \setminus V_{p}]$, and this neighbor also belongs to $V(C_{1})$ as $P$ is the maximal connected component of $v$ in $H_{2}[V(C_{2}) \setminus V(C_{1})]$.
Denote these neighbors by $x$ and $y$ (where $y$ is the neighbor of $v_{1}$ and $x$ is the neighbor of $v_{k}$), and assume towards a contradiction that $x = y$. The sequence $\tilde{C} = (v_{1},...v_{k},x)$ is a simple cycle of length $> 2$ in the graph $H_{2}$, and so it must be that $H_{2} = (V(\tilde{C}),E(\tilde{C}))$ which means that $|V(C_{1}) \cap V(C_{2})| = 1$ and so $C_{1}$ and $C_{2}$ must be edge-disjoint which is, of course, a contradiction (since assume that $C_1$ and $C_2$ are not edge disjoint) so $x \ne y$.
\begin{figure}[!ht]
	\centering
	\includegraphics[trim={0 0.3cm 12.9cm 0.1cm},clip]{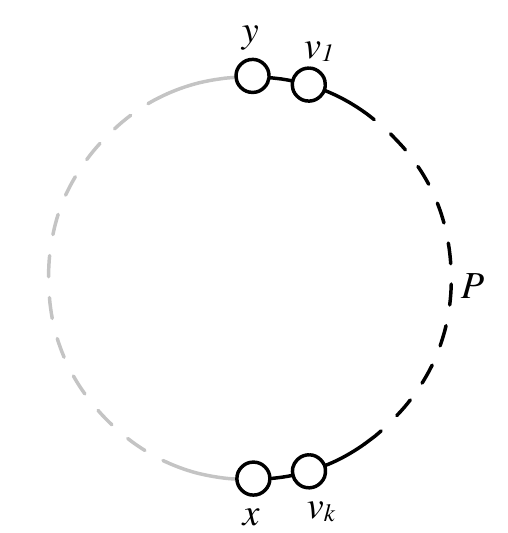}
	\caption{An illustration of the cycle graph $H_2$. The path $P$ is the maximal connected of $v$ in the graph $H_2[V(C_2) \setminus V(C_1)]$ where the vertices $v_1$ and $v_k$ are its endpoints (possibly $v_1 = v_k$). The vertices $x$ and $y$ are the intersection points of $p$ with $C_1$.}
\end{figure}

Now, since $H_{1}$ is a cycle graph with $> 2$ vertices, there exist in $H_{1}$ two simple, edge-disjoint paths $p_{1}$ and $p_{2}$ such that each of them goes from $x$ to $y$ and has $>1$ vertices.
Since $|C_{1}|$ is odd, exactly one of the paths $p_{1}$ and $p_{2}$ is of odd length and the other is of even length.
Therefore, exactly one of the sequences $p \circ p_{1}$ and $p \circ p_{2}$ is a simple cycle of even length greater than $2$
(note that $V(p_1) \cap V(p) = \emptyset$ and $V(p_2) \cap V(p) = \emptyset$ therefore both $p \circ p_{1}$ and $p \circ p_{2}$ are simple cycles).
\begin{figure}[!ht]
	\centering
	\includegraphics[trim={6cm 0.3cm 6.3cm 0.1cm},clip]{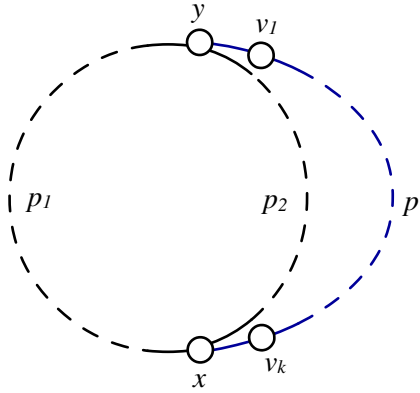}
	\caption{The cycle graph $H_1$ is colored in black. The simple cycle $C_1$ is decomposed into two different edge-disjoint paths $p_1$ and $p_2$. The simple cycle $C'$ is chosen to be either $p \circ p_1$ or $p \circ p_2$ (the one with even length).}
\end{figure}
Denote this cycle by $C'$. Since the length of $C'$ is at most $|C_{1}| + |C_{2}|$, it must be that the graph $G[V(C')]$ is isomorphic to a complete graph (again as $G$ does not contain any removable cycle of length at most $|C_{1}| + |C_{2}|$).

%Let $u_{1}$ be the vertex in $V_{p}$ which is adjacent to $x$ in $C'$, and
It is easy to see that $x$ has exactly one neighbor in $C'$ which also belongs to $V_{p}$. This vertex is, of course, $v_k$. Let $u_{2}$ be the other neighbor of $x$ in $C'$. Since $G[V(C')]$ is isomorphic to a complete graph there is an edge between $v_k$ and $u_{2}$.
Now, let $q$ be the longer path from $x$ to $u_{2}$ in the graph $H_{1}$ (note that both $x$ and $u_2$ belong to $H_1$ and, moreover, they are connected by an edge in $H_1$ so the longer path is the path from $x$ to $u_2$ that is not the direct edge from $x$ to $u_2$).

We claim that the sequence $C'' = (v_k) \circ q$ is a simple cycle of length $|C_{1}| + 1$. To see this, note first that there is an edge between $v_k$ and $x$ and an edge between $u_2$ and $v_k$ (as proven above).
\begin{figure}[!ht]
	\centering
	\includegraphics[trim={12.5cm 0.1cm 0.1cm 0.1cm},clip]{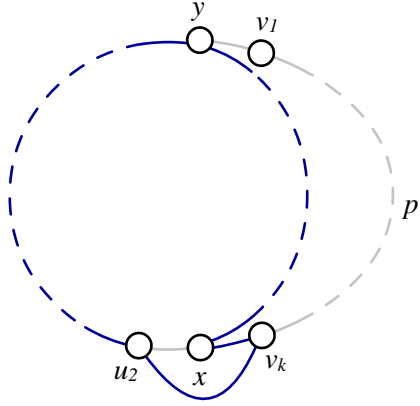}
	\caption{The vertex $x$ has two neighbors in $C'$ which are $v_k$ and $u_2$ (where $u_2$ must belong to $V(C_1)$ as $x$ has in $C'$ at least one neighbor in $V(C_1)$ and $v_k \not\in V(C_1)$). The blue path that goes from $x$ to $y$ and then continues to $u_2$ (where it might be that $u_2 = y$) is the path $q$, that is, the longer path from $x$ to $u_2$ in $H_1$. The simple cycle $C'' = (v_k) \circ q$ is colored in blue.} \end{figure}
The sequence $C'' = (v_k) \circ q$ is composed of the edge $\{v_k,x\}$, the path $q$ and then the edge $\{u_2,v_k\}$. We get that $C''$ is a simple cycle of length $|q| + 2$. Note also that $|q| = |C_1| -1$ (as the path $q$ consists of the entire cycle $C_1$ but the edge $\{x,u_2\}$). As $|C_1|$ is odd we get that $|C''|$ is even, and we also have $|C''| \le |C_{1}| + |C_{2}|$. It follows that $G[V(C'')]$ must be isomorphic to a complete graph and so also $G[V(C_{1})]$ (as $V(C_{1}) \subseteq V(C'')$). \end{proof}

\begin{proofof}{Lemma~\ref{lem:edgeDisCycles}} Let $u$ and $v$ be two different vertices from $V(C_{1}) \cup V(C_{2})$. If there is a cycle (either $C_1$ or $C_2$) to which both $u$ and $v$ belong, then by Claim \ref{claim:EachCycleIsClique}, $\{u,v\} \in E$. Otherwise, we can assume, without loss of generality, that $u \in V(C_{1}) \setminus V(C_{2})$ and $v \in V(C_{2}) \setminus V(C_{1})$.

Since $C_1$ and $C_2$ are not edge-disjoint, we must have $|V(C_1) \cap V(C_2)|\ge 2$. In other words, there are two different vertices $x$ and $y$ in $V(C_1) \cap V(C_2)$.
Moreover, note that $x \neq u$ and $y \neq u$, as $u \notin V(C_{2})$ and $x,y \in V(C_{2})$.
Similarly, $x \neq v$ and $y \neq v$.
We get that $u,v,x$ and $y$ are four different vertices.
The vertices $x,y$ and $u$ satisfy $x,y,u \in V(C_1)$, therefore by Claim \ref{claim:EachCycleIsClique} the graph $G$ contains the edges $\{u,y\}$ and $\{u,x\}$.
Similarly, we can show that the graph $G$ contains the edges $\{v,y\}$ and $\{v,x\}$. It follows that $(u,x,v,y)$ is a simple cycle in $G$ whose length is $4$ which is at most $2 + 2 < |C_{1}| + |C_{2}|$ and so $G[\{u,x,v,y\}]$ must be isomorphic to a complete graph which means that $\{u,v\} \in E$ as $G$ does not contain removable cycles of length at most $|C_{1}| + |C_{2}|$.
 %We complete the proof by showing that $G[V(C_{1}) \cup V(C_{2})]$ is isomorphic to a complete graph. Let $u$ and $v$ be two different vertices from $V(C_{1}) \cup V(C_{2})$. If there is a cycle to which both $u$ and $v$ belong, then by what we have just showed, $\{u,v\} \in E$. Otherwise, we can assume, without loss of generality, that $u \in V(C_{1}) \setminus V(C_{2})$ and $v \in V(C_{2}) \setminus V(C_{1})$. Let $P$ be the connected component of $u$ in $H_{1}[V(C_{1}) \setminus V(C_{2})]$. By using a similar argument as before, we get that $P$ is isomorphic to a simple path $p$, and that there exist two different intersection points $x$ and $y$ in $V(C_{1}) \cap V(C_{2}) $. Now, since $G[V(C_{1})]$ is isomorphic to a complete graph, we must have $\{u,x\} \in E$ and $\{u,y\} \in E$. Similarly, since $G[V(C_{2})]$ is isomorphic to a complete graph, we must have $\{v,x\} \in E$ and $\{v,y\} \in E$. It follows that $(u,x,v,y)$ is a simple cycle in $G$ whose length is $4$ which is at most $2 + 2 < |C_{1}| + |C_{2}|$ and so $G[\{u,x,v,y\}]$ must be isomorphic to a complete graph which means that $\{u,v\} \in E$.
%
\end{proofof}

\medskip

From this lemma it easy to conclude the following lemmas:

\begin{lemma} For any planar graph $H$ and any two different simple cycles $C_1$ and $C_2$ in $H$ such that one of them is of length $5$ and the other is of length at most 5, if $H$ does not contain any removable cycle of length at most $10$ then $C_1$ and $C_2$ are edge-disjoint. \label{lem:cycleIsDisjoint}\end{lemma}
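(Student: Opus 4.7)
The plan is to proceed by contradiction: assume $C_1$ and $C_2$ share at least one edge, and derive a non-planar substructure.

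The first step is to set up the application of Lemma~\ref{lem:edgeDisCycles}. Since one of the cycles has length exactly $5$ and the other has length at most $5$, we have $|C_1| + |C_2| \le 10$. The hypothesis of the present lemma guarantees that $H$ contains no removable cycle of length at most $10$, which is precisely the assumption needed by Lemma~\ref{lem:edgeDisCycles}. Applying it to the supposedly non-edge-disjoint pair $C_1, C_2$ yields that $H[V(C_1) \cup V(C_2)]$ is isomorphic to a complete graph.

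The second step is a planarity contradiction. Since $V(C_1) \subseteq V(C_1) \cup V(C_2)$ and $|V(C_1)| \ge 5$ (as one of the cycles has length $5$), the induced subgraph $H[V(C_1) \cup V(C_2)]$ is a complete graph on at least $5$ vertices, hence contains $K_5$ as a subgraph. This contradicts the planarity of $H$, so $C_1$ and $C_2$ must be edge-disjoint.

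I do not anticipate any real obstacle here: Lemma~\ref{lem:edgeDisCycles} does essentially all the work, and the only things one must verify are the numerical bound $|C_1|+|C_2| \le 10$ and the trivial observation that $|V(C_1) \cup V(C_2)| \ge 5$ because one cycle already contributes $5$ distinct vertices. The entire proof should therefore be just a few lines.
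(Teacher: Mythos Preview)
Your proposal is correct and follows essentially the same argument as the paper: assume the cycles are not edge-disjoint, apply Lemma~\ref{lem:edgeDisCycles} (using $|C_1|+|C_2|\le 10$) to conclude $H[V(C_1)\cup V(C_2)]$ is complete, and then obtain a $K_5$ subgraph contradicting planarity. The only cosmetic point is that when you write ``$|V(C_1)|\ge 5$'' you are implicitly assuming $C_1$ is the length-$5$ cycle; either say this without loss of generality or simply observe directly that $|V(C_1)\cup V(C_2)|\ge 5$ because one of the two cycles has $5$ vertices.
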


\begin{proof} Let $H$ be a planar graph that does not contain any removable cycle of length at most $10$, and let $C_1$ and $C_2$ be two different simple cycles in $H$ such that one of them is of length $5$ and the other is of length at most $5$. Assume towards a contradiction that $C_1$ and $C_2$ are not edge-disjoint. It follows from Lemma \ref{lem:edgeDisCycles} that $H[V(C_1) \cup V(C_2)]$ must be isomorphic to a complete graph, and so $H$ contains a $5$-clique as a subgraph which is a contradiction to the fact that $H$ is planar.\end{proof}

\begin{lemma} For any planar graph $H$ and any $4$-clique $K$ in $H$, if $H$ does not contain any removable cycle of length at most $10$ then
\begin{enumerate}
	\item \label{lem:case1} Every simple cycle $C$ in $H$ of length $5$ does not share any edge with $K$.
	\item \label{lem:case2} Every $4$-clique $K'$ in $H$ does not share any edge with $K$ unless $K = K'$.
	\item \label{lem:case3} Every simple cycle $C$ in $H$ of length $3$ does not share any edge with $K$ unless $V(C) \subseteq V(K)$.
\end{enumerate} \label{lem:cliqueIsDisjoint}\end{lemma}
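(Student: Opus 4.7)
The plan is to prove each of the three parts by contradiction, producing in each case a $K_5$ subgraph of $H$ and so contradicting the planarity of $H$. The workhorse throughout is Lemma \ref{lem:edgeDisCycles}: whenever two distinct simple cycles in $H$ share an edge and the sum of their lengths is at most $10$, the induced subgraph on the union of their vertex sets is complete (using the hypothesis that $H$ has no removable cycle of length at most $10$). So the game in each part is to exhibit two short cycles through the putative shared edge whose vertex union already contains at least five distinct vertices.

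For part \ref{lem:case1}, suppose a simple $5$-cycle $C$ and the clique $K$ share an edge $\{a,b\}$; pair $C$ with the triangle $(a,b,c)$ on any third vertex $c\in V(K)\setminus\{a,b\}$, which exists as a cycle in $H$ because $K$ is a clique. The lengths sum to $8\le 10$, so Lemma \ref{lem:edgeDisCycles} applies, and the resulting clique contains at least the five vertices of $C$.

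For part \ref{lem:case2}, given a second $4$-clique $K'\neq K$ sharing edge $\{a,b\}$ with $K$, write $V(K)=\{a,b,c_1,c_2\}$ and $V(K')=\{a,b,d_1,d_2\}$, and pair the simple $4$-cycles $(a,b,c_1,c_2)$ and $(a,b,d_1,d_2)$; both exist by the clique property and both traverse the edge $\{a,b\}$. The lengths sum to $8\le 10$, and since $K\neq K'$ their vertex union $V(K)\cup V(K')$ has at least five elements, again giving a forbidden $K_5$.

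For part \ref{lem:case3}, let $C=(a,b,e)$ be a triangle sharing $\{a,b\}$ with $K$ such that $e\notin V(K)=\{a,b,c,d\}$, and pair $C$ with the simple $4$-cycle $(a,b,c,d)$ inside $K$. The lengths sum to $7\le 10$, and the vertex union is exactly $\{a,b,c,d,e\}$, which Lemma \ref{lem:edgeDisCycles} forces to induce a $K_5$ in $H$. The main subtlety of the whole argument is just the case-by-case choice of partner cycle — triangles suffice in parts \ref{lem:case1} and \ref{lem:case3}, but in part \ref{lem:case2} one must move up to $4$-cycles, because two triangles sharing an edge span only four vertices and cannot on their own produce a $K_5$.
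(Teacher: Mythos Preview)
Your proof is correct and follows essentially the same approach as the paper: in each part you pair a short cycle through the shared edge with a suitable partner cycle inside the clique(s), apply Lemma~\ref{lem:edgeDisCycles}, and obtain a $K_5$ contradicting planarity. The only cosmetic difference is in part~\ref{lem:case1}, where you pair the $5$-cycle with a triangle $(a,b,c)$ from $K$ and invoke Lemma~\ref{lem:edgeDisCycles} directly, whereas the paper pairs it with the $4$-cycle $(x,y,a,b)$ on all of $V(K)$ and routes the contradiction through Lemma~\ref{lem:cycleIsDisjoint}; both choices work equally well. (Your closing commentary that ``triangles suffice in parts~\ref{lem:case1} and~\ref{lem:case3}'' is slightly misstated --- in part~\ref{lem:case3} the partner you actually use is the $4$-cycle $(a,b,c,d)$ --- but this does not affect the argument.)
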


\begin{proof} Let $H$ be a planar graph that does not contain any removable cycle of length at most $10$ and let $K$ be some $4$-clique in $H$.
\begin{enumerate}
	\item Let $C$ be some simple cycle in $H$ of length $5$, and assume towards a contradiction that $C$ and $K$ are not edge-disjoint, that is, assume that there exist an edge $\{x,y\} \in E(C) \cap E(K)$. Let $\{a,b\} = V(K) \setminus \{x,y\}$. Since $K$ is a $4$-clique in $H$, it must be that $C' = (x,y,a,b)$ is a simple cycle in $H$ which has at least one common edge with $C$ (the edge $\{x,y\}$), but by Lemma \ref{lem:cycleIsDisjoint} the cycles $C'$ and $C$ must be edge-disjoint, a contradiction.
	\item Let $K'$ be a $4$-clique in $H$ such that $K' \ne K$. Since $K$ and $K'$ are different cliques, we must have $V(K) \ne V(K')$ and so $|V(K) \cup V(K')| \ge 5$. Assume towards a contradiction that $K$ and $K'$ are not edge-disjoint. This means that there is an edge $\{x,y\} \in E(K) \cap E(K')$. Let $\{a,b\} = V(K) \setminus \{x,y\}$ and let $\{c,d\} = V(K') \setminus \{x,y\}$. The sequences $C_{1} = (a,b,x,y)$ and $C_{2} = (c,d,x,y)$ are two different, not edge-disjoint, simple cycles in $H$. Since $H$ does not contain any removable simple cycle of length at most $10$, it follows from Lemma \ref{lem:edgeDisCycles} that $H[V(C_{1}) \cup V(C_{2})]$ is isomorphic to a complete graph, and so $H$ contains a $5$-clique as a subgraph, in contradiction to the fact that $H$ is planar.
	\item Let $C$ be some simple cycle in $H$ of length $3$ such that $V(C) \not\subseteq V(K)$. Again, assume towards a contradiction that $C$ and $K$ are not edge-disjoint. Let $\{x,y\} \in E(C) \cap E(K)$ and $\{a,b\} = V(K) \setminus \{x,y\}$. Since $K$ is a $4$-clique in $H$, it must be that $C' = (x,y,a,b)$ is a simple cycle in $H$ which has at least one common edge with $C$. It follows from Lemma \ref{lem:edgeDisCycles} that $H[V(C') \cup V(C)]$ is isomorphic to a complete graph, and since $V(C) \not\subseteq V(K)$, we must have $|V(C') \cup V(C)|=|V(K) \cup V(C)| \ge 5$ and so again $H$ contains a $5$-clique as a subgraph which is a contradiction.\qedhere\end{enumerate}\end{proof}

\begin{lemma} It is possible to remove $3$ edges from the complete graph on $4$ vertices so that the resulting graph does not contain any simple cycle. \label{lem:removingEdges}\end{lemma}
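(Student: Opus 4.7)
The plan is simply to exhibit an explicit set of three edges whose removal from $K_4$ leaves an acyclic graph. Label the four vertices as $v_1, v_2, v_3, v_4$. The complete graph $K_4$ has $\binom{4}{2} = 6$ edges. Consider the three edges $\{v_1,v_3\}$, $\{v_2,v_4\}$, and $\{v_1,v_4\}$, and remove them. The remaining edge set is $\{\{v_1,v_2\}, \{v_2,v_3\}, \{v_3,v_4\}\}$, which is precisely the simple path $v_1 - v_2 - v_3 - v_4$ on four vertices, and a path contains no simple cycle.

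A conceptual way to see the same thing (and the main reason there is nothing to do here) is that $K_4$ is a connected graph on $4$ vertices, hence it admits a spanning tree, and any spanning tree on $4$ vertices has exactly $3$ edges. The complement of such a spanning tree inside $K_4$ therefore consists of the remaining $6 - 3 = 3$ edges, and deleting them yields a tree, which is acyclic. There is no real obstacle in the argument; the only point is that one needs \emph{exactly} $3$ edges remaining, and $3$ is precisely the number of edges of a spanning tree of $K_4$, so the claim is tight.
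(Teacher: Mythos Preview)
Your proof is correct and takes essentially the same approach as the paper: both arguments simply exhibit three edges whose removal leaves a path on four vertices. Your additional spanning-tree remark is a nice conceptual justification but does not change the method.
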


\begin{proof} Let $H = (V,E)$ be the complete graph on $4$ vertices. Assume that $V = \{x,y,z,w\}$, and remove from $E$ the edges $\{x,y\}$, $\{y,z\}$ and $\{z,w\}$. The remaining edges are $\{x,z\}$, $\{y,w\}$ and $\{x,w\}$ which means that the resulting graph is the path $(z,x,w,y)$ and so it does not contain any simple cycle. \end{proof}

The next lemma will be used in our analysis to show that after we remove from the graph all removable cycles of length at most 10, the resulting graph does not contain too many edges.

\begin{lemma} For any planar graph $H$ with maximum degree at most $6$, if $H$ does not contain any removable cycle of length at most $10$, then $|E(H)| \le 2.9|V(H)|$.\label{lem:numberOfEdgesWithoutRemovableCycles}\end{lemma}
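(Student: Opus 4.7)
The plan is to prove the bound via a face-counting argument in a planar embedding of $H$. First, I classify the possible face lengths: any face of length $k \le 10$ is bounded by a non-removable simple cycle $C$ (by hypothesis), so $H[V(C)]$ is either $K_k$ or an odd cycle of length $k$. Planarity of $H$ forbids $K_k$ for $k \ge 5$; for $k = 4$, a face bounded by a cycle whose vertex set induces $K_4$ would require the two diagonal edges of $K_4$ to be drawn in the simply connected complement of the face without crossing, which is impossible. Hence face lengths lie in $\{3, 5, 7, 9\} \cup \{k : k \ge 11\}$, with each length-$5,7,9$ face bounded by an induced odd cycle. For the main argument I assume $H$ is $2$-connected so that face boundaries are simple cycles; the general case reduces to this by treating biconnected blocks separately, which only creates slack in Euler's formula.

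Let $f_k$ denote the number of faces of length $k$ in the planar embedding. Combining the identity $2|E| = \sum_k k \cdot f_k$ with Euler's formula $|F| = |E| - |V| + 2$, and using that every face of length greater than $9$ has length at least $11$, I obtain $9|E| \le 11|V| - 22 + 8 f_3 + 6 f_5 + 4 f_7 + 2 f_9$. Next I eliminate $f_5, f_7, f_9$: from $2|E| = \sum_k k f_k$ we have $5 f_5 + 7 f_7 + 9 f_9 \le 2|E| - 3 f_3$, and since the largest of the ratios $6/5$, $4/7$, $2/9$ is $6/5$, this gives $6 f_5 + 4 f_7 + 2 f_9 \le \tfrac{6}{5}(2|E| - 3 f_3)$. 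Substituting and simplifying reduces the task to proving $3|E| \le 5|V| + 2 f_3$.

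The crucial step is to show $3 f_3 \le 5|V|$. The key structural observation is that no vertex $v$ of degree $6$ can have all six faces incident to it be triangles: if the neighbors $u_1, \ldots, u_6$ of $v$ in cyclic order all satisfied $\{u_i, u_{i+1}\} \in E(H)$ (indices mod $6$), then $(u_1, \ldots, u_6)$ would be a simple cycle of length $6$, and since $H$ is planar, $H[\{u_1, \ldots, u_6\}]$ is neither $K_6$ nor an odd cycle, so this $6$-cycle would be removable --- contradicting the hypothesis. Letting $n_6$ denote the number of degree-$6$ vertices, each such vertex is incident to at least one face of length $\ge 5$, so counting vertex-face incidences gives $n_6 \le \sum_{f : |f| \ge 5} |f| = 2|E| - 3 f_3$. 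The handshake lemma combined with the degree bound yields $2|E| \le 5|V| + n_6$, hence $n_6 \ge 2|E| - 5|V|$. Combining, $3 f_3 \le 5|V|$, and substituting into $3|E| \le 5|V| + 2 f_3$ gives $|E| \le \tfrac{25}{9}|V| < 2.9|V|$. The main subtlety I anticipate is the careful verification of the face-length classification (especially ruling out length-$4$ faces) and the clean block-wise reduction for non-$2$-connected $H$.
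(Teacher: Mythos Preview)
Your argument is correct and genuinely different from the paper's. The paper proceeds by an edge–deletion charging scheme: it iteratively removes one edge from each $5$-cycle, three edges from each $K_4$, and one edge from each triangle, using the structural Lemmas~5.1--5.3 (which control how short cycles can share edges when no removable cycle of length $\le 10$ exists) to bound every vertex's charge by $3/2$. This shows at most $1.5|V|$ edges are removed, and the resulting graph has girth $\ge 7$, so $|E| \le 1.4|V| + 1.5|V| = 2.9|V|$.

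Your route bypasses that machinery entirely. The face–length classification (only lengths $3,5,7,9$ or $\ge 11$) is correct; in particular your topological argument ruling out a $4$-face whose vertices induce $K_4$ is valid on the sphere, since the complement of the face is a disk in which the two diagonals must cross. The decisive new ingredient is the observation that a degree-$6$ vertex surrounded entirely by triangular faces forces a $6$-cycle on its neighbors, which is automatically removable (it can be neither $K_6$ nor an odd cycle in a planar graph). This single observation replaces all of Lemmas~5.1--5.3 and yields the sharper bound $|E|\le \tfrac{25}{9}|V|$.

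One point to tighten: the block reduction is not quite ``only slack in Euler's formula'' --- summing $|E(B_i)|\le \tfrac{25}{9}|V(B_i)|$ over blocks overcounts cut vertices. What makes it work is that your per–block bound actually carries an additive constant (from the $-22$ in Euler's formula you dropped): one gets $|E(B_i)|\le \tfrac{25}{9}|V(B_i)|-\tfrac{10}{3}$ for every block (including bridges), and since $\sum_i |V(B_i)| = |V(H)| + (b-1)$ for a connected graph with $b$ blocks, the $-\tfrac{10}{3}b$ absorbs the overcount. You should spell this out; otherwise the reduction is a genuine (if easily fillable) gap. You also use implicitly that blocks inherit the hypothesis: this holds because any edge of $H$ between two vertices of a block lies in that block, so $B[V(C)]=H[V(C)]$ for every cycle $C$ in $B$.
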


\begin{proof} Let $H$ be a planar graph that satisfies the conditions of the lemma (that is, the maximum degree of $H$ is at most $6$, and $H$ does not contain any removable cycle of length at most $10$). We start by showing that it is always possible to remove from $E(H)$ at most $1.5|V(H)|$ edges such that the resulting graph does not contain any simple cycle of length at most $5$. To this end, we are going to describe a procedure that will iteratively remove edges from $E(H)$ until no simple cycle of length at most $5$ remains. For the sake of analyzing the number of removed edges, this procedure is going to maintain for each vertex in $H$ some non-negative charge, and will guarantee, by updating these charges as edges are being removed, that the total charge of the vertices in the graph is always equal to the total number of removed edges. This way, we can get a bound on the total number of removed edges by bounding the total charge of the vertices.
%Let $H$ be a planar graph with maximum degree at most $6$ in which there is no removable cycle of length at most $10$.
%In our analysis, we are going to delete some edges from the graph $H$ resulting a graph $H'$
%such that the resulting graph $H'$ will have no cycles of length at most 6.
%When deleting each edge from $H$ we charge some of the vertices with these deletion such that the total charge to all %vertices is exactly the number of deleted edges.
%We will later see that the total charge to all vertices and therefore also the number of deleted edges is bounded by $1.5|V|$.
%As $H'$ does not contain ant simple cycle of length at most 6 we use the Claim \cite{...} to show that $H'$ does not contain too many edges.
%Together with claiming that we deleted at most $1.5|V|$ we conclude the lemma.
Consider applying on $H$ the following procedure:
\begin{enumerate}
\item Give each vertex in $V(H)$ an initial charge of $0$.
	\item \label{proc:step2} As long as $H$ contains a simple cycle $C$ of length $5$, choose one arbitrary edge from $E(C)$, remove it from $E(H)$ and add $1/5$ to the charge of each vertex in $V(C)$.
	
	\item \label{proc:step3} As long as $H$ contains a $4$-clique $K$ as a subgraph, choose three edges from $E(K)$ as in Lemma \ref{lem:removingEdges}, remove them from $E(H)$ and add $3/4$ to the charge of each vertex in $V(K)$.
	
	\item \label{proc:step4} As long as $H$ contains a simple cycle $C$ of length $3$, choose one arbitrary edge from $E(C)$, remove it from $E(H)$ and add $1/3$ to the charge of each vertex in $V(C)$.
\end{enumerate}

\noindent We prove the following bound on the charge of each vertex:

\begin{claim} By the end of the procedure, the charge of each vertex $v \in V(H)$ is at most $3/2$.\label{claim:chargeOfVertex}\end{claim}

\begin{proof} Let $v \in V(H)$. The charge of $v$ can increase only in times when some simple cycle or clique that contains $v$ was chosen by the procedure. Let $r \ge 0$ be the number of such cycles/cliques, and let $(S_1, ..., S_r)$ be a (possibly empty) list containing all these cycles/cliques ordered by the time they were chosen. Each time a simple cycle or a clique is chosen, at least one edge is removed from it. Therefore, all the simple cycles and cliques that where chosen by the procedure must be different from each other, and in particular the cycles/cliques in $(S_1, ..., S_r)$. We claim that all these cycles/cliques are edge-disjoint. Let $1 \le i < j \le r$. We distinguish between the following cases: \smallskip

Case 1: $S_i$ is a simple cycle of length $5$. If $S_j$ is a $4$-clique then by Lemma \ref{lem:cliqueIsDisjoint} (\ref{lem:case1}) it cannot share an edge with $S_i$. Otherwise, $S_j$ must be a simple cycle of length at most $5$ that is different from $S_i$ and so by Lemma \ref{lem:cycleIsDisjoint} it cannot share an edge with $S_i$.\smallskip

Case 2: Both $S_i$ and $S_j$ are $4$-cliques. We must have $S_i \ne S_j$ and by Lemma \ref{lem:cliqueIsDisjoint} (\ref{lem:case2}) we get that $S_i$ and $S_j$ are edge-disjoint.\smallskip

Case 3: $S_i$ is a $4$-clique and $S_j$ is a simple cycle of length $3$. When $S_i$ was chosen, enough edges were removed from $H$ so that $H[V(S_i)]$ would not contain any simple cycle. Therefore, we cannot have $V(S_j) \subseteq V(S_i)$ and so by Lemma \ref{lem:cliqueIsDisjoint} (\ref{lem:case3}) we get that $S_i$ and $S_j$ are edge-disjoint.\smallskip

Case 4: Both $S_i$ and $S_j$ are simple cycles of length $3$. Assume towards a contradiction that they are not edge-disjoint. Let $H_3$ be the graph resulting from $H$ at the end of step \ref{proc:step3} and before the start of step \ref{proc:step4}. $H_3$ must contain both $S_i$ and $S_j$, and since $H_3$ does not contain any $4$-clique (as in step \ref{proc:step3}, we kept removing edges until no $4$-clique remains), the graph $H_3[V(S_i) \cup V(S_j)]$ cannot be isomorphic to a complete graph. But, $S_i$ and $S_j$ also exist is $H$ (as $H_3$ is a subgraph of $H$), and so by Lemma \ref{lem:edgeDisCycles} we must have that $H[V(S_i) \cup V(S_j)]$ is isomorphic to a complete graph (as $S_i$ and $S_j$ are two different, not edge-disjoint simple cycles of length $\le 10$). It follows that exactly one edge must have been removed from $H[V(S_i) \cup V(S_j)]$ at some point during steps \ref{proc:step2} and \ref{proc:step3}. This is possible only if some simple cycle of length $5$ or $4$-clique different from $H[V(S_i) \cup V(S_j)]$ shares an edge with $H[V(S_i) \cup V(S_j)]$, a contradiction to Lemma \ref{lem:cliqueIsDisjoint} (\ref{lem:case1}, \ref{lem:case2}).\smallskip

Now, we show that the charge of $v$ can be at most $3/2$. We consider separately the case in which one of $\{S_1, ..., S_r\}$ is a $4$-clique, and the case in which none of them is a $4$-clique. In the first case, $v$ belongs to some $4$-clique $S \in \{S_1, ..., S_r\}$. Since the degree of $v$ is at most $6$ and all the cycles/cliques in $\{S_1, ..., S_r\}$ are edge disjoint, we get that $|\{S_1, ..., S_r\}| \le 2$ (since if $|\{S_1, ..., S_r\}| \ge 3$ then $v$ should have $3$ distinct neighbors for $S$ and at least $2$ distinct neighbors for each of the remaining cycles/cliques, that is, at least $4$ additional neighbors which is impossible as $d(v,H) \le 6$). It follows that $v$ gets a charge of $3/4$ for $S$ and possibly another charge of at most $3/4$, thus its charge at the end can be at most $3/4 + 3/4 = 3/2$. In the second case, there is no $4$-clique in $\{S_1, ..., S_r\}$. This means that every cycle/clique in $\{S_1, ..., S_r\}$ is a simple cycle of length $3$ or $5$. Since these cycles are edge-disjoint and the degree of $v$ is at most $6$, we get that $|\{S_1, ..., S_r\}| \le 3$ (as $v$ should have $2$ distinct neighbors for each such cycle). It follows that the charge of $v$ can be at most $1$ as each such cycle can contribute at most $1/3$ to the charge of $v$.\end{proof}

It is easy to see that each time a set of edges $E'$ is removed from the graph, a corresponding charge of $|E'|$ is added to the total charge of the vertices. Therefore, by the end of the procedure, the total number of removed edges is indeed equal to the total charge of the vertices which by claim \ref{claim:chargeOfVertex} can be at most $1.5|V(H)|$. We prove the following claim:

\begin{claim} The resulting graph $H'$ does not contain any simple cycle of length at most $5$.\label{claim:noShortCycles}\end{claim}

\begin{proof} Clearly, $H'$ cannot contain any simple cycle of length $3$ as in step \ref{proc:step4} we kept removing edges until no such cycle remained. Similarly, $H'$ cannot contain any simple cycle of length $5$ as in step \ref{proc:step2} we kept removing edges until no such cycle remained. It left to show that $H'$ cannot contain any simple cycle of length $4$. Assume towards a contradiction that $H'$ contains such a cycle $C$. We cannot have that $H'[V(C)]$ is isomorphic to a complete graph as in step \ref{proc:step3} we kept removing edges until no $4$-clique remained. Since $C$ exists in $H$, we must have that $H[V(C)]$ is isomorphic to a complete graph as $H$ does not contain any removable cycle of length $4$. This is possible only if some edges were removed from $H[V(C)]$ during steps \ref{proc:step2} and \ref{proc:step3}. The procedure could not choose $C$ at step \ref{proc:step3} as enough edges would have been removed from the graph so that $H'[V(C)]$ could not contain any simple cycle. It follows that there must be some other cycle/clique different from $C$ that share an edge with $C$ and was removed during step \ref{proc:step2} or \ref{proc:step3}, in contradiction to Lemma \ref{lem:cliqueIsDisjoint} (\ref{lem:case1}, \ref{lem:case2}).\end{proof}

Now, the resulting graph $H'$ cannot contain a simple cycle of length $6$ as each such cycle $C$ must also be in $H$, and since $H$ does not contain any removable cycle of length $6$, we get that $H[V(C)]$ must be isomorphic to a complete graph which is impossible as $H$ is planar. It follows that $H'$ is a planar graph with no simple cycle of length at most $6$, and so the number of edges in $H'$ is at most $\frac{7}{7-2}|V(H')| = 1.4|V(H)|$ (by claim \ref{claim:numberOfEdges}). Since $H'$ was obtained from $H$ by removing at most $1.5|V(H)|$ edges, we get that $H$ contains at most $1.4|V(H)| + 1.5|V(H)| = 2.9|V(H)|$ edges.\end{proof}

\subsection{The algorithm}

We assume that communication network $G = (V,E)$ is a planar graph on $n$ vertices where each vertex $v \in V$ has a unique identifier $\text{ID}(v)$ taken from $\{1,...,n^{c'}\}$ (for some constant integer $c' \ge 1$). At the beginning each vertex $v \in V$ knows only the value of $n$ and its own identifier. In the first part, each vertex $v \in V$ executes in parallel the following algorithm:\medskip \smallskip

\begin{algorithm}[H]\label{Alg6ColorPartition}\small
	\caption{partitions the vertices of the graph}
\SetInd{1em}{0em}
$\text{L}[v] \gets (\perp, \perp)$\\

	\For {\upshape $i \gets 1$ \textbf{to} $1 + 700\lceil \log_{2}n \rceil$} { \label{alg:outer6}
 		\tcp{Step 1: removing all removable cycles of length $\le 10$ and degree $\le 6$}
	            Collect the labeled neighborhood of $v$ up to distance $6$, and let $N(v)$ be the subgraph induced by all the vertices $u$ with $\text{L}[u] = (\perp, \perp)$ whose distance from $v$ is at most $6$. \\
             \If{\upshape $v$ belongs to some removable cycle $C$ in $N(v)$ of length $\le 10$ and degree $\le 6$}  {$\text{L}[v] \gets (i,1)$ \\
                    $\text{key}[v] \gets \{\text{ID}(u) \mid u \in V(C)\}$} \smallskip

            % Second iteration - removing all nodes of degree less than 4.
            \tcp{Step 2: removing all vertices of degree less than $6$}
            Collect the labeled neighborhood of $v$ up to distance $1$, and let $N(v)$ be the subgraph induced by all the vertices $u$ with $\text{L}[u] = (\perp, \perp)$ whose distance from $v$ is at most $1$. \\
            \If{\upshape $\text{L}[v] = (\perp, \perp)$ \textbf{and} $v$ is of degree less than 6 in $N(v)$}
                {\label{removeLowDeg6} $\text{L}[v] \gets (i,2)$ \\$\text{key}[v] \gets \{\text{ID}(v)\}$}
    }

\end{algorithm}\medskip \smallskip

The next lemma is the analogue of Lemma \ref{lem:numberOfVerticesDeactivates} from the previous section.

\begin{lemma} For each $i \in \{1,...,1+700\lceil\log_{2}n\rceil\}$, if $A$ and $B$ respectively denote the sets of all the vertices that were active at the start of the outer loop's $i$-th iteration and at its end, then $|B| \le 0.999|A|$.\label{lem:numberOfVerticesDeactivates6}\end{lemma}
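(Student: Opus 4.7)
The plan is to follow the template of Lemma~\ref{lem:numberOfVerticesDeactivates}, replacing the constant $4$ by $6$ throughout and invoking Lemma~\ref{lem:numberOfEdgesWithoutRemovableCycles} in place of the triangle-free planar edge bound. Write $L_A \subseteq A$ and $H_A \subseteq A$ for the sets of vertices of degree less than $6$ and greater than $6$ in $G[A]$. If $|L_A| \ge 0.001|A|$, then every vertex of $L_A$ that survives step~1 of iteration $i$ becomes inactive in step~2, so at least $0.001|A|$ vertices are deactivated and $|B| \le 0.999|A|$ follows immediately. So I may assume $|L_A| < 0.001|A|$.

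In that regime, planarity gives $|E(G[A])| \le 3|A|$, hence $\sum_{v \in A} d(v,G[A]) \le 6|A|$. The same algebraic manipulation as in the proof of Lemma~\ref{lem:numberOfVerticesDeactivates} yields $|H_A| \le 6|L_A|$ and $\sum_{v \in H_A} d(v,G[A]) \le 42|L_A| < 0.042|A|$. Let $U$ be the set of vertices still active after step~1 and partition $E(G[U])$ into $E_1$, the edges incident to $H_A$, and $E_2$, the rest. Then $|E_1| \le 42|L_A| < 0.042|A|$, and the graph $G'$ with edge set $E_2$ and vertex set all endpoints of $E_2$ satisfies $V(G') \subseteq U \setminus H_A$, so $G'$ is planar with maximum degree at most $6$.

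The core structural step, and the one I expect to be the main obstacle, is to show that $G'$ contains no removable cycle of length at most $10$. Suppose such a cycle $C$ existed. Its vertex set lies in $U$, and every vertex of $C$ has $G[A]$-degree at most $6$. Pick any $v \in V(C)$; since $|C| \le 10$, every other vertex of $C$ lies within distance $5$ of $v$, and all of their $G[A]$-neighbors lie within distance $6$ of $v$. Therefore, in step~1 of the $i$-th iteration, the subgraph $N(v)$ collected by $v$ contains $C$ entirely, and the degree of $C$ in $N(v)$ coincides with its degree in $G[A]$, which is at most $6$. Hence $v$ would detect $C$ as a removable cycle in $N(v)$ of length at most $10$ and degree at most $6$ and be deactivated, contradicting $v \in V(G') \subseteq U$. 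The careful alignment of the parameters $10$, $6$, and the distance bound $6$ used in step~1 is exactly what makes this argument go through.

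Once the structural claim is in place, Lemma~\ref{lem:numberOfEdgesWithoutRemovableCycles} gives $|E_2| = |E(G')| \le 2.9|V(G')| \le 2.9|A|$, so $|E(G[U])| \le |E_1| + |E_2| \le 2.942|A|$. Letting $L_U$ be the set of vertices of degree less than $6$ in $G[U]$ and double-counting $2|E(G[U])|$ against $\sum_{v \in U \setminus L_U} d(v,G[U]) \ge 6(|U| - |L_U|)$ yields $|A \setminus U| + |L_U| \ge 0.019|A|$. Since the $|A \setminus U|$ vertices are deactivated in step~1 (by definition of $U$) and the $|L_U|$ vertices in step~2, we obtain $|B| \le 0.981|A| \le 0.999|A|$, completing the argument.
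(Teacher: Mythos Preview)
Your proof is correct and follows essentially the same approach as the paper's: the same case split on $|L_A|$, the same partition of $E(G[U])$ into $E_1$ and $E_2$, the same application of Lemma~\ref{lem:numberOfEdgesWithoutRemovableCycles} to $G'$, and the same final double-counting. Your numerical constants are slightly tighter (the paper rounds $|E_1| < 0.05|A|$ and ends with $|B| \le 0.99|A|$), and you spell out in full the argument that $G'$ contains no removable cycle of length at most $10$, which the paper dispatches in a single sentence; otherwise the two proofs are the same.
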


\begin{proof}Let $L_A \subseteq A$ and $H_A \subseteq A$ be the sets of all vertices of degree less than $6$ and greater than $6$ in $G[A]$, respectively. Since each vertex who becomes inactive at some point of the algorithm's execution remains inactive until the end, it is enough to show that at least $0.001|A|$ of the vertices in $A$ become inactive by the end of the $i$-th iteration. This is clearly the case when $|L_A| \ge 0.001|A|$ as each vertex in $L_A$ who remained active after step 1, must become inactive by the end of step 2. So it remains to show that this is also the case when $|L_A| < 0.001|A|$.

Let $U$ be the set of all vertices that remained active at the end of the first step of the $i$-th iteration. We divide the edges of the graph $G[U]$ into two disjoint subsets $E_{1}$ and $E_{2}$ where $E_{1}$ contains all the edges in $G[U]$ that are incident to some vertex in $H_A$, and $E_{2}$ contains all the other edges. Since $G[A]$ is a planar graph, we have %by Claim \ref{claim:one} that
$|E(G[A])| \le 3|A|$ (by setting $g = 3$ in Claim \ref{claim:numberOfEdges}). It follows that $\Sigma_{v \in A}d(v, G[A]) = 2|E(G[A])| \le 6|A|$ and so $6|A| \ge \Sigma_{v \in A}d(v, G[A]) \ge \Sigma_{v \in A \setminus L_A} d(v, G[A]) = \Sigma_{v \in A \setminus L_A} (d(v, G[A]) - 6) + 6|A \setminus L_A|$. This implies that $6|L_A| = 6|A| - 6|A \setminus L_A| \ge \Sigma_{v \in A \setminus L_A} (d(v, G[A]) - 6) \ge |H_A|$, where the last inequality follows from the fact that each vertex in $H_A$ contributes at least $1$ to the sum $\Sigma_{v \in A \setminus L_A} (d(v, G[A]) - 6)$. Clearly, $\Sigma_{v \in H_A}(d(v, G[A]) - 6) = \Sigma_{v \in A \setminus L_A}(d(v, G[A]) - 6)$ and so $\Sigma_{v \in H_A}d(v, G[A]) = \Sigma_{v \in H_A}(d(v, G[A]) - 6) + 6|H_A| = \Sigma_{v \in  A \setminus L_A}(d(v, G[A]) - 6) + 6|H_A| \le 6|L_A| + 36|L_A| = 42|L_A| < 0.042|A| < 0.05|A|$. We conclude that $|E_{1}| \le \Sigma_{v \in H_A}d(v, G[A]) < 0.05|A|$. Now, let $G'$ be the subgraph induced by the edges in $E_{2}$, that is the graph whose set of vertices $v$ is all vertices in $V$ such that $v$ is incident to at least one edge in $E_2$, and whose set of edges is $E_2$. Since each vertex in $G'$ has degree at most $6$ in $G[A]$, and $G'$ is a subgraph of $G[A]$, we get that the maximum degree of $G'$ is at most $6$. Moreover, $G'$ does not contain any removable cycle of length at most $10$ as any such cycle would have been removed at step $1$. It follows that $G'$ is a planar graph of degree at most $6$ that does not contain any removable cycle of length at most $10$, and so from Lemma \ref{lem:numberOfEdgesWithoutRemovableCycles} we get that $G'$ contains at most $2.9|V(G')|$ edges. We conclude that $|E_2| = |E(G')| \le 2.9|V(G')| \le 2.9|A|$, and so $G[U]$ contains at most $|E_{1}| + |E_{2}| < 2.95|A|$ edges. Now, let $L_U$ be the set of all vertices of degree less than $6$ in $G[U]$. We have $5.9|A| \ge 2(|E_{1}| + |E_{2}|) = \Sigma_{v \in U}d(v, G[U]) \ge \Sigma_{v \in U \setminus L_U}d(v, G[U]) \ge 6(|U| - |L_U|) = 6(|A| - |A \setminus U| - |L_U|)$ and so $6(|A \setminus U| + |L_U|) \ge 0.1|A|$. It follows that $|A \setminus U| + |L_U| \ge 0.01|A|$ and so $|B| \le 0.99|A|$. % as each vertex in $(A \setminus U) \cup S$ becomes inactive at some iteration of the inner loop.
\end{proof}

Let $\beta = 1 + 700\lceil \log_{2}n \rceil$ and let $S = \{1,...,\beta\} \times \{1,2\}$. The above algorithm partitions the vertices of the graph into $|S|$ disjoint subsets $\{H_{i,j}\}_{(i,j) \in S}$ where each $H_{i,j}$ contains all the vertices $u$ with $\text{L}[u] = (i,j)$. Given a set of vertices $U$, we define $N(U)$ to be the set containing all the vertices $u$ whose distance in $G$ from some vertex with identifier in $U$ is at most $1$. For each $i \in \{1, ..., \beta\}$, we denote by ${\cal G}_{i} = ({\cal V}_{i},{\cal E}_{i})$ the graph whose nodes and edges are $\{\text{key}[u] \mid u \in H_{i,1}\}$ and $\{\{\text{key}[u], \text{key}[v]\}\mid \{u,v\} \subseteq H_{i,1} \text{\; and \;} \text{key}[u] \ne \text{key}[v] \text{\; and \;} N(\text{key}[u]) \cap \text{key}[v] \ne \emptyset  \}$, respectively.

%\shiri{throughout we should talk about nodes when we refer to the super graph and vertices when we refer to original vertices from $G$}
\begin{lemma}  For each $i \in \{1, ..., \beta\}$, the maximum degree of the graph ${\cal G}_{i}$ is less than $6^{11}$. \label{lem:degreeSuperGraph6}\end{lemma}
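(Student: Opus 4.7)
The plan is to mimic closely the proof of Lemma~\ref{lem:degreeSuperGraph} from the $4$-coloring section, adjusting the parameters to reflect that in the $6$-coloring setting the removable cycles stripped off in Step~1 have length at most $10$ and degree at most $6$ (rather than length $4$ and degree at most $4$).

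Fix $i \in \{1, \ldots, \beta\}$ and $v \in H_{i,1}$. As in the earlier lemma, the degree of $\text{key}[v]$ in ${\cal G}_i$ is at most the size of the set
\[
 A' = \{u \in H_{i,1} \mid N(\text{key}[u]) \cap \text{key}[v] \ne \emptyset\}.
\]
So it suffices to bound $|A'|$. For every $u \in A'$, let $C_u$ and $C_v$ denote the removable cycles represented by $\text{key}[u]$ and $\text{key}[v]$, respectively. By definition of $N(\cdot)$, either $V(C_u)\cap V(C_v)\neq\emptyset$, or there is an edge $\{x,y\}\in E$ with $x\in V(C_u)$ and $y\in V(C_v)$.

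The first key step is to bound the distance between $u$ and $v$ in $G[H_{i,1}]$. Both $C_u$ and $C_v$ have length at most $10$, so each of them has diameter at most $5$. Hence, if they share a vertex the distance from $u$ to $v$ inside $V(C_u)\cup V(C_v)$ is at most $5+5 = 10$, while if they are joined by a single edge the distance is at most $5+1+5 = 11$. Since all vertices of $V(C_u)\cup V(C_v)$ were active in the $i$th iteration of Algorithm~\ref{Alg6ColorPartition} and were labeled as part of a removable cycle in Step~1, they all lie in $H_{i,1}$; thus the path witnessing the distance bound lies entirely in $G[H_{i,1}]$, and $\text{dist}_{G[H_{i,1}]}(u,v)\le 11$.

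The second key step is standard: by construction every vertex in $H_{i,1}$ has degree at most $6$ in $G[H_{i,1}]$ (Step~1 only removed vertices lying on a removable cycle of degree $\le 6$). A trivial BFS bound then gives that there are at most $6^{11}$ vertices within distance $11$ from $v$ in $G[H_{i,1}]$ (no attempt is made to optimize the constant). Therefore $|A'| < 6^{11}$ and the maximum degree of ${\cal G}_i$ is less than $6^{11}$, as required. The only nontrivial ingredient is verifying the distance bound, and even that is a direct combinatorial observation about two cycles of length at most $10$ that either intersect or are joined by an edge; no deeper planarity or removability input is needed at this stage.
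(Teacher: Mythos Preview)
Your proof is correct and follows essentially the same argument as the paper: bound the degree of $\text{key}[v]$ by $|A'|$, observe that any $u\in A'$ lies within distance $11$ of $v$ in $G[H_{i,1}]$ because the two cycles have length at most $10$ and either intersect or are joined by an edge, and then use the degree bound $6$ in $G[H_{i,1}]$ to cap the size of the radius-$11$ ball by $6^{11}$. Your explicit $5+5$ versus $5+1+5$ diameter computation is slightly more detailed than the paper's ``not hard to verify,'' but the approach is identical.
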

\begin{proof} Let $i \in \{1, ..., \beta\}$ and $v \in H_{i,1}$. The degree of $\text{key}[v]$ in ${\cal G}_{i}$ is equal to the size of the set $\{\text{key}[u] \mid u \in H_{i,1} \text{\; and \;} N(\text{key}[u]) \cap \text{key}[v] \ne \emptyset  \text{\; and \;} \text{key}[u] \ne \text{key}[v]\}$.
In other words, the neighbors of $\text{key}[v]$ are nodes that represent cycles $C'$ in ${\cal G}_{i}$ that either have a common vertex with the cycle $C$ that $\text{key}[v]$ represents or there is an edge such that one endpoint of this edge is in $C$ and the other is in $C'$.
It is not hard to verify that the size of this set is bounded by the size of $A' = \{u \in H_{i,1} \mid N(\text{key}[u]) \cap \text{key}[v] \ne \emptyset\}$.

Let $u \in A'$, let $C_u$ be the cycle that $\text{key}[u]$ represents and $C_v$ be the cycle that $\text{key}[v]$ represents.
By definition, we have $V(C_u) \cap V(C_v) \neq \emptyset$ or there is an edge $\{x,y\} \in E$ such that $x \in V(C_u)$ and $y \in V(C_v)$.
Since both cycles  $C_u$ and $C_v$ are of length $\le 10$ then it is not hard to verify that the distance between $u$ and $v$ in the induced graph of
$V(C_u) \cup V(C_v)$ is at most $11$.
Moreover, all vertices of $V(C_u) \cup V(C_v)$ belong to $H_{i,1}$.
We get that the distance between $u$ and $v$ in the induced graph $G[H_{i,1}]$ is at most $11$.
Note that the maximum degree in $G[H_{i,1}]$ is at most $6$.
Straight forward calculations show that for a vertex $v'$ in a graph of degree at most $6$ there could at most $6^{11}$ vertices at distance at most $11$ from it (we didn't try to optimize constants). Hence there could be at most $6^{11}$ such vertices $u$ in the set $A'$.\end{proof}

For each $i \in \{1, ..., \beta\}$, we want to compute a proper $7$-coloring $\varphi_{i}$ of the graph $G[H_{i,2}]$ so that each vertex $v$ with $\text{L}[v] = (i,2)$ knows the value of $\varphi[v] = \varphi_{i}(v)$. This can be done in $O(\log^{*}n)$ by executing in parallel a $\Delta+1$-coloring algorithm (e.g. \cite{Goldberg87}) on the graph $G[H_{i,2}]$. Next, we want to compute a proper coloring $\varphi_{i}:{\cal V}_{i} \to \{1,2,...,6^{11}\}$ of the graph ${\cal G}_{i}$ so that each vertex $v$ with $\text{L}[v] = (i,1)$ knows the value of $\varphi[v] = \varphi_{i}(\text{key}[v])$. This can be done in $O(\log^{*}n)$ rounds by simulating in parallel a $6^{11}$-coloring algorithm for each ${\cal G}_{i}$ on the graph $G$ (in a similar manner to the $4$-coloring algorithm in the previous section). \medskip \smallskip

\begin{algorithm}[H]\label{Alg6ColorSuperGraph}\small
	\caption{Give a color $\varphi[u]$ for every vertex $u$. The colors $\varphi[u]$ are supposed to synchronize between the different vertices in each $H_{i,j}$ for $i \in \{1, ..., \beta\}$ and $1 \leq j \leq 2$}
\SetInd{1em}{0em}
Let $i,j$ be the the indices such that $L(v) = (i,j)$.\\
If $j=2$ then execute the $\Delta+1$-coloring algorithm of \cite{Goldberg87} in the graph $G[H_{i,j}]$ and set $\varphi[v]$ to be the color assigned to $v$.\\
If $j=1$  then execute the $\Delta+1$-coloring algorithm of \cite{Goldberg87} in the super graph ${\cal G}_i$ (by simulating the graph ${\cal G}_i$ such that every cycle $\text{key}[u]$ is simulated by the vertex with highest ID in $\text{key}[u]$) and set $\varphi[v]$ to be the color assigned to $\text{key}[v]$.\label{alg2:col6}

\end{algorithm}

\medskip \smallskip

Now, in order to find a $6$-coloring of the whole graph each vertex $v \in V$ executes in parallel the following algorithm:
\medskip \smallskip

\begin{algorithm}[H]\label{Alg6ColorGraph}\small
	\caption{color the vertices of the graph}
\SetInd{1em}{0em}
$\text{C}[v] \gets \;\perp$\\
Collect the value of $(\text{key}[u], \varphi[u])$ from each vertex $u$ with $\text{L}[v] = \text{L}[u]$ whose distance from $v$ is at most $5$, and let $\text{P}[v]$ be the set containing all the pairs (among the collected ones) whose key contains the ID of $v$.\\ \label{line:collectCycle6}
Choose $(\text{k},\text{color}) \in \text{P}[v]$ with maximum color, and set $\text{key}_{\text{new}}[v] \gets \text{k}$ and $\varphi_{\text{new}}[v] \gets \text{color}$.\label{line:chooseCycle6}

	\For {\upshape $i \gets 1 + 700\lceil \log_{2}n \rceil$ \textbf{downto} $1$} { \label{line:outer6}

\tcp{Step 1: coloring all vertices $v \in V$ with $\text{L}[v] = (i,2)$}
		    \For {\upshape $k \gets 1$ \textbf{to} $7$} { \label{line:innerStep1_6}
		Collect the value of $\text{C}[u]$ from each neighbor $u$ of $v$, and let $\text{Colors}$ be the set containing all these values.

             \If{\upshape $\text{L}[v] = (i,2)$ \textbf{and} $\varphi_{\text{new}}[v] = k$}  { \label{line:innerCond1_6}
		Choose a color $x \in \{1,2,3,4,5,6\} \setminus \text{Colors}$, and set $\text{C}[v] \gets x$.\label{line:ColorVertex6}
      }}

                   \smallskip\tcp{Step 2: coloring all vertices $v \in V$ with $\text{L}[v] = (i,1)$}
		    \For {\upshape $k \gets 1$ \textbf{to} $6^{11}$} { \label{alg2:inner6}
		Collect the labeled neighborhood of $v$ up to distance $11$ (along with the value of $C$ of each vertex in that neighborhood), and let $N_{i,k}(v)$ be the subgraph induced by all the vertices whose ID belongs to $N(\text{key}_{\text{new}}[v])$.

             \If{\upshape $\text{L}[v] = (i,1)$ \textbf{and} $\varphi_{\text{new}}[v] = k$}  {\label{line:innerCond2_6}
Find a proper $6$-coloring of $N_{i,k}(v)$ which is consistent with the vertices that are already colored (by using a deterministic algorithm), and set $\text{C}[v]$ accordingly.  \label{line:ColorRemovable_6}

}}}

\end{algorithm}

\medskip \smallskip

For each $i \in \{0, ..., \beta\}$, we let $A_i$ be the set of all vertices $u \in V$ with $\text{L}[u] \in \{j \mid i < j \le \beta\} \times \{1,2\}$. Our aim is to show that $A_{0} = V$  is properly $6$-colored by the end of the algorithm. We do this by showing that for all $i \in \{1,...,\beta\}$ the following holds: if $A_{i}$ is properly colored at the start of the $i$-th iteration of the loop at line \ref{line:outer6}, then $A_{i-1}$ is properly colored by the end of this iteration. This claim easily implies what we want as $A_{\beta} = \emptyset$ and so we can apply this claim iteratively to get that $A_{0}$ is properly $6$-colored by the end. The proof of this claim is similar to the proof in the $4$-coloring algorithm for triangle-free planar graphs, so we only state the required lemmas and omit the proofs.

\begin{lemma}  For all $i \in \{1, ..., \beta\}$, if at the start of the $i$-th iteration of the loop at line \ref{line:outer6}, we have that $A_i$ is properly colored and $V \setminus A_i$ is not colored, then by the end of step $1$ of the $i$-th iteration of this same loop, we get that $A_i \cup H_{i,2}$ is properly colored and $V \setminus (A_i \cup H_{i,2})$ is not colored.\end{lemma}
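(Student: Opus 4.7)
The plan is to mirror the proof of the analogous $4$-coloring lemma in Section~\ref{sec:4-coloring}, substituting $6$ for $4$ and $7$ for $5$. Fix $i \in \{1, \ldots, \beta\}$ and assume the hypothesis: at the start of the $i$-th iteration of the loop at line~\ref{line:outer6}, $A_i$ is properly colored and $V \setminus A_i$ is uncolored. I first argue that every $u \in H_{i,2}$ does get colored during step~1. Since Algorithm~\ref{Alg6ColorSuperGraph} gives a proper $7$-coloring of $G[H_{i,2}]$, we have $\varphi_{\text{new}}[u] \in \{1, \ldots, 7\}$, so the condition at line~\ref{line:innerCond1_6} triggers exactly once for $u$, at inner iteration $k = \varphi_{\text{new}}[u]$ of the loop at line~\ref{line:innerStep1_6}.

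Next I would show that a color is always available at line~\ref{line:ColorVertex6}. The crucial observation is that $H_{i,2}$ was defined, in step~2 of iteration $i$ of Algorithm~\ref{Alg6ColorPartition}, as the set of still-active vertices whose degree in the still-active subgraph is less than $6$. A quick bookkeeping check shows that the still-active set at that decision point is exactly $A_i \cup H_{i,2}$: the inactive vertices are those labeled either in iterations $< i$ (which, together with the vertices labeled later, in iterations $> i$, form $V \setminus (A_i \cup H_{i,1} \cup H_{i,2})$ plus $A_i$) or in step~1 of iteration $i$ (which is $H_{i,1}$). Hence $d(u, G[A_i \cup H_{i,2}]) \le 5$ for every $u \in H_{i,2}$. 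All neighbors of $u$ outside $A_i \cup H_{i,2}$ are uncolored by the guard at line~\ref{line:innerCond1_6}, so the set \text{Colors} that $u$ collects has at most $5$ elements and some color in $\{1, \ldots, 6\}$ is free.

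To finish, I would verify that the extended coloring is proper on $A_i \cup H_{i,2}$. Conflicts between $H_{i,2}$ and $A_i$ are ruled out by the explicit choice at line~\ref{line:ColorVertex6}. For adjacent vertices $u, v \in H_{i,2}$, properness of $\varphi$ on $G[H_{i,2}]$ gives $\varphi_{\text{new}}[u] \ne \varphi_{\text{new}}[v]$; the vertex with the smaller value is colored in an earlier iteration of the inner loop, and when the later one picks its color it sees the former's color in \text{Colors} and avoids it. Vertices in $V \setminus (A_i \cup H_{i,2})$ remain uncolored, so no further conflict can arise, confirming that $V \setminus (A_i \cup H_{i,2})$ is uncolored and $A_i \cup H_{i,2}$ is properly colored at the end of step~1.

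The only somewhat delicate step, and the one I expect to be the main obstacle to write cleanly, is the identification of the still-active set at the step-2 decision point of Algorithm~\ref{Alg6ColorPartition} with $A_i \cup H_{i,2}$; this relies on the fact that iterations of the partition algorithm are indexed in the opposite direction to the coloring iterations of Algorithm~\ref{Alg6ColorGraph}, and on the fact that $H_{i,1}$ has already been labeled in step~1 before step~2 begins. Once this identification is made, the degree bound $d(u, G[A_i \cup H_{i,2}]) < 6$ falls out immediately and drives the rest of the argument, with no new ideas required beyond those already deployed in the $4$-coloring analogue.
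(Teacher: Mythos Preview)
Your proposal is correct and follows exactly the approach the paper intends: the paper explicitly omits this proof, stating that it is similar to the $4$-coloring analogue, and your argument is precisely that analogue with $6$ and $7$ substituted for $4$ and $5$. The only slightly awkward spot is the parenthetical bookkeeping identifying the still-active set with $A_i \cup H_{i,2}$, which could be stated more directly as $V \setminus \big(\bigcup_{j<i}(H_{j,1}\cup H_{j,2}) \cup H_{i,1}\big) = A_i \cup H_{i,2}$, but the content is right.
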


\begin{lemma} For any $(i,k) \in \{1, ..., \beta\} \times \{1, ..., 6^{11}\}$ and $v \in V$, if $\text{L}[v] = (i,1)$ and $\varphi_{\text{new}}[v] = k$ then at the start of the $k$-th iteration of the loop in line \ref{alg2:inner6} during the $i$-th iteration of the loop in line \ref{line:outer6} we must have $\text{C}[u] = \; \perp$ for every vertex whose ID belongs to $\text{key}_{\text{new}}[v]$.\end{lemma}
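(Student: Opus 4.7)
The plan is to mirror the proof of Lemma \ref{lem:uncoloredCycle} from the $4$-coloring case, with the parameters rescaled to the $6$-coloring setting: removable cycles now have length up to $10$ rather than $4$, so the relevant collection radius in line \ref{line:collectCycle6} is $5$ instead of $2$. Throughout, write $F = \{u \in V \mid \text{ID}(u) \in \text{key}_{\text{new}}[v]\}$ for the vertex set of the cycle that $v$ chose, and denote by $\text{C}_{i,k}[u]$ the value of $\text{C}[u]$ at the start of the $k$-th iteration of the inner loop in line \ref{alg2:inner6} during the $i$-th iteration of the outer loop in line \ref{line:outer6}. I argue by contradiction, assuming $\text{C}_{i,k}[u] \ne \perp$ for some $u \in F$.

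My first step is to show $F \subseteq H_{i,1}$. The key $\text{key}_{\text{new}}[v]$ was written down in Algorithm \ref{Alg6ColorPartition} by some vertex $y$ with $\text{key}[y] = \text{key}_{\text{new}}[v]$ and $\text{L}[y] = (i,1)$; by the ``if'' clause that assigns $\text{key}[y]$, $y$ itself lies on the chosen cycle, so $y \in F$. Because the set $N(y)$ used to detect the cycle in step $1$ of iteration $i$ contains only still-active vertices, every $u \in F$ was active at the start of iteration $i$. Since the cycle has length at most $10$, each such $u$ also sees all of $F$ within its own radius-$6$ view, so $u$ itself detects a removable cycle of length $\le 10$ and degree $\le 6$ during step $1$ of iteration $i$, and therefore receives label $(i,1)$.

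Next, I derive the contradiction. Because $u \in H_{i,1}$ and $\text{C}_{i,k}[u] \ne \perp$, $u$ must have been colored during some earlier inner iteration $k' < k$ of line \ref{alg2:inner6}, forcing $\varphi_{\text{new}}[u] = k' < k$. On the other hand, pick any $x \in F$ with $\text{key}[x] = \text{key}_{\text{new}}[v]$ (the vertex $y$ above works). Since $x$ and $u$ both lie on the same cycle of length at most $10$, the distance between them in $G$ is at most $5$; moreover $\text{L}[x] = \text{L}[u] = (i,1)$ and $\text{ID}(u) \in \text{key}[x]$. Hence the collection of line \ref{line:collectCycle6} inserts the pair $(\text{key}[x], \varphi[x])$ into $\text{P}[u]$, and because $\varphi[x] = \varphi_i(\text{key}[x]) = \varphi_i(\text{key}_{\text{new}}[v]) = \varphi_{\text{new}}[v] = k$, the maximum-color choice at line \ref{line:chooseCycle6} forces $\varphi_{\text{new}}[u] \ge k$, contradicting $\varphi_{\text{new}}[u] < k$.

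The main obstacle is the first step, showing $F \subseteq H_{i,1}$. This amounts to verifying both that the entire cycle is active at the start of iteration $i$ of Algorithm \ref{Alg6ColorPartition} (otherwise it could not have been detected there in the first place), and that each vertex in $F$ sees the whole cycle through its own radius-$6$ view so that the step-$1$ labeling is simultaneously triggered for it. Once $F \subseteq H_{i,1}$ is in hand, the contradiction is routine, because the collection radius $5$ in line \ref{line:collectCycle6} is large enough to let any two vertices of a cycle of length $\le 10$ reach each other, and the color attached to any cycle key via $\varphi_i$ is consistent across all vertices whose $\text{key}[\cdot]$ equals that key.
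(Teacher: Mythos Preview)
Your proof is correct and follows precisely the approach the paper intends: the paper explicitly omits the proof of this lemma, stating it is ``similar to the proof in the $4$-coloring algorithm,'' and your argument is a faithful adaptation of the proof of Lemma~\ref{lem:uncoloredCycle} with the parameters rescaled (cycle length $\le 10$ instead of $4$, collection radius $5$ instead of $2$, detection radius $6$ instead of $3$). The two substantive steps---showing $F \subseteq H_{i,1}$ and deriving the contradiction via the maximum-color rule at line~\ref{line:chooseCycle6}---match the original exactly.
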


\begin{lemma}  For all $i \in \{1, ..., \beta\}$, if at the start of the second step of the $i$-th iteration of the loop at line \ref{line:outer6}, we have that $A_i \cup H_{i,2}$ is properly colored and $V \setminus (A_i \cup H_{i,2})$ is not colored, then by the end of step $2$ of the $i$-th iteration of this same loop, we get that $A_i \cup H_{i,2} \cup H_{i,1}$ is properly colored and $V \setminus (A_i \cup H_{i,2} \cup H_{i,1})$ is not colored.\end{lemma}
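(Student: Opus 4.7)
The plan is to mimic the structure of the proof of the analogous lemma in Section~\ref{sec:4-coloring}, substituting the target color count $6$ for $4$ and working with removable cycles of length at most $10$ rather than length exactly $4$. The two things that need to be verified are: (i) at line~\ref{line:ColorRemovable_6} a valid $6$-coloring consistent with the already-colored portion of $N_{i,k}(v)$ always exists, and (ii) no two adjacent vertices in $H_{i,1}$ end up with the same value of $\text{C}[\cdot]$ by the end of Step~$2$.

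For (i), the key observation is that for each $u \in H_{i,1}$ the set of vertices whose IDs lie in $\text{key}_{\text{new}}[u]$ induces a removable cycle $C_u$ in $G[A_i \cup H_{i,2} \cup H_{i,1}]$ with $\Delta(C_u, G[A_i \cup H_{i,2} \cup H_{i,1}]) \le 6$, because $A_i \cup H_{i,2} \cup H_{i,1}$ coincides with the set of vertices that were still active at the moment the cycle was picked out by Step~$1$ of the $i$-th iteration of Algorithm~\ref{Alg6ColorPartition}. Together with the straightforward analog of Lemma~\ref{lem:uncoloredCycle}, which ensures that the vertices of $C_u$ are still uncolored at the start of iteration $k = \varphi_{\text{new}}[u]$ of the inner loop at line~\ref{alg2:inner6}, this puts us exactly in the hypothesis of Lemma~\ref{lem:coloringRemovableCycle} with $\Delta = 6$, so a valid completion always exists and the deterministic routine can find it.

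For (ii), I would distinguish two cases for a pair of adjacent vertices $u, v \in H_{i,1}$. If $\varphi_{\text{new}}[u] \ne \varphi_{\text{new}}[v]$, then the vertex with the smaller value is colored first, so the consistency constraint imposed on the later coloring at line~\ref{line:ColorRemovable_6} automatically forbids a conflict. If $\varphi_{\text{new}}[u] = \varphi_{\text{new}}[v]$, then since $u$ and $v$ are adjacent in $G$ we have $N(\text{key}_{\text{new}}[u]) \cap \text{key}_{\text{new}}[v] \ne \emptyset$; if the keys were distinct they would be adjacent in ${\cal G}_i$, contradicting the properness of $\varphi_i$. Hence $\text{key}_{\text{new}}[u] = \text{key}_{\text{new}}[v]$, both vertices collect identical inputs $N_{i,k}(u) = N_{i,k}(v)$ at line~\ref{alg2:inner6}, and the deterministic subroutine produces the same coloring for both, which assigns $u$ and $v$ distinct colors since they are endpoints of an edge inside that common neighborhood.

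The step I expect to require the most care is the equality $N_{i,k}(u) = N_{i,k}(v)$ in the subcase $\varphi_{\text{new}}[u] = \varphi_{\text{new}}[v]$: one must check that the radius-$11$ collection at line~\ref{alg2:inner6} is wide enough so that every vertex of a cycle of length at most $10$ can see the full set $N(\text{key}_{\text{new}})$ together with all already-assigned colors on it, which holds because the diameter of such a cycle is at most $5$, leaving ample margin for the one additional hop needed to reach the colored neighbors of the cycle. Once this is verified, the remainder of the argument is a faithful transcription of the $4$-coloring analog.
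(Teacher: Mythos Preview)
Your proposal is correct and follows essentially the same approach as the paper: the paper explicitly omits the proof of this lemma, stating that it is analogous to the corresponding lemma in Section~\ref{sec:4-coloring}, and your argument is precisely that adaptation---invoking the analog of Lemma~\ref{lem:uncoloredCycle} together with Lemma~\ref{lem:coloringRemovableCycle} for part~(i), and splitting on $\varphi_{\text{new}}[u] = \varphi_{\text{new}}[v]$ versus $\varphi_{\text{new}}[u] \ne \varphi_{\text{new}}[v]$ for part~(ii). Your additional remark about verifying that the radius-$11$ collection suffices to capture all of $N(\text{key}_{\text{new}})$ is a detail the paper leaves implicit but is indeed straightforward.
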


\begin{corollary}  For all $i \in \{1, ..., \beta\}$, if $A_{i}$ is properly colored at the start of the $i$-th iteration of the loop at line \ref{line:outer6}, then $A_{i-1}$ is properly colored by the end of the $i$-th iteration of this loop.\end{corollary}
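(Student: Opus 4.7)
The plan is to derive this corollary as an immediate consequence of the two preceding lemmas, chained together over the two steps of a single outer iteration. To do this cleanly I would first strengthen the hypothesis to include the companion invariant that $V \setminus A_i$ is uncolored at the start of the $i$-th iteration; this invariant was the working form used in the two preceding lemmas, and it propagates automatically because the loop body only writes $\text{C}[\cdot]$ values for vertices in $H_{i,2}$ (step 1) and $H_{i,1}$ (step 2), and the initial state of Algorithm~\ref{Alg6ColorGraph} sets $\text{C}[v] = \perp$ for every $v$. Hence an induction on the loop index $i$ running downward from $\beta$ to $1$, with base case $A_\beta = \emptyset$ and all of $V = V \setminus A_\beta$ uncolored, maintains both invariants together.

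Given the strengthened hypothesis at the start of iteration $i$, the proof then just feeds the first lemma into the second. Applying the first lemma yields that at the end of step 1 we have $A_i \cup H_{i,2}$ properly colored and $V \setminus (A_i \cup H_{i,2})$ still uncolored. These are exactly the input conditions of the second lemma for the start of step 2 of iteration $i$, so applying that lemma yields that $A_i \cup H_{i,2} \cup H_{i,1}$ is properly colored by the end of the iteration (and the rest is still uncolored). Finally, the observation that
\[
A_{i-1} \;=\; \{u \in V : \text{L}[u] \in \{j \mid i-1 < j \le \beta\} \times \{1,2\}\} \;=\; A_i \cup H_{i,1} \cup H_{i,2},
\]
which follows directly from the definition of $A_i$ and the disjoint partition $\{H_{i,j}\}_{(i,j) \in S}$ produced by Algorithm~\ref{Alg6ColorPartition}, converts the conclusion of the second lemma into the statement that $A_{i-1}$ is properly colored at the end of iteration $i$.

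There is no real obstacle here; the corollary is purely a bookkeeping step. The only subtlety worth flagging is that one must be careful to preserve the companion "uncolored" invariant across iterations so that the hypotheses of the two lemmas actually hold, and this is the reason to phrase the argument as an induction on $i$ rather than as a single isolated deduction. Once this is made explicit the proof is a two-line application of the two lemmas followed by the set identity above.
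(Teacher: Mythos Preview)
Your proposal is correct and mirrors the paper's approach exactly: chain the two preceding lemmas over steps~1 and~2 of iteration~$i$ and then invoke the set identity $A_{i-1} = A_i \cup H_{i,1} \cup H_{i,2}$. Your explicit remark about carrying the companion ``$V\setminus A_i$ is uncolored'' invariant via downward induction on $i$ is a welcome bit of extra care that the paper leaves implicit.
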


\begin{lemma}The total number of communication rounds is $O(\log{n})$, and it is possible to implement the algorithm in such a way that the size of any message sent by any vertex during the algorithm's execution is $O(\log{n})$. \end{lemma}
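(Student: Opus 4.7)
The plan is to bound the number of rounds used by each of Algorithms \ref{Alg6ColorPartition}, \ref{Alg6ColorSuperGraph}, and \ref{Alg6ColorGraph} separately, show that they sum to $O(\log n)$, and then check that every message can be made to carry only $O(\log n)$ bits. For Algorithm \ref{Alg6ColorPartition}, the outer loop runs $1+700\lceil\log_2 n\rceil = O(\log n)$ times, and each iteration only needs to gather labeled neighborhoods of constant radius ($6$ for Step 1 and $1$ for Step 2), which is $O(1)$ rounds. For Algorithm \ref{Alg6ColorSuperGraph}, the $(\Delta{+}1)$-coloring of \cite{Goldberg87} is invoked on $G[H_{i,2}]$ (maximum degree $<6$) and on each super graph ${\cal G}_i$ (maximum degree $<6^{11}$ by Lemma \ref{lem:degreeSuperGraph6}); on constant-degree graphs with polynomial identifiers this takes $O(\log^* n)$ rounds, and simulating ${\cal G}_i$ on $G$ has only constant overhead per simulated round since every super-node is a cycle of length $\le 10$ and every ${\cal G}_i$-edge corresponds to a distance-$O(1)$ path in $G$. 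Algorithm \ref{Alg6ColorGraph} begins with an $O(1)$-round distance-$5$ collection, then runs the outer loop $O(\log n)$ times; inside each iteration Step 1 has $7$ inner iterations and Step 2 has $6^{11}$ inner iterations, and each inner iteration collects labels and current colors within constant distance (at most $11$). Summing the three contributions yields $O(\log n)+O(\log^* n)+O(\log n) = O(\log n)$.

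For the message-size bound, the atomic pieces exchanged by the three algorithms are: identifiers in $\{1,\dots,n^{c'}\}$ ($O(\log n)$ bits each), labels $(i,j)$ with $i\le O(\log n)$ ($O(\log n)$ bits), keys consisting of at most $10$ identifiers ($O(\log n)$ bits), and colors drawn from palettes of size at most $6^{11}$ ($O(1)$ bits). Every logical message described in the algorithms is a constant-size bundle of such atoms, so it fits in $O(\log n)$ bits. The main obstacle I expect is the distance-$11$ collection in Step 2 of Algorithm \ref{Alg6ColorGraph}: a cycle vertex could have arbitrarily high degree in $G$, so $N(\text{key}_{\text{new}}[v])$ may in principle contain very many vertices. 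The way to deal with this is to observe that the information really needed to color the current cycle consistently with the already-colored vertices is bounded: the cycle has $\le 10$ vertices, and by the degree-$\le 6$ condition enforced at labeling time each cycle vertex has only $O(1)$ neighbors that will ever be colored (such neighbors must lie in $\bigcup_{j \ge i} H_{j,\cdot}$, in which the cycle has degree $\le 6$). Hence a pipelined flooding that forwards one tuple of the form (ID, label, key, current color) per round over $O(1)$ rounds suffices to deliver all relevant data to each cycle vertex. This preserves both the asymptotic round count and the $O(\log n)$ message-size bound, completing the proof.
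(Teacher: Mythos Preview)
Your round-complexity argument is correct and essentially what the paper does: the paper itself offers no proof for this lemma and, for the analogous triangle-free statement (Lemma~\ref{lem:complexity}), gives only the two-line observation that each outer-loop iteration costs $O(1)$ rounds and Algorithm~\ref{AlgColorSuperGraph} costs $O(\log^* n)$. Your version is just a more fleshed-out form of that sketch.

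For the message-size claim the paper provides no argument at all, so your discussion already goes further than the paper. There is, however, a gap in it. You correctly flag the distance-$11$ collection in Step~2 of Algorithm~\ref{Alg6ColorGraph} as problematic and resolve it by noting that each cycle vertex has at most $6$ neighbors in the relevant active set, so the needed information is of constant size and can be pipelined. But the \emph{same} obstacle appears earlier, in Step~1 of Algorithm~\ref{Alg6ColorPartition}: the instruction ``collect the labeled neighborhood of $v$ up to distance $6$'' is issued to every active vertex, and an active vertex may have arbitrarily large degree in $G$ (and even in the current active subgraph $G[A]$). Your blanket assertion that ``every logical message described in the algorithms is a constant-size bundle of such atoms'' is therefore not justified for that step. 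The fix is the same idea you used for Algorithm~\ref{Alg6ColorGraph}: in one round each active vertex learns its degree in $G[A]$; vertices of degree $>6$ cannot lie on a removable cycle of degree $\le 6$, so the search for such cycles can be confined to the subgraph of active vertices of degree $\le 6$, in which every radius-$6$ ball has size $O(6^6)$ and can be transmitted by pipelining over $O(1)$ rounds with $O(\log n)$-bit messages. With this addition (and the analogous remark for the distance-$5$ collection at line~\ref{line:collectCycle6}, which is harmless once restricted to $G[H_{i,1}]$), your argument is complete.
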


\section{Lower Bounds} \label{sec:lowerBounds}
In this section, we prove that any distributed algorithm (possibly randomized) that computes a proper $4$-coloring (resp. $3$-coloring) for every planar graph (resp. outerplanar graph) on $n$ vertices requires $\Omega(n)$ communication rounds.

\subsection{Lower bound for 4-coloring planar graphs}

	We define a sequence of planar graphs $\{G_k\}_{k \in \Z_{\ge}}$ inductively as follows: the base graph $G_0$ contains two different vertices and a single edge connecting them, that is, $V(G_0) = \{v_{(0,1)},v_{(0,2)}\}$ and $E(G_0) = \{\{v_{(0,1)},v_{(0,2)}\}\}$. For each $i \in \Z_{>}$, the graph $G_i$ is defined by using the graph $G_{i-1}$ as follows: $V(G_i) = V(G_{i-1}) \cupdot A_i$ and $E(G_{i}) = E(G_{i-1}) \cupdot B_i$ where $A_{i} = \bigcup_{j \in \{1,2\}}\left\{a_{(i,j)},b_{(i,j)},c_{(i,j)},v_{(i,j)}\right\}$ and  $B_{i} = \bigcup_{j \in \{1,2\}}\{\{a_{(i,j)},b_{(i,j)}\},$\\$\{b_{(i,j)},c_{(i,j)}\},\{c_{(i,j)},a_{(i,j)}\},\{a_{(i,j)},v_{(i,j)}\},\{b_{(i,j)},v_{(i,j)}\},\{c_{(i,j)},v_{(i,j)}\},\{a_{(i,j)},v_{(i-1,j)}\},\{b_{(i,j)},v_{(i-1,j)}\},$\\$\{c_{(i,j)},v_{(i-1,j)}\}\}$ (see Figure \ref{planarLowerBound} for illustration).
	
\begin{figure}[ht!]
	\centering
	\includegraphics{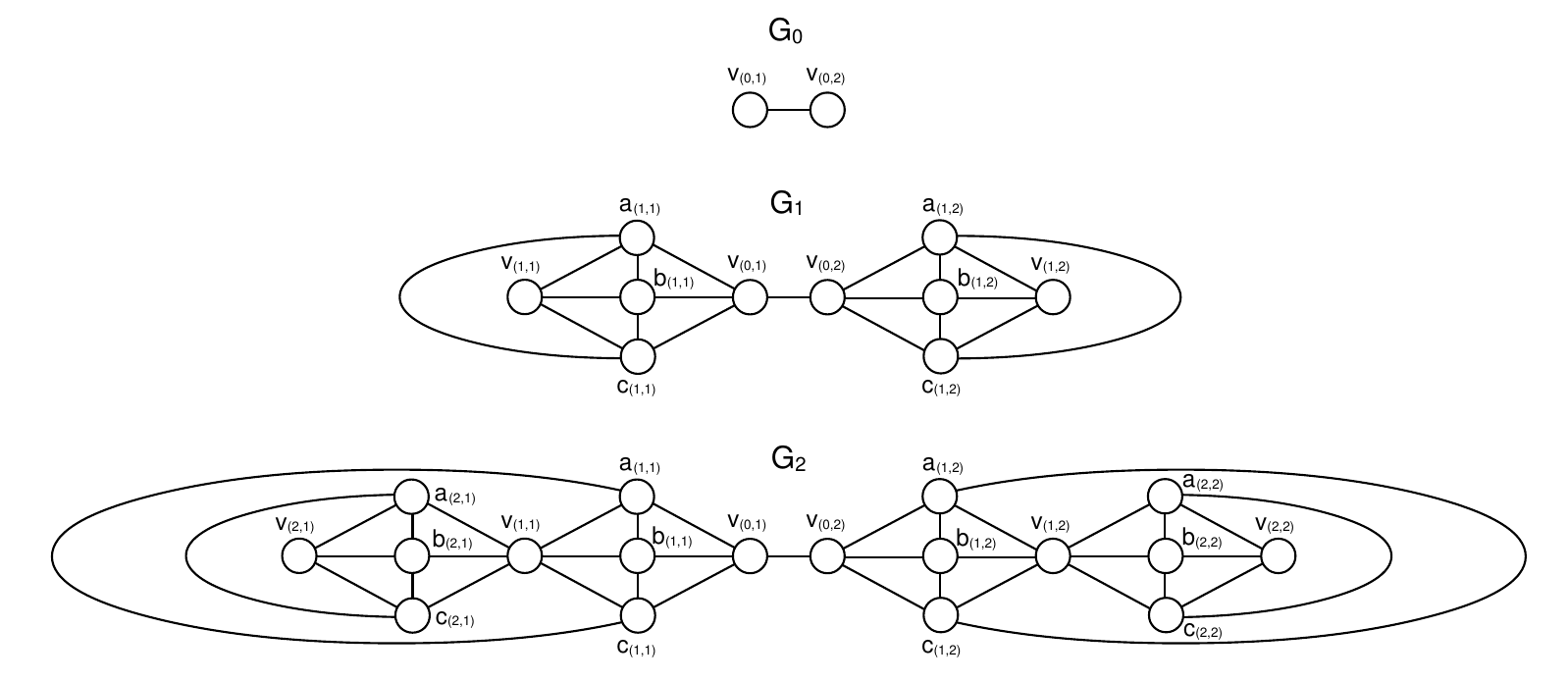}
	\caption{An illustration of the graphs $G_{0}$, $G_{1}$ and $G_{2}$. \label{planarLowerBound}}
\end{figure}

\medskip

The following key lemma shows that in any proper $4$-coloring of $G_k$, there are vertices at distance $\Omega(|V(G_k)|)$ that must have the same color.

\begin{lemma} For any non-negative integer $k$, if $\varphi$ is a proper $4$-coloring of $G_k$ then $\varphi(v_{(k,1)}) = \varphi(v_{(0,1)})$ and $\varphi(v_{(k,2)}) = \varphi(v_{(0,2)})$.\label{lem:distanceColor}\end{lemma}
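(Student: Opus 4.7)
The plan is to prove the statement by induction on $k$, exploiting the fact that the gadget added in each step forces $v_{(i,j)}$ and $v_{(i-1,j)}$ to receive the same color under any proper $4$-coloring.

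The base case $k=0$ is trivial, since $v_{(0,j)} = v_{(0,j)}$. For the inductive step, suppose $\varphi$ is a proper $4$-coloring of $G_k$. Its restriction to $V(G_{k-1})$ is a proper $4$-coloring of $G_{k-1}$, so by the induction hypothesis $\varphi(v_{(k-1,j)}) = \varphi(v_{(0,j)})$ for $j \in \{1,2\}$. It then suffices to show that $\varphi(v_{(k,j)}) = \varphi(v_{(k-1,j)})$ for each $j$.

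Fix $j \in \{1,2\}$ and examine the subgraph induced on $T_j = \{a_{(k,j)}, b_{(k,j)}, c_{(k,j)}\}$ together with $v_{(k,j)}$ and $v_{(k-1,j)}$. By the edges in $B_k$, the set $T_j$ is a triangle, and in addition every vertex of $T_j$ is adjacent both to $v_{(k,j)}$ and to $v_{(k-1,j)}$. Hence $T_j \cup \{v_{(k,j)}\}$ and $T_j \cup \{v_{(k-1,j)}\}$ each induce a copy of $K_4$. Since a proper $4$-coloring of $K_4$ must use all four colors, the three vertices of $T_j$ already receive three distinct colors, say $\{c_1,c_2,c_3\} \subseteq \{1,2,3,4\}$, and the unique remaining color $c_4$ is the only one available for a common neighbor of all of $T_j$. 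Applying this to both $K_4$'s forces $\varphi(v_{(k,j)}) = c_4 = \varphi(v_{(k-1,j)})$, which combined with the inductive hypothesis yields $\varphi(v_{(k,j)}) = \varphi(v_{(0,j)})$.

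The argument is essentially structural, with no obstacle beyond correctly identifying the two overlapping $K_4$'s inside the $i$-th gadget; once that is seen, the pigeonhole on the palette does the rest, and the induction goes through mechanically for both $j=1$ and $j=2$.
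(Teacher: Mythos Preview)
Your proof is correct and follows essentially the same approach as the paper: induction on $k$, with the inductive step established by observing that $\{a_{(k,j)},b_{(k,j)},c_{(k,j)},v_{(k,j)}\}$ and $\{a_{(k,j)},b_{(k,j)},c_{(k,j)},v_{(k-1,j)}\}$ each induce a $K_4$, forcing $\varphi(v_{(k,j)})=\varphi(v_{(k-1,j)})$.
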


\begin{proof} By induction on $k$. The base case ($k = 0$) is trivial. Assume that the claim holds for some $k \in \Z_{\ge}$ and prove it for $k+1$. Let $\varphi$ be some proper $4$-coloring of $G_{k+1}$. We claim that $\varphi(v_{(k+1,1)}) = \varphi(v_{(k,1)})$. Indeed, consider the set $S = \{v_{(k,1)},a_{(k+1,1)},b_{(k+1,1)},c_{(k+1,1)},v_{(k+1,1)}\}$. %contains five different vertices and so $\varphi$ must give the same color to at least two of them (as $\varphi$ is a $4$-coloring).
Note that, by construction, the graph $G_{k+1}[S\setminus\{v_{(k+1,1)}\}]$ is isomorphic to a complete graph, and so the vertices
$v_{(k,1)},a_{(k+1,1)},b_{(k+1,1)}$ and $c_{(k+1,1)}$ must get a different color each.
In addition, the graph $G_{k+1}[S\setminus\{v_{(k,1)}\}]$  is isomorphic to a complete graph, and therefore
the vertices
$v_{(k+1,1)},a_{(k+1,1)},b_{(k+1,1)}$ and $c_{(k+1,1)}$ must also get a different color each. Since $\varphi$ is a $4$-coloring of $G_{k+1}$, we must have $\varphi(v_{(k,1)}) = \varphi(v_{(k+1,1)})$.
%By construction, each of the graphs $G_{k+1}[S\setminus\{v_{(k+1,1)}\}]$ and $G_{k+1}[S\setminus\{v_{(k,1)}\}]$  is isomorphic to a complete graph, and so we must have $\varphi(v_{(k,1)}) = \varphi(v_{(k+1,1)})$.
A similar argument shows that $\varphi(v_{(k+1,2)}) = \varphi(v_{(k,2)})$. Now, $G_k$ is a subgraph of $G_{k+1}$ and so the restriction of $\varphi$ to the set $V(G_k)$ is a proper $4$-coloring of $G_k$. Thus, by the induction hypothesis, we get that $\varphi(v_{(k,1)}) = \varphi(v_{(0,1)})$ and $\varphi(v_{(k,2)}) = \varphi(v_{(0,2)})$. It follows that $\varphi(v_{(k+1,1)}) = \varphi(v_{(0,1)})$ and $\varphi(v_{(k+1,2)}) = \varphi(v_{(0,2)})$.\end{proof}

The following trivial observation claims that the shortest distance from $v_{(0,j)}$ to $v_{(k,j)}$ for every $j \in \{1,2\}$ is large and can easily be verified by construction.

\begin{observation} For any non-negative integer $k$ and $j \in \{1,2\}$, the shortest distance between $v_{(0,j)}$ and $v_{(k,j)}$ in the graph $G_{k}$ is equal to $2k$. \label{lem:longDis}\end{observation}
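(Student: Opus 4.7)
The plan is to prove the observation by induction on $k$, establishing matching upper and lower bounds. The upper bound $d_{G_k}(v_{(0,j)}, v_{(k,j)}) \le 2k$ follows by exhibiting an explicit path: $v_{(0,j)} - a_{(1,j)} - v_{(1,j)} - a_{(2,j)} - v_{(2,j)} - \cdots - a_{(k,j)} - v_{(k,j)}$, whose edges are all present in $G_k$ by construction, since for every $i$ the vertex $a_{(i,j)}$ is adjacent to both $v_{(i-1,j)}$ and $v_{(i,j)}$. This path has length $2k$.

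For the lower bound I would use induction on $k$. The base case $k=0$ is immediate. For the inductive step from $k$ to $k+1$, the key structural observation is that in $G_{k+1}$ the vertex $v_{(k+1,j)}$ has exactly three neighbors, namely $a_{(k+1,j)}, b_{(k+1,j)}, c_{(k+1,j)}$, and each of these three vertices has neighbors only inside the ``pocket'' $P = \{v_{(k+1,j)}, a_{(k+1,j)}, b_{(k+1,j)}, c_{(k+1,j)}\}$ together with the single ``exit'' vertex $v_{(k,j)}$. Consequently, every path from $v_{(k+1,j)}$ to any vertex outside $P$ must use at least two edges and must leave $P$ through $v_{(k,j)}$. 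Hence $d_{G_{k+1}}(v_{(0,j)}, v_{(k+1,j)}) \ge 2 + d_{G_{k+1}}(v_{(0,j)}, v_{(k,j)})$.

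It then remains to argue that $d_{G_{k+1}}(v_{(0,j)}, v_{(k,j)}) = d_{G_k}(v_{(0,j)}, v_{(k,j)})$, so that the inductive hypothesis yields $d_{G_{k+1}}(v_{(0,j)}, v_{(k,j)}) \ge 2k$ and therefore $d_{G_{k+1}}(v_{(0,j)}, v_{(k+1,j)}) \ge 2(k+1)$. This ``no shortcut'' step is the only point that requires care: the vertices added when passing from $G_k$ to $G_{k+1}$ are, for each $j' \in \{1,2\}$, a pocket whose attachment to $G_k$ consists solely of edges incident to $v_{(k,j')}$ (together with $v_{(k+1,j')}$, which is a dead end relative to $G_k$). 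So any subwalk of a path between two vertices of $G_k$ that enters such a pocket must re-enter $G_k$ at the same vertex $v_{(k,j')}$, and can be shortcut to a walk staying in $G_k$ without increasing its length.

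The main obstacle is making this ``no shortcut'' verification airtight, since one must confirm that even combining the two pockets (for $j'=1$ and $j'=2$) together with the base edge $\{v_{(0,1)},v_{(0,2)}\}$ does not yield a cross-chain shortcut; this is handled by observing that the two pockets are vertex-disjoint and each attaches to $G_k$ only at a single vertex of its own chain. Everything else reduces to routine bookkeeping over the inductive definition of the $G_i$.
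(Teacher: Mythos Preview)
Your argument is correct: the explicit path gives the upper bound, and the inductive lower bound via the cut-vertex structure of each newly added pocket is sound, including the ``no shortcut'' step (each level-$(k{+}1)$ pocket attaches to $G_k$ only at the single vertex $v_{(k,j')}$, so any excursion into it can be excised). The paper does not actually prove this observation at all---it simply states that it ``can easily be verified by construction''---so your write-up is more detailed than the original but follows exactly the natural verification the authors had in mind.
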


%\begin{proof} By induction on $k$. The base case ($k = 0$) is trivial. Assume now that the claim holds for some $k \in \Z_{\ge}$ and prove it for $k+1$. By the induction hypothesis, there exists a shortest path $p$ from $v_{(0,1)}$ to $v_{(k,1)}$ in $G_{k}$ of length $2k$. By the construction of $G_{k+1}$, it follows that $p$ is also a shortest path from $v_{(0,1)}$ to $v_{(k,1)}$ in $G_{k+1}$ and so we have $\delta(v_{(0,1)}, v_{(k,1)}) = 2k$ in $G_{k+1}$. Let $q$ be a shortest path from $v_{(0,1)}$ to $v_{(k+1,1)}$ in $G_{k+1}$ (there must exist such a path because they are connected in $G_{k+1}$). By construction, it must pass through $v_{(k,1)}$ and so $q$ is a concatenation of a shortest path from $v_{(0,1)}$ to $v_{(k,1)}$ (whose length is $2k$) and a shortest path from $v_{(k,1)}$ to $v_{(k+1,1)}$ (whose length is $2$) so it follows that $\delta(v_{(0,1)}, v_{(k+1,1)}) = 2(k+1)$ in $G_{k+1}$.  The proof for  $\delta(v_{(0,2)}, v_{(k,2)}) = 2k$ is symmetrical. \end{proof}
%
\begin{definition}
	Given a graph $G$, a non-negative integer $t$, and a vertex $v$ in $G$, we define $B_{t}(G,v)$ to be the set containing all the vertices in $G$ whose shortest distance from $v$ is at most $t$.
\end{definition}

\begin{lemma} For any $k,t \in \Z_{\ge}$, if $t < k$ then $B_{t}(G_{k},v_{(k,j_{1})}) \cap B_{t}(G_{k},v_{(0,j_{2})}) = \emptyset$ for every $j_{1},j_{2} \in \{1,2\}$.\label{lem:disBalls}\end{lemma}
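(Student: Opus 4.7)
The plan is to reduce the set-disjointness statement to a distance lower bound, namely that for all $j_1, j_2 \in \{1,2\}$ one has $d_{G_k}(v_{(k,j_1)}, v_{(0,j_2)}) \ge 2k$, and then conclude via a triangle-inequality argument. Indeed, if some vertex $u$ lay in both balls, then $d_{G_k}(v_{(k,j_1)}, v_{(0,j_2)}) \le d_{G_k}(v_{(k,j_1)}, u) + d_{G_k}(u, v_{(0,j_2)}) \le 2t$, which together with $t < k$ would contradict the distance lower bound.

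For the distance lower bound, I would split into two cases. The case $j_1 = j_2$ is handled directly by Observation~\ref{lem:longDis}, which states that the shortest distance between $v_{(0,j)}$ and $v_{(k,j)}$ in $G_k$ is exactly $2k$. The case $j_1 \ne j_2$ requires looking at the structure of the inductive construction: every edge added in the $i$-th step lies between vertices whose second coordinate is the same fixed $j$, so the only edge of $G_k$ connecting a vertex with second coordinate $1$ to a vertex with second coordinate $2$ is the base edge $\{v_{(0,1)}, v_{(0,2)}\}$. Consequently, every path in $G_k$ from $v_{(k,j_1)}$ to $v_{(0,j_2)}$ must use this base edge, and hence must pass through $v_{(0,j_1)}$. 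Such a path therefore has length at least $d_{G_k}(v_{(k,j_1)}, v_{(0,j_1)}) + 1$, which by Observation~\ref{lem:longDis} is at least $2k + 1 \ge 2k$.

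I would formalize the structural claim above by letting $V^{(1)}$ be the set of all vertices of $G_k$ whose second coordinate equals $1$ (i.e.\ vertices of the form $v_{(i,1)}, a_{(i,1)}, b_{(i,1)}, c_{(i,1)}$) and $V^{(2)}$ the analogous set for second coordinate $2$, and then by induction on $i$ noting that the only edge of $G_i$ between $V^{(1)}$ and $V^{(2)}$ is the base edge added at step $0$. This is essentially a direct inspection of the recursive definition of $A_i$ and $B_i$, so I would keep it brief rather than grind through it.

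The combined bound $d_{G_k}(v_{(k,j_1)}, v_{(0,j_2)}) \ge 2k$ then yields the lemma: assuming for contradiction that $u \in B_t(G_k, v_{(k,j_1)}) \cap B_t(G_k, v_{(0,j_2)})$, the triangle inequality gives $2k \le d_{G_k}(v_{(k,j_1)}, v_{(0,j_2)}) \le 2t$, contradicting $t < k$. I do not expect a real obstacle; the only place where one must be careful is the structural observation in the cross case $j_1 \ne j_2$, since without it one might be tempted to apply the symmetric-case bound where it does not directly apply.
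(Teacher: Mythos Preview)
Your proposal is correct and follows essentially the same approach as the paper: both reduce to a distance lower bound via the triangle inequality, invoke Observation~\ref{lem:longDis} for the same-side case, and handle the cross case $j_1\ne j_2$ by observing that the only edge between the two ``halves'' is the base edge $\{v_{(0,1)},v_{(0,2)}\}$, so any cross path must pass through $v_{(0,j_1)}$. Your treatment of the structural claim is slightly more explicit (introducing $V^{(1)},V^{(2)}$), but the argument is otherwise identical.
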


\begin{proof} We prove the case in which $j_{1} = 1$ (the proof of the other case is symmetric). Assume towards a contradiction that $B_{t}(G_{k},v_{(k,1)}) \cap B_{t}(G_{k},v_{(0,1)}) \ne \emptyset$. This implies that there exists a path in $G_{k}$ from $v_{(0,1)}$ to $v_{(k,1)}$ of length $\le 2t < 2k$ in contradiction to Observation \ref{lem:longDis}. Similarly, if we assume that $B_{t}(G_{k},v_{(k,1)}) \cap B_{t}(G_{k},v_{(0,2)}) \ne \emptyset$, we would get that there exists in $G_{k}$ a path from $v_{(0,2)}$ to $v_{(k,1)}$ of length $< 2k$ which would imply that there exists a path from $v_{(0,1)}$ to $v_{(k,1)}$ of length $< 2k$ (because from the construction of $G_{k}$ it follows that every path from $v_{(0,2)}$ to $v_{(k,1)}$ must pass through $v_{(0,1)}$) which is impossible.\end{proof}

\begin{lemma} For every positive integer $k$, any deterministic distributed algorithm that finds a proper $4$-coloring of $G_k$ requires at least $k = \frac{1}{8}(|V(G_k)|-2)$ rounds.\label{lem:lowerBoundDet}\end{lemma}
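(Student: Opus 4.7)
The plan is to derive the bound via a standard indistinguishability argument exploiting a natural graph automorphism of $G_k$. First, an easy induction on $k$ shows that $|V(G_k)| = 2 + 8k$, since each recursive step adjoins the eight vertices of $A_i$; hence $k = \tfrac{1}{8}(|V(G_k)|-2)$, and it suffices to show that any deterministic algorithm that always outputs a proper $4$-coloring of $G_k$ needs at least $k$ rounds. Let $\sigma$ be the map on $V(G_k)$ that swaps $x_{(i,1)} \leftrightarrow x_{(i,2)}$ for every $x \in \{a,b,c,v\}$ and every index $i$. A direct inspection of the edge set $E(G_k)$ (including the self-mapped base edge $\{v_{(0,1)}, v_{(0,2)}\}$) shows that $\sigma$ is an automorphism of $G_k$.

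Suppose for contradiction that $\mathcal{A}$ is a deterministic $t$-round algorithm that always outputs a proper $4$-coloring of $G_k$, with $t < k$. Fix any ID assignment $\pi_1$, and put $X = B_t(G_k, v_{(0,1)}) \cup B_t(G_k, v_{(0,2)})$. Because $\sigma$ is an automorphism that swaps $v_{(0,1)}$ and $v_{(0,2)}$, we have $\sigma(X)=X$. I would define a second ID assignment $\pi_2$ by $\pi_2(u) = \pi_1(\sigma(u))$ for $u \in X$ and $\pi_2(u) = \pi_1(u)$ otherwise; since this merely permutes the IDs $\pi_1(X)$ among the vertices of $X$ and leaves all other IDs unchanged, $\pi_2$ is globally injective. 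By Lemma \ref{lem:disBalls}, $X \cap B_t(G_k, v_{(k,j)}) = \emptyset$ for $j \in \{1,2\}$, so $\pi_1$ and $\pi_2$ agree on $B_t(G_k, v_{(k,j)})$ for both $j$, whence $c_j := \mathcal{A}(v_{(k,j)}, \pi_1) = \mathcal{A}(v_{(k,j)}, \pi_2)$.

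To finish, I would apply Lemma \ref{lem:distanceColor} to each of $\pi_1$ and $\pi_2$, obtaining $\mathcal{A}(v_{(0,j)}, \pi_\ell) = c_j$ for every $\ell, j \in \{1,2\}$. Separately, the restriction of $\sigma$ to $B_t(G_k, v_{(0,1)})$ is a rooted graph isomorphism onto $B_t(G_k, v_{(0,2)})$, and since $\pi_2(u) = \pi_1(\sigma(u))$ on this ball, that isomorphism also preserves IDs. Thus the $t$-view of $v_{(0,1)}$ under $\pi_2$ coincides with the $t$-view of $v_{(0,2)}$ under $\pi_1$, which yields $\mathcal{A}(v_{(0,1)}, \pi_2) = \mathcal{A}(v_{(0,2)}, \pi_1)$, i.e.\ $c_1 = c_2$. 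But $v_{(0,1)}$ and $v_{(0,2)}$ are adjacent in $G_k$, so a proper coloring forces $c_1 \neq c_2$, giving the desired contradiction. The main technical obstacle is this last step: verifying that $\pi_2$ is well-defined globally and that the twisted $t$-view at $v_{(0,1)}$ truly matches the untwisted $t$-view at $v_{(0,2)}$ as a rooted, ID-labeled graph; both facts rest on the $\sigma$-invariance of $X$ together with the disjointness supplied by Lemma \ref{lem:disBalls}, which is exactly where the hypothesis $t < k$ enters.
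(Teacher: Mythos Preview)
Your proof is correct and follows essentially the same indistinguishability argument as the paper: both exploit the automorphism $\sigma$ swapping the two branches, combine Lemma~\ref{lem:disBalls} with Lemma~\ref{lem:distanceColor}, and derive a monochromatic edge $\{v_{(0,1)},v_{(0,2)}\}$. The only cosmetic difference is where the relabeling is performed---the paper swaps IDs inside $B_t(v_{(k,1)})\cup B_t(v_{(k,2)})$ while you swap inside $B_t(v_{(0,1)})\cup B_t(v_{(0,2)})$---which is a symmetric choice yielding the same contradiction.
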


\begin{proof} Let $k \in \Z_{>}$ and assume towards a contradiction that there exists a deterministic distributed algorithm $A$ that finds a proper $4$-coloring of $G_k$ using at most $t < k$ rounds. Let $\psi$ be some legal labeling of $V(G_k)$, and denote by $\varphi:V(G_k) \to \{1,2,3,4\}$ the output of $A$ when it is invoked on the input $(G_k,\psi)$. From the assumption that $A$ is correct, it follows that $\varphi$ is a proper $4$-coloring of $G_k$, and thus from Lemma \ref{lem:distanceColor} we get that $\varphi(v_{(k,1)}) = \varphi(v_{(0,1)}) = x$ and $\varphi(v_{(k,2)}) = \varphi(v_{(0,2)}) = y$. Moreover, since $v_{(0,1)}$ and $v_{(0,2)}$ are connected by an edge in $G_{k}$, we must have $x \ne y$. Now, we define a new labeling $\psi'$ of $V(G_k)$ by $$\psi'(z_{(i,j)}) = \begin{cases}
		\psi(z_{(i,3-j)}) & \text{if } z_{(i,j)} \in B_{t}(G_k, v_{(k,1)}) \cup B_{t}(G_k, v_{(k,2)}) \\
		\psi(z_{(i,j)}) & \text{otherwise}
	\end{cases}  $$
	for every $z \in \{a,b,c,v\}$ and denote by $\varphi'$ the output of $A$ when invoked on the input $(G_{k},\psi')$.
By Lemma \ref{lem:disBalls}, $(B_{t}(G_{k},v_{(k,1)}) \cup B_{t}(G_{k},v_{(k,2)})) \cap B_{t}(G_{k},v_{(0,1)}) = \emptyset$ and so $\psi'(w) = \psi(w)$ for every $w \in B_{t}(G_{k},v_{(0,1)})$ which implies that $\varphi'(v_{(0,1)}) = \varphi(v_{(0,1)}) = x$. But, $\varphi'(v_{(k,1)}) = \varphi(v_{(k,2)}) = y \ne x = \varphi'(v_{(0,1)})$ and so, by Lemma \ref{lem:distanceColor}, $\varphi'$ is not a proper $4$-coloring of $G_k$. Contradiction to the assumption that $A$ is correct.\end{proof}

The next lemma shows that the same lower bound holds for randomized distributed algorithms as well.

\begin{lemma} For any positive integer $k$, any randomized distributed algorithm that finds a proper $4$-coloring of $G_k$ with success probability $>1/2$ requires at least $k$ rounds.\label{lem:lowerBoundRand}\end{lemma}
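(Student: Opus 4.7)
The plan is to lift the deterministic argument of Lemma~\ref{lem:lowerBoundDet} to the randomized setting via a coupling of random-coin vectors. The underlying reason the deterministic bound works is that $G_k$ carries a natural automorphism $\sigma$ that swaps its two ``arms'', sending $z_{(i,1)} \leftrightarrow z_{(i,2)}$ for every $z \in \{a,b,c,v\}$ and every $i \geq 0$. In particular $\sigma$ preserves the edge $\{v_{(0,1)}, v_{(0,2)}\}$ (swapping its endpoints) and, for $t<k$, maps $B_t(G_k, v_{(k,1)})$ isomorphically onto $B_t(G_k, v_{(k,2)})$, which by Lemma~\ref{lem:disBalls} is disjoint from $B_t(v_{(0,1)}) \cup B_t(v_{(0,2)})$.

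Fix any adversarial ID-labeling $\psi$ of $G_k$ and define the swapped labeling $\psi'$ exactly as in the proof of Lemma~\ref{lem:lowerBoundDet}: $\psi'(w) = \psi(\sigma(w))$ for $w \in B := B_t(G_k, v_{(k,1)}) \cup B_t(G_k, v_{(k,2)})$, and $\psi'(w) = \psi(w)$ otherwise. Given the per-vertex random string vector $R$ used by the algorithm $A$, introduce a coupled vector $R'$ by the same rule: $R'(w) = R(\sigma(w))$ on $B$, and $R'(w) = R(w)$ elsewhere. Because the random strings are independent and identically distributed across vertices, $R'$ has exactly the same marginal distribution as $R$.

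Since $A$ runs in $t$ rounds in the LOCAL model, the colour it assigns to any vertex $u$ is determined by the view $\mathrm{view}_t(u) = (B_t(u), \psi|_{B_t(u)}, R|_{B_t(u)})$. Using Lemma~\ref{lem:disBalls}, $B$ is disjoint from both $B_t(v_{(0,1)})$ and $B_t(v_{(0,2)})$, so the views (and hence the outputs) at $v_{(0,1)}$ and $v_{(0,2)}$ are literally identical under $(\psi, R)$ and $(\psi', R')$. At the same time $\sigma$ is a graph isomorphism, so the view of $v_{(k,1)}$ under $(\psi', R')$ is the $\sigma$-image of the view of $v_{(k,2)}$ under $(\psi, R)$; the algorithm must treat the two views identically, forcing $\varphi_{\psi', R'}(v_{(k,1)}) = \varphi_{\psi, R}(v_{(k,2)})$.

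Now suppose both runs $A(G_k, \psi, R)$ and $A(G_k, \psi', R')$ produce proper $4$-colourings. Applying Lemma~\ref{lem:distanceColor} to each run yields $\varphi_{\psi, R}(v_{(0,2)}) = \varphi_{\psi, R}(v_{(k,2)})$ and $\varphi_{\psi', R'}(v_{(0,1)}) = \varphi_{\psi', R'}(v_{(k,1)})$; chaining these with the two view identifications from the previous paragraph gives $\varphi_{\psi, R}(v_{(0,1)}) = \varphi_{\psi, R}(v_{(0,2)})$, contradicting the edge $\{v_{(0,1)}, v_{(0,2)}\}$. Hence the two success events are disjoint. But each has probability $> 1/2$ (for the second event we use that $R'$ has the same marginal as $R$, so $\Pr[A(\psi', R') \text{ succeeds}] = \Pr[A(\psi', R) \text{ succeeds}] > 1/2$), and inclusion--exclusion then forces their joint probability to be positive, a contradiction, giving $t \geq k$. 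The main subtlety is the coupling step: one must verify that although $R$ and $R'$ are correlated, each marginally matches the algorithm's native coin distribution, which is precisely what the i.i.d.\ structure of per-vertex random strings buys us.
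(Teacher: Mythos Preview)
Your argument is correct and complete. It differs from the paper's proof in the mechanism used to pass from the deterministic to the randomized setting. The paper first observes that any private-coin LOCAL algorithm can be simulated in a public-coin model (a single shared random string $\sigma$ announced to all vertices); once $\sigma$ is fixed the algorithm is deterministic, so the argument of Lemma~\ref{lem:lowerBoundDet} shows it fails on $\psi$ or on $\psi'$, and an averaging step over $\sigma$ gives a labeling on which the failure probability is at least $1/2$. Your route avoids the public-coin reduction entirely: you stay in the private-coin model and couple the two runs by permuting the per-vertex random strings along the graph automorphism $\sigma$, so that $R'$ has the same law as $R$ while the two success events become mutually exclusive. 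The paper's version is shorter and leans on a standard (but only sketched) simulation fact; your version is more self-contained and makes explicit exactly which structural features are being used---the automorphism swapping the two arms, the disjointness of $B_t(v_{(k,1)})$ and $B_t(v_{(k,2)})$ (needed so that the swap is a genuine permutation of $B$), and the i.i.d.\ nature of the per-vertex randomness. Either argument suffices.
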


\begin{proof}Any randomized algorithm that runs in the current model, can be simulated by an algorithm that runs in the following model: A random string $\sigma$ is first generated by some external random source and announced to all vertices in the graph. From that point on, the algorithm proceeds in the usual way. That is, the color of each vertex is just a function of $\sigma$ and its labeled $t$-neighborhood (where $t$ is the number of rounds). Now, it is easy to see that any randomized algorithm that runs in the latter model and finds a proper $4$-coloring of $G_k$ using at most $t < k$ rounds, must fail on at least one of the labeling $\psi$ or $\psi'$ (as were defined in Lemma \ref{lem:lowerBoundDet}) for any given $\sigma$. This implies that there exists a labeling for which the algorithm fails with probability at least $1/2$. %for at least $1/2$ of the possible values of $\sigma$.
\end{proof}

We conclude the following:

\begin{corollary} Any distributed algorithm that computes (possibly with high probability) a proper $4$-coloring for every planar graph on $n$ vertices requires $\Omega(n)$ rounds. \label{col:lowerBound4Col}\end{corollary}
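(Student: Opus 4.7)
The plan is to derive the corollary directly from Lemmas \ref{lem:lowerBoundDet} and \ref{lem:lowerBoundRand} together with an explicit count of $|V(G_k)|$. First I would show by an immediate induction on $k$ that $|V(G_k)| = 8k+2$: the base case is $|V(G_0)| = 2$, and the inductive step holds because the construction adds exactly $|A_i| = 8$ new vertices at each level (the four vertices $a_{(i,j)}, b_{(i,j)}, c_{(i,j)}, v_{(i,j)}$ for each $j \in \{1,2\}$). Consequently $k = (|V(G_k)| - 2)/8$, which is precisely the quantity appearing in Lemma \ref{lem:lowerBoundDet} and which is $\Omega(|V(G_k)|)$.

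Next, given any alleged distributed algorithm $A$ that correctly $4$-colors every $n$-vertex planar graph (either deterministically, or with success probability $>1/2$ in the randomized case), I would first observe that each $G_k$ is indeed planar: this follows inductively from the construction, since each new gadget attached at level $i$ (a triangle $\{a_{(i,j)}, b_{(i,j)}, c_{(i,j)}\}$ together with edges to $v_{(i-1,j)}$ and a new apex $v_{(i,j)}$) is a planar subgraph and can be drawn inside a face incident to $v_{(i-1,j)}$ without disturbing the rest of the embedding. Thus for every positive integer $n$ of the form $n = 8k+2$, the graph $G_k$ is an $n$-vertex planar graph on which, by Lemma \ref{lem:lowerBoundDet} (in the deterministic case) or Lemma \ref{lem:lowerBoundRand} (in the randomized case), $A$ must use at least $k = (n-2)/8$ communication rounds. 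Since $n$ may be taken arbitrarily large along this arithmetic progression, the worst-case running time of $A$ on $n$-vertex planar graphs is $\Omega(n)$, which is exactly the claim of the corollary.

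There is essentially no obstacle here: the statement is a direct repackaging of the two preceding lemmas, using only the fact that $|V(G_k)|$ is linear in $k$ and that the family $\{G_k\}_{k \in \Z_\ge}$ consists of planar graphs. The one point worth verifying carefully is planarity, but this follows transparently from the inductive attachment of constant-size planar gadgets to previously-constructed apex vertices $v_{(i-1,1)}$ and $v_{(i-1,2)}$, both of which lie on the outer face throughout the construction.
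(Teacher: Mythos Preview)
Your proposal is correct and follows the same approach as the paper, which simply states the corollary as an immediate consequence of Lemmas~\ref{lem:lowerBoundDet} and~\ref{lem:lowerBoundRand} without further argument. You are just making explicit the two ingredients the paper leaves implicit (the vertex count $|V(G_k)|=8k+2$, already recorded in the statement of Lemma~\ref{lem:lowerBoundDet}, and the planarity of each $G_k$, evident from Figure~\ref{planarLowerBound}).
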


\subsection{Lower bound for 3-coloring outerplanar graphs}
        We define a sequence of outerplanar graphs $\{G_k\}_{k \, \in\, \Z_{\ge}}$ inductively as follows: the base graph $G_0$ contains two different vertices and a single edge connecting them, that is, $V(G_0) = \{v_0,u_0\}$ and $E(G_0) = \{\{v_0,u_0\}\}$. For each $i \in \Z_{>}$, the graph $G_i$ is defined to be the graph obtained from $G_{i-1}$ by connecting a new and different triangle to each endpoint vertex of $G_{i-1}$ (that is, to each vertex of degree at most $2$ in $G_{i-1}$), where each such triangle is connected to its matching endpoint $v$ by connecting $v$ to two different vertices in that triangle, that is, $V(G_i) = V(G_{i-1}) \cupdot A_i$ and $E(G_i) = E(G_{i-1}) \cupdot B_i$ where $A_i = \{a_i,b_i,c_i,d_i,v_i,u_i\}$ and $B_i = \{\{a_i,b_i\},\{c_i,d_i\},\{a_i,v_i\},\{b_i,v_i\},\{c_i,u_i\},\{d_i,u_i\},\{a_i,v_{i-1}\},\{b_i,v_{i-1}\},\{c_i,u_{i-1}\},\{d_i,u_{i-1}\}\}$ (see Figure \ref{fig:outerplanar} for illustration).
\begin{figure}[ht!]
	\centering
	\includegraphics[trim={0.08cm 0.2cm 0.1cm 0.25cm},clip]{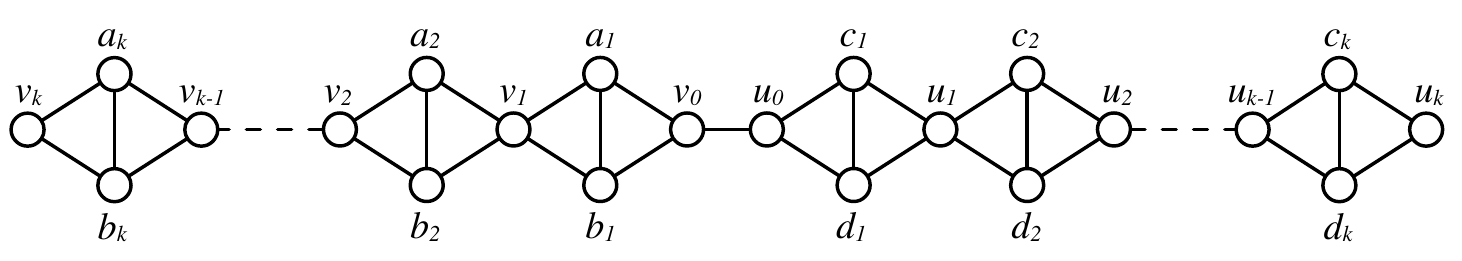}
	\caption{An illustration of the graph $G_k$ for some positive integer $k$.} \label{fig:outerplanar}
\end{figure}

%The following key lemma shows that in any proper $3$-coloring of $G_{k}$, there are vertices at distance $\Omega(|V(G_k)|)$ that must have the same color.

The proof proceeds in the same way as before, and so we only state the important lemmas:

\begin{lemma} For any $k \in \Z_{\ge}$ and any proper $3$-coloring $\varphi$ of $G_k$, we have $\varphi(v_k) = \varphi(v_0)$ and $\varphi(u_k) = \varphi(u_0)$.\label{lem:distanceColorOuterplanar}\end{lemma}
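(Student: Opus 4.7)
The plan is to mimic the inductive argument used in Lemma \ref{lem:distanceColor} for the planar case, but now exploiting the $K_4$-minus-an-edge structure rather than the $K_5$-minus-an-edge structure, which is exactly the right gadget for 3-coloring. Specifically, I would proceed by induction on $k$, with the base case $k=0$ being trivial (nothing to prove).

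For the inductive step, assume the statement holds for $k$ and let $\varphi$ be an arbitrary proper $3$-coloring of $G_{k+1}$. The key local observation is that the four vertices $\{v_k, a_{k+1}, b_{k+1}, v_{k+1}\}$ induce in $G_{k+1}$ a copy of $K_4$ minus the edge $\{v_k, v_{k+1}\}$: by construction $a_{k+1}$ and $b_{k+1}$ are adjacent to each other and to both $v_k$ and $v_{k+1}$. Hence $\{v_k, a_{k+1}, b_{k+1}\}$ is a triangle, forcing the three colors of $v_k, a_{k+1}, b_{k+1}$ to be distinct; likewise $\{v_{k+1}, a_{k+1}, b_{k+1}\}$ is a triangle, forcing $v_{k+1}$ to receive the unique color in $\{1,2,3\}\setminus\{\varphi(a_{k+1}),\varphi(b_{k+1})\}$. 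Since $v_k$ receives that same unique color, we get $\varphi(v_{k+1})=\varphi(v_k)$. An identical argument applied to the gadget $\{u_k, c_{k+1}, d_{k+1}, u_{k+1}\}$ yields $\varphi(u_{k+1})=\varphi(u_k)$.

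Finally, $G_k$ is an induced subgraph of $G_{k+1}$, so the restriction of $\varphi$ to $V(G_k)$ is a proper $3$-coloring of $G_k$. Invoking the inductive hypothesis gives $\varphi(v_k)=\varphi(v_0)$ and $\varphi(u_k)=\varphi(u_0)$, and combining with the identities established above yields $\varphi(v_{k+1})=\varphi(v_0)$ and $\varphi(u_{k+1})=\varphi(u_0)$, completing the induction.

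No part of the argument is really hard: the whole proof is an inductive propagation of a local color-forcing fact. The only thing to be careful about is verifying that the gadgets on the two ``ends'' truly are vertex-disjoint copies of $K_4$ minus an edge (which one reads off directly from the definition of $B_i$), and that when the induction is applied the graph $G_k$ sits inside $G_{k+1}$ as an induced subgraph, so that the restriction of a proper $3$-coloring is still proper. Both facts are immediate from the construction.
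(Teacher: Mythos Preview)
Your proof is correct and follows exactly the approach the paper intends: the paper does not spell out a proof of this lemma but simply states that ``the proof proceeds in the same way as before,'' referring to the inductive argument of Lemma~\ref{lem:distanceColor}. Your adaptation---replacing the $K_5$-minus-an-edge gadget on $\{v_{(k,1)},a_{(k+1,1)},b_{(k+1,1)},c_{(k+1,1)},v_{(k+1,1)}\}$ by the $K_4$-minus-an-edge gadget on $\{v_k,a_{k+1},b_{k+1},v_{k+1}\}$ and using the two triangles to force $\varphi(v_{k+1})=\varphi(v_k)$---is precisely that analogy.
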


%\begin{proof}By induction on $k$. The base case ($k = 0$) is trivial. Assume that the lemma holds for some $k \in \Z_{\ge}$ and prove it for $k+1$. Let $\varphi$ be some proper $3$-coloring of $G_{k+1}$. We claim that $\varphi(v_{k+1}) = \varphi(v_k)$. Indeed, the set $\{v_k,v_{k+1},a_{k+1},b_{k+1}\}$ contains $4$ different vertices, and so $\varphi$ must give the same color to at least two of them (as $\varphi$ is a $3$-coloring). By construction, each of the graphs $G_{k+1}[\{a_{k+1},b_{k+1},v_k\}]$ and $G_{k+1}[\{a_{k+1},b_{k+1},v_{k+1}\}]$ is isomorphic to a complete graph, and so we must have $\varphi(v_k) = \varphi(v_{k+1})$. A similar argument shows that $\varphi(u_k) = \varphi(u_{k+1})$. Now, $G_k$ is a subgraph of $G_{k+1}$ and so the restriction of $\varphi$ to the set $V(G_k)$ is a proper $3$-coloring of $G_k$. Thus, by the induction hypothesis, we get that $\varphi(v_k) = \varphi(v_0)$ and $\varphi(u_k) = \varphi(u_0)$. It follows that $\varphi(v_{k+1}) = \varphi(v_0)$ and $\varphi(u_{k+1}) = \varphi(u_0)$.\end{proof}

\begin{lemma} For every positive integer $k$, any deterministic distributed algorithm that finds a proper $3$-coloring of $G_k$ requires at least $k = \frac{1}{6}(|V(G_k)|-2)$ rounds.\label{lem:lowerBoundDet3}\end{lemma}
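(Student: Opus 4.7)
The plan is to adapt, essentially verbatim, the indistinguishability argument used for the planar $4$-coloring lower bound in Lemma~\ref{lem:lowerBoundDet} to the outerplanar $3$-coloring setting, replacing the two ``target'' vertices $v_{(k,1)}, v_{(k,2)}$ by $v_k, u_k$ and using Lemma~\ref{lem:distanceColorOuterplanar} in place of Lemma~\ref{lem:distanceColor}.

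First I would record the easy vertex count: by construction each step $i \ge 1$ adds exactly six new vertices $A_i = \{a_i,b_i,c_i,d_i,v_i,u_i\}$ to $V(G_{i-1})$, and $|V(G_0)|=2$, so $|V(G_k)| = 2 + 6k$ and hence $k = \frac{1}{6}(|V(G_k)|-2)$. Next I would establish the analogue of Observation~\ref{lem:longDis}: by inspection of the construction, every path in $G_k$ between the ``endpoint'' vertices $v_0$ and $v_k$ has length at least $2k$, and likewise for $u_0$ and $u_k$; moreover every path from $v_k$ (resp.\ $u_k$) to $u_0$ (resp.\ $v_0$) must pass through $v_0$ and $u_0$ (which are adjacent), so any such path also has length at least $2k$. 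This gives the analogue of Lemma~\ref{lem:disBalls}: for $t < k$, the balls $B_t(G_k,v_k)$ and $B_t(G_k,u_k)$ are disjoint from both $B_t(G_k,v_0)$ and $B_t(G_k,u_0)$.

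Now assume for contradiction that a deterministic algorithm $A$ finds a proper $3$-coloring of $G_k$ in $t < k$ rounds. Pick any legal labeling $\psi$ and let $\varphi$ be the output of $A$ on $(G_k,\psi)$. By Lemma~\ref{lem:distanceColorOuterplanar}, $\varphi(v_k)=\varphi(v_0)=x$ and $\varphi(u_k)=\varphi(u_0)=y$; since $\{v_0,u_0\}\in E(G_k)$, we have $x\ne y$. Now construct $\psi'$ by swapping the labels on $B_t(G_k,v_k)\cup B_t(G_k,u_k)$ via the natural involution that exchanges the ``$v$-side'' and the ``$u$-side'' local structure (the graph $G_k$ admits an automorphism swapping the two endpoints at every level, so this swap is well-defined and produces a valid labeling). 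By the disjointness of balls, the view of $A$ at $v_0$ and $u_0$ within radius $t$ is unchanged, so $\varphi'(v_0)=x$ and $\varphi'(u_0)=y$, whereas the view at $v_k$ is now the original view at $u_k$ (and vice versa), forcing $\varphi'(v_k)=y$ and $\varphi'(u_k)=x$. But Lemma~\ref{lem:distanceColorOuterplanar} requires $\varphi'(v_k)=\varphi'(v_0)=x$, a contradiction.

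The main thing to be careful about is that the ``label swap'' at the two far-away endpoints is a legitimate one, i.e.\ produces a valid (injective) labeling whose local views at $v_k,u_k$ are exchanged. Because the two sides of $G_k$ are symmetric by construction and because the two swapped balls are disjoint from each other (for $t<k$, since the diameter argument also separates $B_t(G_k,v_k)$ from $B_t(G_k,u_k)$), the swap is well-defined; everything else is a direct transcription of the planar proof. Finally, the extension to randomized algorithms with success probability $>1/2$ goes through exactly as in Lemma~\ref{lem:lowerBoundRand} by fixing the shared random string and observing that for every $\sigma$ the algorithm must fail on at least one of $\psi,\psi'$.
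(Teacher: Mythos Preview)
Your proposal is correct and follows essentially the same approach as the paper, which in fact omits the proof entirely and simply states that it ``proceeds in the same way as before.'' Your transcription of the indistinguishability argument from Lemma~\ref{lem:lowerBoundDet}, with the automorphism swapping the $v$-side and $u$-side playing the role of $z_{(i,j)}\mapsto z_{(i,3-j)}$, is exactly what the paper intends.
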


\begin{lemma} For every positive integer $k$, any randomized distributed algorithm that finds a proper $3$-coloring of $G_k$ with success probability $>1/2$ requires at least $k$ rounds.\label{lem:lowerBoundRand3}\end{lemma}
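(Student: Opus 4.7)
The plan is to mirror the proof of Lemma \ref{lem:lowerBoundRand} essentially verbatim, substituting the deterministic outerplanar $3$-coloring lower bound (Lemma \ref{lem:lowerBoundDet3}) for its planar $4$-coloring analogue. First I would invoke the standard shared-randomness reduction: any randomized distributed algorithm can be simulated in a model where a single random string $\sigma$ is produced by an external source and broadcast to every vertex before the first round; from that point on the algorithm is deterministic, and the color output by a vertex $v$ after $t$ rounds is a function only of $\sigma$ and of $v$'s labeled $t$-neighborhood.

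Next I would fix any algorithm $A$ claimed to $3$-color $G_k$ in $t < k$ rounds with success probability $> 1/2$. For an arbitrary legal labeling $\psi$, define $\psi'$ exactly as in the proof of Lemma \ref{lem:lowerBoundDet3}, namely by swapping the labels of the ``$v$-side'' and ``$u$-side'' vertices that lie inside $B_{t}(G_{k},v_{k}) \cup B_{t}(G_{k},u_{k})$ and leaving the rest of the labeling untouched. The key structural ingredient is the outerplanar analogue of Lemma \ref{lem:disBalls}: the balls $B_{t}(G_{k},v_{k})$ and $B_{t}(G_{k},u_{k})$ are disjoint from $B_{t}(G_{k},v_{0}) \cup B_{t}(G_{k},u_{0})$ whenever $t < k$. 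This follows immediately from the fact that, by construction, the distance in $G_k$ between $v_0$ and $v_k$ (and between $u_0$ and $u_k$) is exactly $2k$, and any $v_0$-to-$u_k$ path must go through $u_0$ by the chain-of-triangles structure, and vice versa.

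Now fix any value of $\sigma$ and suppose for contradiction that $A$ succeeds on $G_k$ with labeling $\psi$. Writing $\varphi$ and $\varphi'$ for the outputs of $A$ under $\psi$ and $\psi'$ respectively, the disjoint-balls property gives $\varphi'(v_0) = \varphi(v_0)$ and $\varphi'(u_0) = \varphi(u_0)$, while the swap on the far balls yields $\varphi'(v_k) = \varphi(u_k)$ and $\varphi'(u_k) = \varphi(v_k)$. If $\varphi'$ were also a proper $3$-coloring of $G_k$, then Lemma \ref{lem:distanceColorOuterplanar} applied to both colorings would force
\[
\varphi(v_0) = \varphi'(v_0) = \varphi'(v_k) = \varphi(u_k) = \varphi(u_0),
\]
contradicting the fact that $v_0$ and $u_0$ are adjacent in $G_0 \subseteq G_k$. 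Hence, for every $\sigma$, $A$ must fail on at least one of $\psi$ or $\psi'$.

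Averaging over $\sigma$ then yields the claim: there exists a labeling (either $\psi$ or $\psi'$) on which $\Pr_\sigma[A \text{ fails}] \ge 1/2$, contradicting the assumed success probability. I do not anticipate a genuine obstacle here, since the argument is a direct transcription of Lemma \ref{lem:lowerBoundRand}; the only place requiring care is the disjointness of the radius-$t$ balls in the outerplanar construction, which is handled exactly as in Lemma \ref{lem:disBalls} using the length-$2k$ ``spines'' on the $v$-side and $u$-side of $G_k$.
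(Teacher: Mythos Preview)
Your proposal is correct and follows exactly the approach the paper intends: the paper omits the proof of this lemma, stating that it ``proceeds in the same way as before,'' i.e., by the shared-randomness reduction and the $\psi$/$\psi'$ swap argument of Lemma~\ref{lem:lowerBoundRand}, which is precisely what you outline.
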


We conclude the following:

\begin{corollary} Any distributed algorithm that computes (possibly with high probability) a proper $3$-coloring for every outerplanar graph on $n$ vertices requires $\Omega(n)$ rounds. \label{col:lowerBound3Col}\end{corollary}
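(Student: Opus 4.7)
The plan is to derive the corollary as a direct consequence of Lemma \ref{lem:lowerBoundRand3}, so the work amounts to (i) verifying that the family $\{G_k\}$ stays inside the outerplanar class, and (ii) counting vertices to translate the ``at least $k$ rounds'' bound into a linear bound in $n$.

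First I would bound $|V(G_k)|$ linearly in $k$. The base graph satisfies $|V(G_0)| = 2$, and each inductive step adjoins exactly the six new vertices $A_i = \{a_i, b_i, c_i, d_i, v_i, u_i\}$. A one-line induction then gives $|V(G_k)| = 2 + 6k$, matching the constant already stated in Lemma \ref{lem:lowerBoundDet3}. In particular $k = \tfrac{1}{6}(|V(G_k)| - 2) = \Omega(|V(G_k)|)$.

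Next I would confirm that every $G_k$ is outerplanar. This is another easy induction: $G_0$ is a single edge, which is trivially outerplanar, and $G_i$ is obtained from $G_{i-1}$ by attaching a fresh triangle to each of the two ``endpoint'' vertices $v_{i-1}, u_{i-1}$ (which have degree at most $2$ in $G_{i-1}$ and therefore lie on the outer face of an outerplanar embedding). Attaching a triangle at a single outer-face vertex and then placing its two new outward edges in the outer region preserves the property that every vertex is on the outer face, so $G_i$ is outerplanar.

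Finally, suppose towards a contradiction that some (possibly randomized) distributed algorithm $A$ computes, with success probability strictly greater than $1/2$, a proper $3$-coloring of every outerplanar graph on $n$ vertices in $o(n)$ rounds. For each positive integer $k$, running $A$ on $G_k$ produces a proper $3$-coloring of $G_k$ with success probability $> 1/2$ in $o(|V(G_k)|) = o(k)$ rounds, contradicting Lemma \ref{lem:lowerBoundRand3} once $k$ is large enough. The main (and rather mild) obstacle is just the outerplanarity check in the previous paragraph; beyond that the corollary is essentially immediate from the machinery already in place.
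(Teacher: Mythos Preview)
Your proposal is correct and follows exactly the route the paper intends: the corollary is stated without proof as an immediate consequence of Lemma~\ref{lem:lowerBoundRand3} (together with the vertex count $|V(G_k)|=6k+2$ already recorded in Lemma~\ref{lem:lowerBoundDet3} and the evident outerplanarity of the family $\{G_k\}$), and you have simply written out those routine details. One small wording quibble: the gadget attached at $v_{i-1}$ is not literally ``a triangle at a single vertex'' but a fresh triangle $\{a_i,b_i,v_i\}$ with two of its vertices joined to $v_{i-1}$ (a $K_4$ minus an edge); your outerplanarity argument still goes through verbatim once phrased this way.
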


By using the same techniques (that we used in the previous coloring algorithms) and a similar analysis, it is possible to get a deterministic algorithm that finds a proper $4$-coloring for every outerplanar graph on $n$ vertices in time $O(\log{n})$.

\bibliographystyle{plain}
\bibliography{ref}

\end{document}